\newcommand{\nc}{\newcommand}
\nc{\rnc}{\renewcommand} \nc{\nev}{\newenvironment}
\nc{\W}[1]{\text{$\textup{W}[#1]$}}
\newlength{\probwidth}
\nc{\prob}[3][9]{
\begin{center}
  \normalfont\fbox{
   \begin{tabular}[t]{
     rp{#1cm}}\textit{Instance:}&#2. \\
     \textit{Problem:}&#3
   \end{tabular}}
\end{center}}
\nc{\pprob}[4][9]{
\begin{center}
   \normalfont\fbox{
    \begin{tabular}[t]{
     rp{#1cm}}\textit{Instance:}&#2. \\
     \textit{Parameter:}&#3. \\
     \textit{Problem:}&#4
   \end{tabular}}
\end{center}}
\nc{\nprob}[4][9]{
\begin{center}
  \normalfont\fbox{

\addtolength{\probwidth}{#1cm}\parbox{\probwidth}{\textsc{#2}\\\hspace*{1.5em}
     \begin{tabular}[t]{
      rp{#1cm}}\textit{Instance:}&#3. \\
      \textit{Problem:}&#4
     \end{tabular}}}
\end{center}}
\nc{\npprob}[5][9]{
\begin{center}
  \normalfont\fbox{

\addtolength{\probwidth}{#1cm}\parbox{\probwidth}{\textsc{#2}\\\hspace*{1.5em}
    \begin{tabular}[t]{
     rp{#1cm}}\textit{Instance:}&#3. \\
     \textit{Parameter:}&#4. \\
     \textit{Problem:}&#5
    \end{tabular}}}
\end{center}}
\nc{\nppxrob}[5][9]{ \normalfont\fbox{

\addtolength{\probwidth}{#1cm}\parbox{\probwidth}{\textsc{#2}\\\hspace*{1.5em}
   \begin{tabular}[t]{
    rp{#1cm}}\textit{Instance:}&#3. \\
    \textit{Parameter:}&#4. \\
    \textit{Problem:}&#5
   \end{tabular}}}}
\nc{\nppprob}[5][4]{
\begin{center}
  \normalfont\fbox{

\addtolength{\probwidth}{#1cm}\parbox{\probwidth}{\textsc{#2}\\\hspace*{1.5em}
    \begin{tabular}[t]{
     rp{#1cm}}\textit{Instance:}&#3. \\
     \textit{Parameter:}&#4. \\
     \textit{Problem:}&#5
    \end{tabular}}}
\end{center}}
\nc{\dotcup}{\;\dot\cup\;}
\title{Fixed-parameter Approximability of Boolean MinCSPs\footnote{This work was supported by the European Research Council (ERC) starting grant "PARAMTIGHT: Parameterized complexity and the search for tight complexity results" (reference 280152) and OTKA grant NK105645. The second author was supported by NSERC. The third author was  supported by the
JSPS KAKENHI Grant (JP16H07409) and the JST ERATO Grant (JPMJER1201)
of Japan.}}
\author[1]{\'Edouard Bonnet}
\author[2]{L\'{a}szl\'{o} Egri}
\author[3]{Bingkai Lin}
\author[4]{D\'{a}niel Marx}
\affil[1]{Université de Lyon (COMUE), CNRS, ENS de Lyon, Université Claude-Bernard Lyon 1, LIP, France\\
	\texttt{edouard.bonnet@ens-lyon.fr}}
\affil[2]{Department of Mathematics and Computer Science, Indiana State University, Terre Haute, United States\\
	\texttt{laszlo.egri@mail.mcgill.ca}}
\affil[3]{National Institute of Informatics \\
  \texttt{lin@nii.ac.jp}}
\affil[4]{Institute for Computer Science and Control, Hungarian Academy of Sciences (MTA SZTAKI), Budapest, Hungary\\
  \texttt{dmarx@cs.bme.hu}}
\authorrunning{\'E. Bonnet, L. Egri, B. Lin, and D. Marx}
\subjclass{F.2.2 Nonnumerical Algorithms and Problems}
\keywords{constraint satisfaction problems, approximability, fixed-parameter tractability}
\newtheorem{proposition}[theorem]{Proposition}
\newtheorem{claim}[theorem]{Claim}
\DeclareMathOperator{\operatorClassNP}{W[P]}
\newcommand{\WP}{\ensuremath{\operatorClassNP}}
\DeclareMathOperator{\operatorClassFPT}{FPT}
\newcommand{\fpt}{\ensuremath{\operatorClassFPT}}
\DeclareMathOperator{\operatorClassFPA}{FPA}
\newcommand{\fpa}{\ensuremath{\operatorClassFPA}}
\DeclareMathOperator{\operatorInvariant}{Inv}
\newcommand{\Inv}{\ensuremath{\operatorInvariant}}
\DeclareMathOperator{\problemCSP}{CSP}
\newcommand{\csp}{\ensuremath{\problemCSP}}
\DeclareMathOperator{\problemDCSP}{\textsc{MinCSP}}
\newcommand{\dcsp}{\ensuremath{\problemDCSP}}
\DeclareMathOperator{\problemDCSPx}{\textsc{MinCSP}^{*}}
\newcommand{\dcspx}{\ensuremath{\problemDCSPx}}
\DeclareMathOperator{\evenRelation}{EVEN}
\newcommand{\even}{\ensuremath{\evenRelation}}
\DeclareMathOperator{\oddRelation}{ODD}
\newcommand{\odd}{\ensuremath{\oddRelation}}
\DeclareMathOperator{\naeRelation}{NAE^3}
\newcommand{\nae}{\ensuremath{\naeRelation}}
\DeclareMathOperator{\dupRelation}{DUP^3}
\newcommand{\dup}{\ensuremath{\dupRelation}}
\DeclareMathOperator{\problemAlmost}{\textsc{Almost 2-SAT}}
\newcommand{\almost}{\ensuremath{\problemAlmost}}
\DeclareMathOperator{\problemTwoSat}{\textsc{2-SAT}}
\newcommand{\twosat}{\ensuremath{\problemTwoSat}}
\DeclareMathOperator{\problemNearestCodeword}{\textsc{Nearest Codeword}}
\newcommand{\nearc}{\ensuremath{\problemNearestCodeword}}
\DeclareMathOperator{\galoisField}{GF(2)}
\newcommand{\gf}{\ensuremath{\galoisField}}
\DeclareMathOperator{\ewNotation}{\exists\wedge}
\newcommand{\ea}{\ensuremath{\ewNotation}}
\def\wp{0.98}
\def\odds{\textsc{Odd Set}\xspace}
\def\eoset{\textsc{Even/Odd Set}\xspace}
\newcommand{\la}{\langle}
\newcommand{\ra}{\rangle}
\newcommand{\cC}{\mathcal{C}}
\newcommand{\cS}{\mathcal{S}}
\newcommand{\cCx}{{\cC^*}}
\begin{document}

\maketitle

\begin{abstract}
  The minimum unsatisfiability version of a constraint satisfaction problem (\dcsp) asks for an assignment where the number of unsatisfied constraints is minimum possible, or equivalently, asks for a minimum-size set of constraints whose deletion makes the instance satisfiable.
For a finite set $\Gamma$ of constraints, we denote by \dcsp($\Gamma$) the restriction of the problem where each constraint is from $\Gamma$.
The polynomial-time solvability and the polynomial-time approximability of \dcsp($\Gamma$) were fully characterized by Khanna et al.~\cite{DBLP:journals/siamcomp/KhannaSTW00}.
Here we study the fixed-parameter (FP-) approximability of the problem: given an instance and an integer $k$, one has to find a solution of size at most $g(k)$ in time $f(k)\cdot n^{O(1)}$ if a solution of size at most $k$ exists.
We especially focus on the case of constant-factor FP-approximability.
We show the following dichotomy: for each finite constraint language $\Gamma$,
\begin{itemize}
\item either we exhibit a constant-factor FP-approximation for \dcsp($\Gamma$);
\item or we prove that \dcsp($\Gamma$) has no constant-factor FP-approximation unless $\textup{FPT}=\textup{W[1]}$.
\end{itemize}

In particular, we show that approximating the so-called \textsc{Nearest Codeword} within some constant factor is $\textup{W[1]}$-hard.
Recently, Arnab et al. \cite{Arnab18,Arnab18b} showed that such a $\textup{W[1]}$-hardness of approximation implies that \textsc{Even Set} is $\textup{W[1]}$-hard under randomized reductions.
Combining our results, we therefore settle the parameterized complexity of \textsc{Even Set}, a famous open question in the field. 
\end{abstract}

\section{Introduction}

Satisfiability problems and, more generally, Boolean constraint
satisfaction problems (CSPs) are basic algorithmic problems arising in
various theoretical and applied contexts. An instance of a Boolean CSP
consists of a set of Boolean variables and a set of constraints; each
constraint restricts the allowed combination of values that can appear
on a certain subset of variables. In the decision version of the
problem, the goal is to find an assignment that simultaneously
satisfies every constraint. One can also define optimization versions
of CSPs: the goal can be to find an assignment that maximizes the
number of satisfied constraints, minimizes the number of unsatisfied
constraints, maximizes/minimizes the weight (number of 1s) of the
assignment, etc. \cite{BooleanBook}.

Since these problems are usually NP-hard in their full generality, a
well-established line of research is to investigate how the complexity
of the problem changes for restricted versions of the problem. A large
body of research deals with language-based restrictions: given any
finite set $\Gamma$ of Boolean constraints, one can consider the
special case where each constraint is restricted to be a member of
$\Gamma$. The ultimate research goal of this approach is to prove a
{\em dichotomy theorem}: a complete classification result that
specifies for each finite constraint set $\Gamma$ whether the
restriction to $\Gamma$ yields an easy or hard problem.\footnote{Note that several authors have recently announced a proof of the dichotomy conjecture \cite{dichotomy_bulatov, dichotomy_rafiey, dichotomy_zhuk}.} Numerous
classification theorems of this form have been proved for various
decision and optimization versions for Boolean and non-Boolean CSPs
\cite{MR80d:68058,DBLP:journals/siamcomp/BulatovM14,DBLP:journals/jacm/Bulatov13,DBLP:journals/jcss/BulatovDGJJR12,DBLP:journals/tocl/Bulatov11,DBLP:journals/corr/abs-1010-0201,DBLP:journals/jacm/Bulatov06,DBLP:journals/jacm/DeinekoJKK08,DBLP:journals/siamcomp/JonssonKK06,DBLP:journals/jacm/KolmogorovZ13,DBLP:conf/stoc/ThapperZ13,DBLP:journals/cc/Marx05}.
In particular, for $\dcsp(\Gamma)$, which is the
optimization problem asking for an assignment minimizing the number of
unsatisfied constraints, Creignou et al.~\cite{BooleanBook} obtained a
classification of the polynomial-time approximability for every finite Boolean
constraint language $\Gamma$. The goal of this paper is to
characterize the approximability of Boolean $\dcsp(\Gamma)$ with
respect to the more relaxed notion of fixed-parameter approximability.

Parameterized complexity
\cite{DBLP:series/txcs/DowneyF13,MR2238686,DBLP:books/sp/CyganFKLMPPS15}
analyzes the running time of a computational problem not as a
univariate function of the input size $n$, but as a function of both
the input size $n$ and a relevant parameter $k$ of the input. For
example, given a \dcsp\ instance of size $n$ where we are looking for a
solution satisfying all but $k$ of the constraints, it is natural to
analyze the running time of the problem as a function of both $n$ and
$k$. We say that a problem with parameter $k$ is {\em fixed-parameter
  tractable (FPT)} if it can be solved in time $f(k)\cdot n^{O(1)}$
for some computable function $f$ depending only on $k$. Intuitively,
even if $f$ is, say, an exponential function, this means that problem
instances with ``small'' $k$ can be solved efficiently, as the
combinatorial explosion can be confined to the parameter $k$. This can
be contrasted with algorithms with running time of the form $n^{O(k)}$
that are highly inefficient even for small values of $k$. There are
hundreds of parameterized problems where brute force search gives trivial
$n^{O(k)}$ algorithms, but the problem can be shown to be FPT using
nontrivial techniques; see the recent textbooks by Downey and Fellows
\cite{DBLP:series/txcs/DowneyF13} and by Cygan et
al.~\cite{DBLP:books/sp/CyganFKLMPPS15}. In particular, there are
fixed-parameter tractability results and characterization theorems for
various CSPs
\cite{DBLP:journals/cc/Marx05,DBLP:journals/siamcomp/BulatovM14,DBLP:conf/mfcs/KratschMW10,DBLP:conf/icalp/KratschW10}.

The notion of fixed-parameter tractability has been combined with the
notion of approximability
\cite{papprox,DBLP:journals/eccc/ChenGG07,MR2466795,DBLP:journals/algorithmica/CaiH10,DBLP:conf/iwpec/ChitnisHK13}.
Following \cite{papprox,Marx08}, we
say that a minimization problem is {\em fixed-parameter approximable
  (FPA)} if there is an algorithm that, given an instance and an integer
$k$, in time $f_1(k)\cdot n^{O(1)}$ either returns a solution of cost at
most $f_2(k)\cdot k$ (where the function $f_2(k)\cdot k$ is non-decreasing), or correctly states that there is no solution of
cost at most $k$. The two crucial differences compared to the usual
setup of polynomial-time approximation is that (1) the running time is
not polynomial, but can have an arbitrary factor $f(k)$ depending only
on $k$ and (2) the approximation ratio is defined not as a function of the input size $n$ but as a function of $k$. In this paper,
we mostly focus on the case of constant-factor FPA, that is, when
$f_2(k)=c$ for some constant $c$.

Schaefer's Dichotomy Theorem \cite{MR80d:68058} identified six classes
of finite Boolean constraint languages (0-valid, 1-valid, Horn,
dual-Horn, bijunctive, affine) for which the decision CSP is
polynomial-time solvable, and shows that every language $\Gamma$
outside these classes yields NP-hard problems. Therefore, one has to study \dcsp\ only within these six classes, as it
is otherwise already NP-hard to decide if the optimum is $0$ or not, making
approximation or fixed-parameter tractability irrelevant. Within these
classes, polynomial-time approximability and fixed-parameter
tractability seem to appear in orthogonal ways: the classes where we
have positive results for one approach is very different from the
classes where the other approach helps. For example, \textsc{2SAT Deletion} (also called \textsc{Almost 2SAT}) is fixed-parameter
tractable \cite{almost2satfpt,DBLP:journals/talg/LokshtanovNRRS14},
but has no polynomial-time approximation algorithm with constant
approximation ratio, assuming the Unique Games Conjecture
\cite{DBLP:journals/cc/ChawlaKKRS06}. On the other hand, if $\Gamma$
consists of the three constraints $(x)$, $(\bar x)$, and
$(a\to b)\wedge (c\to d)$, then the problem is W[1]-hard
\cite{DBLP:journals/ipl/MarxR09}, but belongs to the class
IHS-B and
hence admits a constant-factor approximation in polynomial time
\cite{DBLP:journals/siamcomp/KhannaSTW00}.\footnote{IHS-B stands for Implicative Hitting Set-Bounded, see
  definition in Section~\ref{sec:preliminaries}.}

By investigating constant-factor FP-approximation, we are identifying
a class of tractable constraints that unifies and generalizes the
polynomial-time constant-factor approximable and fixed-parameter
tractable cases. We observe that if each constraint in $\Gamma$ can be
expressed by a 2SAT formula (i.e., $\Gamma$ is bijunctive), then we
can treat the \dcsp\ instance as an instance of \textsc{2SAT
  Deletion}, at the cost of a constant-factor loss in the
approximation ratio. Thus the fixed-parameter tractability of
\textsc{2SAT Deletion} implies \dcsp\ has a constant-factor
FP-approximation if the finite set $\Gamma$ is bijunctive.  If
$\Gamma$ is in IHS-B, then \dcsp\ is known to
have a constant-factor approximation in polynomial time, which clearly gives another
class of constant-factor FP-approximable constraints.
Our main results show that these two classes cover all the easy cases with respect to FP-approximation (see Section~\ref{sec:preliminaries} for the definitions involving properties of constraints) unless $\textup{FPT}=\textup{W[1]}$.
\begin{theorem}\label{th:mainintro}
Let $\Gamma$ be a finite Boolean constraint language.
\begin{enumerate}
\item If $\Gamma$ is bijunctive or IHS-B, then $\dcsp(\Gamma)$ has a constant-factor FP-approximation.
\item Otherwise, $\dcsp(\Gamma)$ has no constant-factor FP-approximation, unless $\textup{FPT}=\textup{W[1]}$.
\end{enumerate}
\end{theorem}

Moreover, in the second case (when $\Gamma$ is neither bijunctive nor IHS-B), if $\Gamma$ is also \emph{not} affine, we can show a stronger inapproximability result; namely, that $\dcsp(\Gamma)$ has no FP-approximation for any function of the optimum value, unless $\textup{FPT}=\textup{W[P]}$.
Note that this result is stronger in two different ways: it rules out not only constant-factor but any ratio of approximation, and it relies on a weaker assumption.

Given a linear code over $\mathbb{F}_2$ and a vector, the \textsc{Nearest Codeword} (\textsc{NC}) problem asks for a codeword in the code that has minimum Hamming distance to the given vector.
There are various equivalent formulations of this  problem: \textsc{Odd Set} is a variant of \textsc{Hitting Set} where one has to select at most $k$ elements to hit each set an odd number of times, and it is also possible to express the problem as finding a solution to a system of linear equations over $\mathbb{F}_2$ that minimizes the number of unsatisfied equations.
Dinur et al. \cite{Dinur03} showed that approximating \textsc{Nearest Codeword} within ratio $n^{1/O(\log \log n)}$ is $\textup{NP}$-hard.
However, this does not give any evidence against constant-factor FP-approximation. Building on the work of Lin \cite{DBLP:conf/soda/Lin15} proving hardness for \textsc{Biclique} and related problems, we are able to show that even polylogarithmic FP-approximation is unlikely for \textsc{Odd Set}.

\begin{theorem}\label{th:main-odd-set}
\odds has no ratio $\log^{O(1)} k$ FP-approximation, unless $\fpt=\W 1$.
\end{theorem}

This theorem is the most technically involved part of the paper, as well as the most interesting contribution.
Furthermore, Arnab et al. \cite{Arnab18,Arnab18b} showed that if it is $\W 1$-hard to approximate \textsc{Nearest Codeword}/\textsc{Odd Set} within some constant factor, then that would give a randomized $\W 1$-hardness construction for \textsc{Even Set}.
By combining their result with Theorem~\ref{th:main-odd-set}, we obtain:
\begin{theorem}\label{th:even-set}
  \textsc{Even Set} is $\W 1$-hard under randomized reductions.
\end{theorem}
This settles a well-known open question in parameterized complexity.

Post's lattice is a very useful tool for classifying the complexity of
Boolean CSPs (see e.g., \cite{refining,weightedSAT,testingCSPs}). A
(possibly infinite) set $\Gamma$ of constraints is a co-clone if it is
closed under pp-definitions, that is, whenever a relation $R$ can be
expressed by relations in $\Gamma$ using only equality, conjunctions,
and projections, then relation $R$ is already in $\Gamma$. Post's
co-clone lattice characterizes every possible co-clone of Boolean
constraints. From the complexity-theoretic point of
view, Post's lattice becomes very relevant if the complexity of the
CSP problem under study does not change by adding new pp-definable
relations to the set $\Gamma$ of allowed relations. For example, this
is true for the decision version of Boolean CSP. In this case, it is
sufficient to determine the complexity for each co-clone in the
lattice, and a complete classification for every finite set $\Gamma$
of constraints follows. For \dcsp, neither the polynomial-time
solvability nor the fixed-parameter tractability of the problem is
closed under pp-definitions, hence Post's lattice cannot be used
directly to obtain a complexity classification. However, as observed by
Khanna et al.~\cite{DBLP:journals/siamcomp/KhannaSTW00} and
subsequently exploited by Dalmau et
al.~\cite{DBLP:journals/toct/DalmauK13,DBLP:conf/soda/DalmauKM15}, the
constant-factor approximability of \dcsp\ is closed under
pp-definitions (modulo a small technicality related to equality
constraints). We observe that the same holds for constant-factor
FP-approximability and hence Post's lattice can be used for our
purposes. Thus, the classification result amounts to identifying
the maximal easy and the minimal hard co-clones.

The paper is organized as follows.
Sections~\ref{sec:preliminaries} and \ref{sec:posts-lattice-co} contain preliminaries on CSPs, approximability, Post's lattice, and reductions.
A more technical restatement of Theorem~\ref{th:mainintro} in terms of co-clones is stated at the end of Section~\ref{sec:posts-lattice-co}.
Section \ref{sec:csps-with-fpa} gives FPA algorithms, Section \ref{affine_stuff} establishes the equivalence of some CSPs with \odds, Section~\ref{sec:odd-set-probably} shows the hardness result for \odds (Theorem~\ref{th:main-odd-set}), and Section~\ref{sec:horn-dual-horn} proves inapproximability results for the remaining boolean \textsc{MinCSP}s.

\section{Preliminaries}
\label{sec:preliminaries}
\textbf{Constraint Satisfaction Problems (CSPs).} A subset $R$ of $\{0,1\}^n$ is called an $n$-ary \emph{Boolean relation}. If $n = 2$, relation $R$ is \emph{binary}. In this paper, a \emph{constraint language} $\Gamma$ is a finite collection of finitary Boolean relations. When a constraint language $\Gamma$ contains only a single relation $R$,  i.e., $\Gamma = \{R\}$, we write $R$ instead of $\{R\}$. The decision version of CSP, restricted to finite constraint language $\Gamma$ is defined as:

\begin{center}
	\noindent\framebox{\begin{minipage}{\wp\textwidth}
			$\csp(\Gamma)$\\
			\emph{Input:} A pair $\langle V,\cC \rangle$, where \begin{itemize}
				\item $V$ is a set of variables,
				\item $\cC$ is a multiset of constraints $\{C_1,\dots,C_q\}$, i.e., $C_i = \langle s_i,R_i \rangle$, where $s_i$ is a tuple of variables of length $n_i$, and $R_i \in \Gamma$ is an $n_i$-ary relation.
			\end{itemize}
			\emph{Question:} Does there exist a solution, that is, a function $\varphi : V \rightarrow \{0,1\}$ such that for each constraint $\langle s,R \rangle \in \cC$, with $s = \langle v_1,\dots,v_n \rangle$, the tuple $\langle\varphi(v_1),\dots,\varphi(v_n)\rangle$ belongs to $R$?
		\end{minipage}}
	\end{center}

Note that we can alternatively look at a constraint as a Boolean function $f:\{0,1\}^n \rightarrow \{0,1\}$, where $n$ is a non-negative integer called the arity of $f$. We say that $f$ is satisfied by an assignment $s \in \{0,1\}^n$ if $f(s) = 1$. For example, if $f(x,y) = x + y \mod 2$, then the corresponding relation is $\{(0,1),(1,0)\}$; we also denote addition modulo $2$ with $x \oplus y$.

We recall the definition of a few well-known classes of constraint languages. A Boolean constraint language $\Gamma$ is:
\begin{itemize}
\item \textit{0-valid} (resp.~\textit{1-valid}), if each $R \in \Gamma$
  contains a tuple in which all entries are $0$ (resp.~$1$);
\item \textit{k-IHS-B+} (resp.~\textit{k-IHS-B--}), where
  $k \in \mathbb{Z}^+$, if each $R \in \Gamma$ can be expressed by a
  conjunction of clauses of the form $\bar x$, $\bar x \vee y$, or
  $x_1 \vee \dots \vee x_k$ (resp.~$x$, $\bar x \vee y$, or
  $\bar x_1 \vee \dots \vee \bar x_k$); \emph{IHS-B+} (resp.~\emph{IHS-B--})
  stands for $k$-IHS-B+ (resp.~$k$-IHS-B--) for some $k$; \textit{IHS-B}
  stands for IHS-B+ or IHS-B--;
  \item \textit{bijunctive}, if each $R \in \Gamma$ can be expressed
    by a conjunction of binary clauses;
  \item \textit{Horn} (\textit{dual-Horn}), if each $R \in \Gamma$ can
    be expressed by a conjunction of Horn (dual-Horn) clauses, i.e.,
    clauses that have at most one positive (negative) literal;
  \item \textit{affine}, if each relation $R \in \Gamma$ can be
    expressed by a conjunction of relations defined by equations of
    the form $x_1 \oplus \dots \oplus x_n = c$, where $c \in \{0,1\}$;
  \item \textit{self-dual} if for each relation $R \in \Gamma$,
    $(a_1,\dots,a_n) \in R \Rightarrow (\bar a_1,\dots, \bar a_n) \in
    R$.
  \end{itemize}

\noindent\framebox{\begin{minipage}{\wp\textwidth}
$\dcsp(\Gamma)$ \\
\emph{Input:} An instance $\langle V,\cC \rangle$ of $\csp(\Gamma)$, and an integer $k$.\\
\emph{Question:} Is there a deletion set $W \subseteq \cC$ such that $|W| \leq k$, and the $\csp(\Gamma)$-instance $\langle V, \cC \setminus W \rangle$ has a solution?
\end{minipage}}
\medskip

\noindent\framebox{\begin{minipage}{\wp\textwidth}
    $\dcspx(\Gamma)$\\
\emph{Input:} An instance $\langle V,\cC \rangle$ of $\csp(\Gamma)$, a subset $\cCx\subseteq \cC$ of undeletable constraints, and an integer $k$.
\\
\emph{Question:} Is there a deletion set $W \subseteq \cC\setminus \cCx$ such that $|W| \leq k$ and the $\csp(\Gamma)$-instance $\langle V, \cC \setminus W \rangle$ has a solution?
\end{minipage}}
\medskip

For every finite constraint language $\Gamma$, we consider the problem \dcsp\ above. For technical reasons, it will be convenient to work with a slight generalization of the problem, \dcspx (defined above), where we can specify that certain constraints are ``undeletable.''
For these two problems, a set of potentially more than $k$ constraints whose removal yields a satisfiable instance is called a \emph{feasible solution}.
Note that, contrary to \dcsp\ for which removing all the constraints constitute a trivially feasible solution, it is possible that an instance of \dcspx\ has no feasible solution.
A \emph{feasible instance} is an instance that admits at least one feasible solution.

\textbf{Reductions.}
We will use two types of reductions to connect the approximability of optimization problems.
The first type perfectly preserves the optimum value (or cost) of instances.

\begin{definition}
  An optimization problem $A$ has a {\em cost-preserving reduction} to problem $B$ if there are two polynomial-time computable functions $F$ and $G$ such that
\begin{enumerate}
\item For any feasible instance $I$ of $A$, $F(I)$ is a feasible instance of $B$ having the same optimum cost as $I$.
\item For any feasible instance $I$ of $A$, if $S'$ is a feasible solution for $F(I)$, then $G(I,S')$ is a feasible solution of $I$ having cost at most the cost of $F(I)$.
\end{enumerate}

\end{definition}
The following easy lemma shows that the existence of undeletable constraints does not make the problem significantly more general.
Note that, in the previous definition, if instance $I$ has no feasible solution, then the behavior of $F$ on $I$ is not defined.
\begin{lemma}\label{lem:reducerestricted}
There is a cost-preserving reduction from \dcspx\ to \dcsp.
\end{lemma}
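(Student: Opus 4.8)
The plan is to describe the standard ``gadget replacement'' trick for removing undeletability: replace each undeletable constraint by enough parallel copies of it that deleting all of them costs more than $k$, so an optimal solution of bounded cost never touches them. Concretely, given an instance $\langle V,\cC\rangle$ of $\dcspx(\Gamma)$ with undeletable set $\cCx$ and bound $k$, define $F$ to output the $\dcsp(\Gamma)$-instance on the same variable set $V$ whose constraint multiset $\cC'$ consists of (i) every deletable constraint of $\cC\setminus\cCx$ once, and (ii) for every undeletable constraint $C\in\cCx$, exactly $k+1$ identical copies of $C$; the new bound is again $k$. Since $F$ does nothing but duplicate constraints it is clearly polynomial-time computable, and it is defined on every instance (feasibility of $I$ is not needed, which matches the remark preceding the lemma).

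Next I would verify that $F$ preserves the optimum cost. If $\langle V,\cC\rangle$ has no feasible solution as a $\dcspx$-instance, then there is no way to satisfy all of $\cCx$ together with some sub-collection of the deletable constraints, so in $\cC'$ one cannot satisfy all $k+1$ copies of every $C\in\cCx$ simultaneously with all of their shared scope — hence the optimum of $F(I)$ would have to delete a whole block of $k+1$ copies; but actually the cleaner statement is: for a \emph{feasible} instance $I$, an assignment $\varphi$ satisfying $\cC\setminus W$ for a minimum deletion set $W\subseteq\cC\setminus\cCx$ also satisfies all copies of every undeletable constraint in $\cC'$, so $W$ (now viewed inside $\cC'$) witnesses $\mathrm{opt}(F(I))\le\mathrm{opt}(I)$. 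Conversely, let $W'$ be an optimal deletion set for $F(I)$. If $|W'|>k$ we may simply take $G$ to output the trivial feasible solution $\cC$ of $I$; otherwise $|W'|\le k$, so for each $C\in\cCx$ the set $W'$ cannot contain all $k+1$ copies of $C$, meaning $W'$ restricted to $\cC'$ leaves at least one copy of each undeletable constraint, and the corresponding assignment for $F(I)$ satisfies $C$; therefore $W'$, after collapsing the parallel copies, is a subset of $\cC\setminus\cCx$ of size at most $|W'|$ that is a valid deletion set for $I$. Define $G(I,W')$ accordingly: project $W'$ back to the deletable constraints of $\cC$ (dropping duplicates), unless $|W'|>k$ in which case return $\cC$. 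In both cases $G(I,W')$ is a feasible solution of $I$ of cost at most the cost of $F(I)$, which is condition~(2), and the two costs being equal gives condition~(1).

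The only subtlety to handle carefully is the interaction between the two possible ``branches'' of $G$ and the definition of cost-preserving reduction: we need $\mathrm{opt}(F(I))=\mathrm{opt}(I)$ \emph{as numbers}, not merely $\le$ in one direction, and we need $G$ to turn \emph{any} solution $S'$ (not just an optimal one) into a solution of $I$ of cost $\le \mathrm{cost}(F(I))$. The argument above already does this: if $\mathrm{cost}(S')\le k$ then the collapsed set has cost $\le\mathrm{cost}(S')$ and is valid, and if $\mathrm{cost}(S')>k\ge\mathrm{opt}(F(I))$ — wait, here one must note $\mathrm{opt}(F(I))\le k$ whenever $I$ is feasible with $\mathrm{opt}(I)\le k$, and when $\mathrm{opt}(I)>k$ the ``$\le k$'' promise is vacuous so returning $\cC$ (cost $|\cC|\le\mathrm{cost}(S')$ need not hold in general!) — so the honest fix is to only invoke the trivial branch when it is actually cheaper, i.e. set $G(I,S')=\cC$ only if $|\cC|\le\mathrm{cost}(S')$, and otherwise we are guaranteed $\mathrm{cost}(S')<|\cC|$, and in particular if $S'$ deleted a full block of $k+1$ copies then $\mathrm{cost}(S')\ge k+1$ and we can instead delete, for that $C$, one representative in $\cC$ — iterating this cleanup reduces to the $\le k$-per-block case. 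I expect this bookkeeping around non-optimal solutions and the degenerate case $\mathrm{cost}(S')$ large to be the main (though entirely routine) obstacle; the core duplication idea is immediate.
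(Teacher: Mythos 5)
There is a genuine gap, and it comes from calibrating the number of copies to the parameter $k$ rather than to the instance size. The paper's definition of a cost-preserving reduction requires that for \emph{every} feasible instance $I$ the optimum of $F(I)$ equals the optimum of $I$, and feasibility of a \dcspx\ instance does not depend on $k$ (it only means that some subset of the \emph{deletable} constraints can be removed to make the instance satisfiable). With only $k+1$ copies per undeletable constraint, a feasible instance with $\mathrm{opt}(I)>k$ can have $\mathrm{opt}(F(I))<\mathrm{opt}(I)$: take an undeletable constraint $(y)$ and, for $i=1,\dots,100$, deletable constraints $y\oplus z_i$ and $(z_i)$, with $k=1$; then $\mathrm{opt}(I)=100$, but in $F(I)$ one may delete the two copies of $(y)$, set $y=0$ and all $z_i=1$, giving $\mathrm{opt}(F(I))=2$. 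So condition (1) fails, and the same phenomenon breaks the intended use of the lemma: an approximate solution of $F(I)$ of cost $f(k)\cdot k$ may profitably delete whole blocks of size $k+1$, and then there is no way to pull it back to a comparable solution of $I$. The fix is exactly what the paper does: replace each undeletable constraint by $m+1$ copies, where $m$ is the total number of constraints, so that $\mathrm{opt}(I)\le m<m+1$ and no solution of $F(I)$ of cost at most $m$ can touch an undeletable block.

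A second, related error is your fallback branch for $G$: returning ``the trivial feasible solution $\cC$'' (or later ``delete one representative per full block'') is not legal, because a feasible solution of the \dcspx\ instance $I$ must by definition be a subset of $\cC\setminus\cCx$; deleting undeletable constraints is never allowed on the $I$ side. The correct fallback, used in the paper, is the set of \emph{all deletable} constraints, which is feasible precisely because $I$ is assumed feasible and has cost at most $m$; with $m+1$ copies this is automatically no more expensive than any solution $S'$ of $F(I)$ that touches an undeletable block (such an $S'$ has cost at least $m+1$), while an $S'$ avoiding all blocks maps back verbatim with the same cost. Your own ``wait'' digression correctly senses the problem with non-optimal or expensive $S'$, but the patches stay inside the $k+1$ framework and the illegal fallback, so the argument does not go through as written; rescaling the copies to $m+1$ and changing the fallback makes it essentially the paper's proof.
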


\begin{proof}
	The function $F$ on a feasible instance $I$ of \dcspx\ is defined the
	following way. Let $m$ be the number of constraints. We construct
	$F(I)$ by replacing each undeletable constraint with $m+1$
	copies. If $I$ is a feasible instance of \dcspx, then $I$ has a
	solution with at most $m$ deletions, which gives a solution of
	$F(I)$ as well, showing that $OPT(F(I))\le OPT(I)\le m$. Conversely,
	$OPT(F(I))\le m$ implies that an optimum solution of $F(I)$ uses
	only the deletable constraints of $I$, otherwise it would need to
	delete all $m+1$ copies of an undeletable constraints. Thus
	$OPT(I)\le OPT(F(I))$ and hence $OPT(I)=OPT(F(I))$ follows.
	
	The function $G(I,S')$ on a feasible instance $I$ of \dcspx\ and a feasbile solution $S'$ of $F(I)$ is defined the following way. If $S'$ deletes only the deletable constraints of $I$, then $G(I,S')=S'$ is also a feasible solution of $I$ with the same cost. Otherwise, if $S'$ deletes at least one undeletable constraint, then it has cost at least $m+1$, as it has to delete all $m+1$ copies of the constraint. Now we define $G(I,S')$ to be the set of all (at most $m$) deletable constraints; by assumption, $I$ is a feasible instance of \dcspx, hence $G(I,S')$ is a feasbile solution of cost at most $m+1$
\end{proof}

The second type of reduction that we use is the standard notion of A-reductions \cite{DBLP:journals/iandc/CrescenziP91},  which preserve approximation ratios up to constant factors. We slightly deviate from the standard definition by not requiring any specific behavior of $F$ when $I$ has no feasible solution.

\begin{definition}
  A minimization problem $A$ is {\em A-reducible} to problem $B$ if there are two polynomial-time computable functions $F$ and $G$ and a constant $\alpha$ such that
\begin{enumerate}
\item For any feasible instance $I$ of $A$, $F(I)$ is a feasible instance of $B$.
\item For any feasible instance $I$ of $A$, and any feasible solution $S'$ of $F(I)$, $G(I,S')$ is a feasible solution for $I$.
\item For any feasible instance $I$ of $A$, and any $r\ge 1$, if $S'$ is an $r$-approximate feasible solution for $F(I)$, then $G(I,S')$ is an $(\alpha r)$-approximate feasible solution for $I$.
\end{enumerate}

\end{definition}

\begin{proposition}
If optimization problem $A$ is A-reducible to optimization problem $B$ and $B$ admits a constant-factor FPA algorithm, then $A$ also has a constant-factor FPA algorithm.
\end{proposition}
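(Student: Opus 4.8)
The plan is to compose the A-reduction with a constant-factor FPA algorithm for $B$, with one extra ingredient needed to keep the approximation ratio constant. Let $F$, $G$, and a constant $\alpha\ge 1$ witness that $A$ is A-reducible to $B$, and let $\mathcal{B}$ be a constant-factor FPA algorithm for $B$ of ratio $c\ge 1$; thus on input $(J,k')$, in time $f_1(k')\cdot|J|^{O(1)}$, $\mathcal{B}$ either returns a feasible solution of $J$ of cost at most $c\cdot k'$, or correctly reports that $J$ has no solution of cost at most $k'$. Given a feasible instance $I$ of $A$ and a parameter $k$, the algorithm for $A$ first computes $J:=F(I)$, a feasible instance of $B$ by condition~1. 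It then calls $\mathcal{B}$ on $(J,k')$ for $k'=0,1,\dots,\lceil\alpha k\rceil$ in increasing order, stopping at the first $k'_0$ for which $\mathcal{B}$ returns a solution $S'$; if there is no such $k'_0$, it outputs ``$I$ has no solution of cost at most $k$''. Otherwise it computes $T:=G(I,S')$, a feasible solution of $I$ by condition~2, and outputs $T$ if $\mathrm{cost}(T)\le\alpha c\cdot k$, and ``$I$ has no solution of cost at most $k$'' otherwise.

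The approximation analysis relies on the observation that the threshold $k'_0$ is at most $OPT(J)$: for each $k'<k'_0$ the algorithm $\mathcal{B}$ correctly reported that $J$ has no solution of cost at most $k'$, so $OPT(J)>k'$ for all such $k'$, i.e.\ $OPT(J)\ge k'_0$. Therefore $\mathrm{cost}(S')\le c\cdot k'_0\le c\cdot OPT(J)$, so $S'$ is a $c$-approximate solution of $J$, and applying condition~3 with $r=c$ shows that $T=G(I,S')$ is an $(\alpha c)$-approximate feasible solution of $I$, that is, $\mathrm{cost}(T)\le\alpha c\cdot OPT(I)$. Since a solution is returned only after checking $\mathrm{cost}(T)\le\alpha c\cdot k$, any solution the algorithm outputs has cost at most $\alpha c\cdot k$, so $f_2(k):=\alpha c$ is the claimed constant ratio.

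To see that the algorithm is sound, we use that an A-reduction inflates the optimum by at most the constant factor $\alpha$, i.e.\ $OPT(F(I))\le\alpha\cdot OPT(I)$. Suppose $OPT(I)\le k$. Then $OPT(J)\le\alpha k$, so for every $k'\ge OPT(J)$ the algorithm $\mathcal{B}$ cannot truthfully report ``no solution of cost at most $k'$''; hence the search, which runs up to $\lceil\alpha k\rceil\ge OPT(J)$, finds a success at some $k'_0\le OPT(J)\le\alpha k$, and then $\mathrm{cost}(T)\le\alpha c\cdot OPT(I)\le\alpha c\cdot k$, so the algorithm outputs $T$. Consequently it returns a valid $(\alpha c)$-approximate solution whenever $OPT(I)\le k$, and, contrapositively, every ``no'' answer is correct. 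The running time is $O(k)$ calls to $\mathcal{B}$, each on an instance of size $|J|=|I|^{O(1)}$ with parameter at most $\lceil\alpha k\rceil$, plus the polynomial-time evaluations of $F$, $G$, and $\mathrm{cost}(T)$; in total $O(k)\cdot f_1(\lceil\alpha k\rceil)\cdot|I|^{O(1)}=f(k)\cdot|I|^{O(1)}$.

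The only delicate point is the choice of parameter passed to $\mathcal{B}$, which is why a single call $\mathcal{B}(J,k)$ does not suffice: a constant-factor FPA algorithm bounds the cost it returns only by a constant times the parameter it is given, not by a constant times $OPT(J)$, so handing it a parameter far above $OPT(J)$ may yield a solution of $J$ that is only a $\Theta(k)$-approximation, after which $G$ would give only a $\Theta(k)$-approximation of $I$. Searching upward for the smallest parameter $k'_0$ at which $\mathcal{B}$ succeeds — which forces $k'_0\le OPT(J)$ and hence makes $S'$ a $c$-approximation — and using $OPT(F(I))\le\alpha\cdot OPT(I)$ to know the search can be stopped at $\lceil\alpha k\rceil$, is the main ingredient; the remaining verifications are routine.
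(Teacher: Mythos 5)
Your composition strategy is the natural one, and most of the analysis is sound: searching upward for the least $k'_0$ at which the algorithm for $B$ succeeds does guarantee $OPT(F(I))\ge k'_0$, hence $S'$ is a $c$-approximate solution of $F(I)$, condition~3 then bounds the cost of $G(I,S')$ by $\alpha c\cdot OPT(I)$, and the final cost check makes every returned solution, and every ``no'' issued at that check, correct. The gap is exactly at the step you yourself single out as the key ingredient: you assert that an A-reduction satisfies $OPT(F(I))\le\alpha\cdot OPT(I)$, and you use this both to truncate the search at $\lceil\alpha k\rceil$ and to conclude that a fully unsuccessful search certifies $OPT(I)>k$. That inequality is not among the three conditions in the definition of A-reducibility and does not follow from them: condition~3 only transfers approximation \emph{ratios} of solutions of $F(I)$ back to solutions of $I$; it imposes no relation whatsoever between the two optimum values, which may even be measured in incomparable units. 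For example, a reduction that leaves the instance essentially unchanged but rescales all costs on the $B$-side by a factor $|I|$ satisfies conditions~1--3 with $\alpha=1$, yet $OPT(F(I))=|I|\cdot OPT(I)$ is not bounded by any function of $OPT(I)$; on such a reduction your truncated search would answer ``$I$ has no solution of cost at most $k$'' even when $OPT(I)\le k$, and dropping the truncation destroys the $f(k)\cdot n^{O(1)}$ running time, since the parameter fed to $\mathcal{B}$ is then not bounded in terms of $k$. So the ``no'' branch of your algorithm is not justified from the stated hypotheses.

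To repair the proof you must add (and then use) a parameter-bounding property of the reduction, e.g.\ that $OPT(F(I))$ is at most a constant (or a computable function of $k$) times $OPT(I)$, or verify this property for each reduction to which the proposition is applied. In the context of this paper that is harmless: the cost-preserving reductions keep the optimum exactly, and the reduction of Lemma~\ref{pp_reduction} replaces each deleted constraint by a constant number of constraints, so every A-reduction actually constructed blows up the optimum by at most a constant factor. But as a freestanding statement about A-reductions exactly as defined, your argument needs this additional property, and it should be stated explicitly rather than attributed to the definition.
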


\section{Post's lattice, co-clone lattice, and a simple reduction}
\label{sec:posts-lattice-co}

A \textit{clone} is a set of Boolean functions that contains all projections (that is, the functions $f(a_1,\dots,a_n) = a_k$ for $1 \leq k \leq n$) and is closed under arbitrary composition. \textit{All} clones of Boolean functions were identified by Post \cite{Post41}, and he also described their inclusion structure, hence the name Post's lattice. To make use of this lattice for CSPs, Post's lattice can be transformed to another lattice whose elements are not sets of functions closed under composition, but sets of relations closed under the following notion of definability.
\begin{definition}\label{pp_def}
	Let $\Gamma$ be a constraint language over some domain $A$.	We say that a relation $R$ is \emph{pp-definable} from $\Gamma$ if there exists a (primitive positive) formula
	$\varphi(x_1,\dots,x_k) \equiv \exists y_1,\dots,y_\ell \psi(x_1,\dots,x_k,y_1,\dots,y_\ell)$,
	where $\psi$ is a conjunction of atomic formulas with relations in $\Gamma$ and $EQ_A$ (the binary relation $\{(a,a) : a \in A\}$) such that for every $(a_1,\dots,a_k) \in A^k$ $(a_1,\dots,a_k) \in R \text{ if and only if } \varphi(a_1,\dots,a_k)$
	holds. If $\psi$ does not contain $EQ_A$, then we say that $R$ is \emph{pp-definable} from $\Gamma$ \emph{without equality}. For brevity, we often write ``\ea-definable'' instead of ``pp-definable without equality''. If $S$ is a set of relations, $S$ is \emph{pp-definable} (resp.~\ea-definable) from $\Gamma$ if every relation in $S$ is $pp$-definable (resp.~\ea-definable) from $\Gamma$.
\end{definition}
For a set of relations $\Gamma$, we denote by $\la \Gamma \ra$ the set of all relations that can be pp-defined over $\Gamma$. We refer to $\la \Gamma \ra$ as the \textit{co-clone} generated by $\Gamma$. The set of all co-clones forms a lattice. To give an idea about the connection between Post's lattice and the co-clone lattice, we briefly mention the following theorem, and refer the reader to, for example, \cite{bases} for more information. Roughly speaking, the following theorem says that the co-clone lattice is essentially Post's lattice turned upside down, i.e., the inclusion between neighboring nodes are inverted.
\begin{theorem}[\cite{poschel}, Theorem 3.1.3]
	The lattices of Boolean clones and Boolean co-clones are anti-isomorphic.
\end{theorem}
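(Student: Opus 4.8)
The plan is to exhibit the anti-isomorphism concretely through the Galois connection given by the operators $\Pol$ and $\Inv$. For a set $F$ of Boolean functions, let $\Inv(F)$ be the set of all Boolean relations preserved by every $f\in F$, where an $m$-ary relation $R$ is \emph{preserved} by an $n$-ary function $f$ if applying $f$ coordinate-wise to any $n$ (not necessarily distinct) tuples from $R$ again yields a tuple of $R$; dually, for a set $\Gamma$ of relations, let $\Pol(\Gamma)$ be the set of all Boolean functions preserving every $R\in\Gamma$. First I would record the two easy structural facts: $\Pol$ and $\Inv$ are inclusion-reversing, and they form a Galois connection, i.e.\ $F\subseteq\Pol(\Gamma)$ iff $\Gamma\subseteq\Inv(F)$ iff every $f\in F$ preserves every $R\in\Gamma$. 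Standard Galois-connection bookkeeping then gives that $\Pol\circ\Inv$ and $\Inv\circ\Pol$ are closure operators and that $\Pol$ and $\Inv$ restrict to mutually inverse, inclusion-reversing bijections between the family of $\Pol\Inv$-closed sets of functions and the family of $\Inv\Pol$-closed sets of relations.

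The substantive part is identifying these two families. On the function side, I would invoke the classical characterization (Geiger; Bodnarchuk et al.) that over a finite domain a set of functions has the form $\Pol(\Gamma)$ for some $\Gamma$ precisely when it is a clone; equivalently, $\Pol\Inv(F)$ is exactly the clone generated by $F$. The inclusion ``clone generated by $F$ $\subseteq\Pol\Inv(F)$'' is routine (projections preserve everything, and ``preserves $R$'' is closed under composition of functions, so every term over $F$ lands in $\Pol\Inv(F)$); the reverse inclusion is the real content and is proved by the usual argument: given an $n$-ary $g\in\Pol\Inv(F)$, form the $m=2^n$ by $n$ matrix whose rows are all input tuples, let $R\subseteq\{0,1\}^m$ consist of its $n$ columns, check $R\in\Inv(F)$ (finiteness of the domain keeps $R$ finite, so no local-closure subtlety arises), and read off a term over $F$ computing $g$ from $g$ applied to $R$. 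Dually, on the relation side, $\Inv\Pol(\Gamma)$ equals the co-clone $\la\Gamma\ra$: ``$\la\Gamma\ra\subseteq\Inv\Pol(\Gamma)$'' is an easy induction on pp-formulas (equality, conjunction, and existential projection are all respected by polymorphisms), and ``$\Inv\Pol(\Gamma)\subseteq\la\Gamma\ra$'' is the same Geiger/Bodnarchuk-type construction applied to relations.

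Combining the two identifications, the $\Pol\Inv$-closed sets are exactly the clones and the $\Inv\Pol$-closed sets are exactly the co-clones, so the Galois connection specializes to an inclusion-reversing bijection between the clone lattice and the co-clone lattice; an inclusion-reversing bijection between lattices is by definition an anti-isomorphism (it interchanges meets and joins), which is the assertion of the theorem. The main obstacle is precisely the non-elementary half of the Geiger/Bodnarchuk correspondence --- that every polymorphism of $\Inv(F)$ is generated by $F$ and every invariant of $\Pol(\Gamma)$ is pp-definable from $\Gamma$ --- while the remaining steps are routine manipulation of the Galois connection. It is worth noting that the Boolean domain being finite is exactly what removes the need for the topological (local closure) refinements of the general theory, and that for the purposes of this paper one only uses this bijection restricted to the co-clones appearing in Post's lattice.
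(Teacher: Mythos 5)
Your proposal is correct and is essentially the canonical argument behind the cited result: the paper itself only quotes Theorem 3.1.3 of P\"oschel--Kalu\v{z}nin without proof, and that reference (as well as the paper's own background discussion of $\Pol$ and $\Inv$) establishes the anti-isomorphism exactly via the $\Pol$--$\Inv$ Galois connection together with the Geiger/Bodnarchuk et al.\ characterizations of the closed sets on both sides, which is what you do. Since $\Pol$ and $\Inv$ restrict to mutually inverse inclusion-reversing bijections between clones and co-clones, your conclusion that this yields a lattice anti-isomorphism is sound, so no changes are needed.
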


Using the above comments, it can be seen (and it is well known) that the lattice of Boolean co-clones has the structure shown in Figure~\ref{classfigure}.\footnote{We thank Heribert Vollmer and Yuichi Yoshida for giving us access to their Post's lattice diagrams.}
In the figure, if co-clone $C_2$ is above co-clone $C_1$, then $C_2 \supset C_1$. The names of the co-clones are indicated in the nodes\footnote{If the name of a clone is $\text{L}_3$, for example, then the corresponding co-clone is $\Inv(\text{L}_3)$ (\Inv\ is defined, for example, in \cite{bases}), which is denoted by $\mathrm{IL_3}$.}, where we follow the notation of B\"ohler et al~\cite{bases}.

\begin{figure}[!t]
	\centering{
		\includegraphics[scale=0.55]{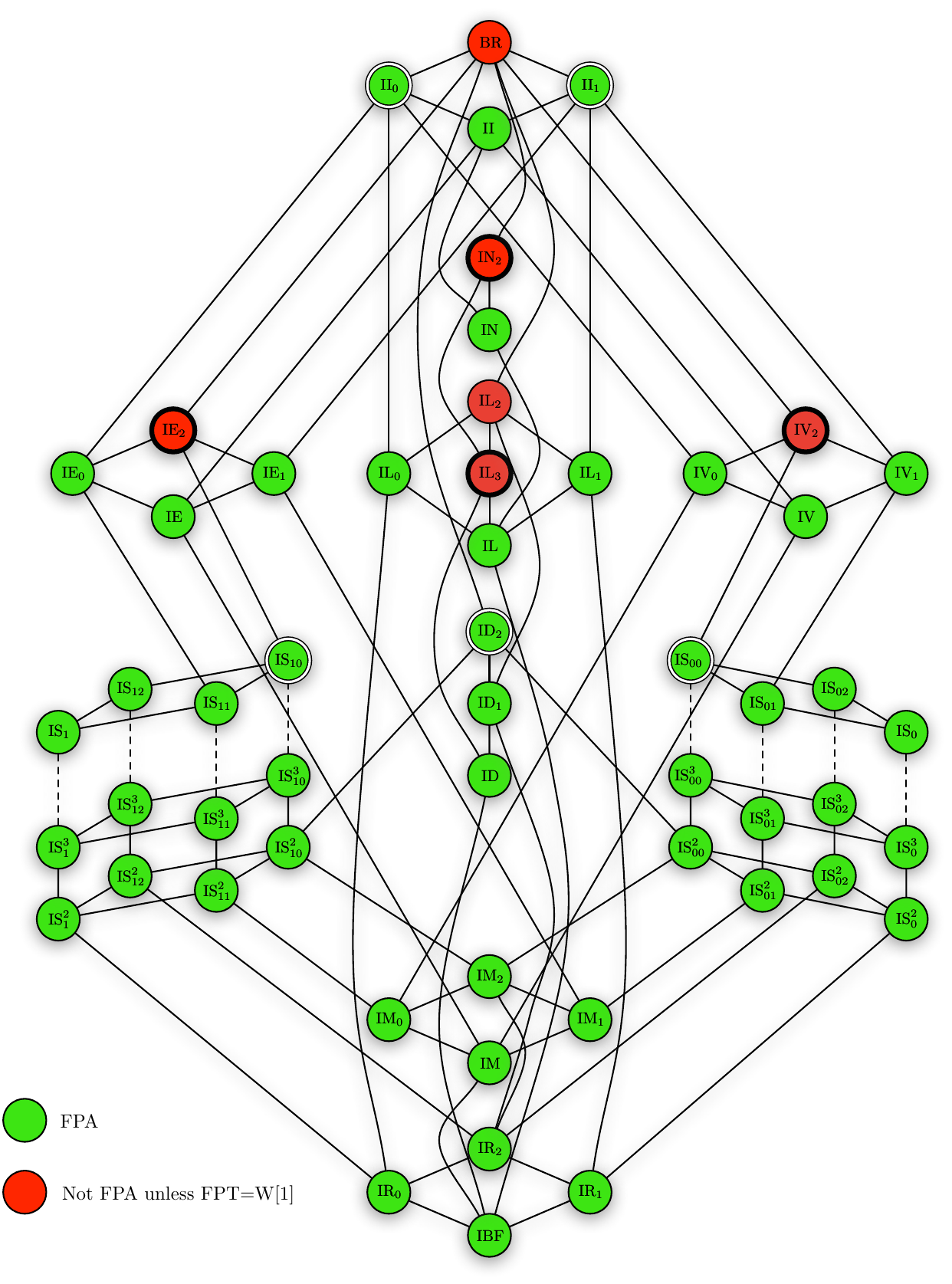}}
	\caption{Classification of Boolean CSPs according to constant ratio fixed-parameter approximability. The results for the few co-clones represented with an extra circle (in black for negative results and in white for positive results) imply the full classification.}
	\label{classfigure}
\end{figure}

For a co-clone $C$ we say that a set of relations $\Gamma$ is a \textit{base} for $C$ if $C = \la \Gamma \ra$, that is, any relation in $C$ can be pp-defined using relations in $\Gamma$. B\"{o}hler et al.\ give bases for all co-clones in \cite{bases}, and the reader can consult this paper for details. We reproduce this list in Table~\ref{cc_table}.\footnote{We note that $\even^4$ can be pp-defined using $\dup$. Therefore the base  $\{\dup, \even^4, x \oplus y\}$ given by B\"ohler et al.~\cite{bases} for $\mathrm{IN_2}$ can be actually simplified to $\{\dup, x \oplus y\}$.}

\begin{table}[!t]
  \centering
  \resizebox{\textwidth}{!}{
  \begin{tabular}{lll|lll}
    \hline
    \textbf{Co-clone}    & \textbf{Order}    & \textbf{Base} 	& \textbf{Co-clone}    & \textbf{Order}    &
                                                                   \textbf{Base}
    \\ \hline
    IBF         & 0        & \{=\}, $\{\emptyset\}$ & IS$_{10}$   & $\infty$ & $\{\text{NAND}^m | m \geq 2\} \cup \{x, \bar{x}, x \rightarrow y\}$ \\
    \hline
    IR$_0$      & 1        & $\{\bar{x}\}$  & ID & 2 & $\{x
                                                       \oplus
                                                       y\}$ \\

    \hline
    IR$_1$      & 1        & $\{x\}$ & 		ID$_1$      &
                                                              2
                                                                 &
                                                                   $\{x
                                                                   \oplus
                                                                   y,
                                                                   x\}$,
                                                                   every
                                                                   $R
                                                                   \in
                                                                   \{\{(a_1,a_2,a_3),$\\
                &&&&& \quad                       $(b_1,b_2,b_3)\}|\exists c \in \{1,2\} \text{ such
                      that }$      \\
                &&&&& \quad $\sum_{i=1}^3 a_i = \sum_{i=i}^3
                      b_i = c\}$                              \\ \hline
    IR$_2$      & 1        & $\{x,\bar{x}\}, \{x \bar{x}\}$ &    ID$_2$      &       2   & $\{x \oplus y, x \rightarrow y\}, \{x \bar{y}, \bar{x}yz\}$                                                                \\ \hline
    IM          & 2        & $\{x \rightarrow y\}$    &       IL          &          4& $\{\even^4\}$                                                                                                              \\
    \hline
    IM$_1$      & 2        & $\{x \rightarrow y, x\}, \{x \wedge
                             (y \rightarrow z)\}$ &           IL$_0$      &         3 & $\{\even^4, \bar{x}\}, \{\even^3\}$                                                                                                                                                         \\
    \hline
    IM$_0$      & 2        & $\{x \rightarrow y, \bar{x}\},
                             \{\bar{x} \wedge (y \rightarrow
                             z)\}$ &           IL$_1$      &
                                                             3 &
                                                                 $\{\even^4, x\}, \{\odd^3\}$                                                                                                                                                      \\ \hline
    IM$_2$      & 2        & $\{x \rightarrow y, x, \bar{x}\},
                             \{x \rightarrow y, \overline{x
                             \rightarrow y}\},$  & IL$_2$
                                                      &
                                                        3 &
                                                            $\{\even^4,
                                                            x,\bar{x}\},
                                                            \text{every
                                                            }
                                                            \{\even^n,x\}$
    \\
                && $\{x \bar{y}
                   \wedge (u \rightarrow v)\}$ &&&$\text{\quad where } n \geq 3 \text{ is odd}$                             \\  \hline
    IS$_0^m$    & m        & $\{\text{OR}^m\}$     &  IL$_3$      &        4  & $\{\even^4, x \oplus y\}, \{\odd^4\}$                                                                                      \\  \hline

    IS$_1^m$    & m        & $\{\text{NAND}^m\}$         &      IV          &          3& $\{x \vee y \vee \bar{z}\}$                                                                                                \\   \hline
    IS$_0$      & $\infty$ & $\{\text{OR}^m | m \geq 2\}$    &
                                                               IV$_0$
                                                      &
                                                        3 &
                                                            $\{x \vee y \vee \bar{z},\bar{x}\}$ \\ \hline
    IS$_1$      & $\infty$ & $\{\text{NAND}^m | m \geq 2\}$  & IV$_1$      &         3 & $\{x \vee y \vee \bar{z},x\}$          \\       \hline
    IS$_{02}^m$ & m        & $\{\text{OR}^m, x, \bar{x}\}$ &
                                                             IV$_2$      &        3  & $\{x \vee y \vee \bar{z},x,\bar{x}\}$      \\ \hline
    IS$_{02}$   & $\infty$ & $\{\text{OR}^m | m \geq 2\} \cup
                             \{x, \bar{x}\}$        &           IE          &          3& $\{\bar{x} \vee \bar{y} \vee z\}$                                                                                             \\ \hline
    IS$_{01}^m$ & m        & $\{\text{OR}^m, x \rightarrow y\}$     &     IE$_1$      &          3& $\{\bar{x} \vee \bar{y} \vee z,x\}$                                                                           \\  \hline
    IS$_{01}$   & $\infty$ & $\{\text{OR}^m | m \geq 2\} \cup
                             \{x \rightarrow y\}$    &
                                                       IE$_0$      &
                                                                     3
                                                                 &
                                                                   $\{\bar{x} \vee \bar{y} \vee z,\bar{x}\}$             \\  \hline
    IS$_{00}^m$ & m        & $\{\text{OR}^m, x, \bar{x}, x \rightarrow
                             y\}$      &       IE$_2$      &      3
                                                                 &
                                                                   $\{\bar{x} \vee \bar{y} \vee z,x,\bar{x}\}$                                                                                                                                                         \\  \hline
    IS$_{00}$   & $\infty$ & $\{\text{OR}^m | m \geq 2\} \cup \{x, \bar{x}, x \rightarrow y\}$      &      IN          & 3        & $\{\dup\}$                                                                                                                                                                        \\ \hline
    IS$_{12}^m$ & m        & $\{\text{NAND}^m, x, \bar{x}\}$          &          IN$_2$      & 3        & $\{\dup, x \oplus y\}, \{\nae\}$                                                                                                                                                                             \\ \hline
    IS$_{12}$   & $\infty$ & $\{\text{NAND}^m | m \geq 2\} \cup \{x, \bar{x}\}$      &         II          &       3   & $\{\even^4, x \rightarrow y\}$                                                                           \\ \hline
    IS$_{11}^m$ & m        & $\{\text{NAND}^m, x \rightarrow y\}$        &             II$_0$      & 3        & $\{\even^4, x \rightarrow y, \bar{x}\}, \{\dup,x \rightarrow y\}$                                                    \\ \hline
    IS$_{11}$   & $\infty$ & $\{\text{NAND}^m | m \geq 2\} \cup
                             \{x \rightarrow y\}$           &           II$_1$      & 3        & $\{\even^4, x \rightarrow y, x\}, \{x \vee (x \oplus z)\}$                                                                                                                       \\ \hline
    IS$_{10}^m$ & m        & $\{\text{NAND}^m, x, \bar{x}, x \rightarrow
                             y\}$      &        BR          & 3        &
                                                                         $\{\even^4, x \rightarrow y, x, \bar{x}\},$\\
                &&&&&  $\text{\quad}\{\text{1-IN-3}\}, \{x \vee (x \oplus z)\}$                                            \\ \hline
  \end{tabular}}
  \caption{Bases for all Boolean co-clones. (See \cite{bases} for a complete definition of relations that appear.) The order of a co-clone is the minimum over all bases of the maximum arity of a relation in the base. The order is defined to be infinite if there is no finite base for that co-clone.}
  \label{cc_table}
\end{table}

It is well-known that pp-definitions preserve the complexity of the decision version of CSP: if $\Gamma_2\subseteq \la \Gamma_1\ra$ for two finite languages $\Gamma_1$ and $\Gamma_2$, then there is a natural polynomial-time reduction from $\csp(\Gamma_2)$ to $\csp(\Gamma_1)$. The same is not true for \dcsp: the approximation ratio can change in the reduction. However, it has been observed that this change of the approximation ratio is at most a constant (depending on $\Gamma_1$ and $\Gamma_2$) \cite{DBLP:journals/siamcomp/KhannaSTW00,DBLP:journals/toct/DalmauK13,DBLP:conf/soda/DalmauKM15}; we show the same here in the context of parameterized reductions.

\begin{lemma}\label{pp_reduction}
Let $\Gamma$ be a constraint language, and $R$ be a relation that is pp-definable over $\Gamma$ without equality.  Then there is an A-reduction from $\dcsp(\Gamma \cup \{R\})$ to $\dcsp(\Gamma)$.
\end{lemma}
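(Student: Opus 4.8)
The plan is to turn the pp-definition of $R$ into a local gadget and substitute this gadget for every $R$-constraint. First I would fix a pp-definition of $R$ over $\Gamma$ without equality,
\[ R(x_1,\dots,x_k) \equiv \exists y_1,\dots,y_l\ \bigl( D_1 \wedge \dots \wedge D_m \bigr), \]
where each $D_i$ is an atom whose relation lies in $\Gamma$ (no equality atoms appear) on some of the variables $x_1,\dots,x_k,y_1,\dots,y_l$, and where $k, l, m$ and the $D_i$ depend only on $\Gamma$ and $R$. The forward map $F$ keeps the variable set $V$ and every constraint of $\cC$ whose relation already lies in $\Gamma$; for every constraint $C = \langle (v_1,\dots,v_k), R\rangle$ it introduces private fresh variables $y_1^C,\dots,y_l^C$ and adds the gadget $\Gamma_C$ consisting of the $m$ constraints obtained from $D_1,\dots,D_m$ under the substitution $x_i \mapsto v_i$, $y_j \mapsto y_j^C$. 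The place where the hypothesis ``without equality'' is used is precisely here: every constraint of $F(I)$ then uses a relation of $\Gamma$, so $F(I)$ is a genuine instance of $\dcsp(\Gamma)$, and since $m$ is a constant it is computable in polynomial time. This gives condition~1 of an A-reduction. (I would treat the constraint collections as multisets, so that gadgets attached to distinct $R$-constraints count as disjoint even when they are syntactically identical.)

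For the back map $G$ I would deliberately avoid recovering a satisfying assignment, which would not be polynomial time, and instead define $G(I,S')$ combinatorially: an original constraint $C$ is placed in $G(I,S')$ exactly when either $C$ already has a relation in $\Gamma$ and $C \in S'$, or $C$ uses $R$ and $\Gamma_C \cap S' \neq \emptyset$. This is clearly polynomial time. Sending each $\Gamma$-constraint of $G(I,S')$ to itself and each $R$-constraint $C$ of $G(I,S')$ to an arbitrary element of $\Gamma_C \cap S'$ is an injection into $S'$, so $|G(I,S')| \le |S'|$. Feasibility of $G(I,S')$ for $I$ (condition~2) follows because, if $\varphi'$ satisfies every constraint of $F(I)$ outside $S'$, then its restriction to $V$ satisfies $\cC \setminus G(I,S')$: a surviving $\Gamma$-constraint is satisfied outright, and for a surviving $R$-constraint $C = \langle (v_1,\dots,v_k),R\rangle$ the whole gadget $\Gamma_C$ lies outside $S'$ and is therefore satisfied by $\varphi'$, so the pp-formula witnesses that $(\varphi'(v_1),\dots,\varphi'(v_k)) \in R$.

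The approximation guarantee (condition~3) needs the complementary estimate. Given a deletion set $W \subseteq \cC$ for $I$ with a satisfying assignment $\varphi$ of $\cC \setminus W$, I would delete from $F(I)$ every $\Gamma$-constraint of $W$ together with the whole gadget $\Gamma_C$ of every $R$-constraint $C \in W$; this removes at most $m|W|$ constraints and leaves a satisfiable instance, because $\varphi$ extends to the private variables $y_j^C$ of each surviving $R$-constraint $C$ via the existential witnesses guaranteed by $(\varphi(v_1),\dots,\varphi(v_k)) \in R$. Hence $OPT(F(I)) \le m \cdot OPT(I)$. Combining this with $|G(I,S')| \le |S'|$: if $S'$ is an $r$-approximate solution of $F(I)$, then $|G(I,S')| \le |S'| \le r \cdot OPT(F(I)) \le m r \cdot OPT(I)$, so $G(I,S')$ is an $(mr)$-approximate feasible solution of $I$, and $F$, $G$, $\alpha = m$ form an A-reduction.

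I do not expect a real obstacle here, only bookkeeping: making sure $F(I)$ contains no equality constraints (this is exactly why the lemma is stated for pp-definitions without equality, the variant allowing equality needing $\Gamma$ to be treated more delicately, cf.\ the ``small technicality'' noted in the introduction); keeping $G$ polynomial time by using the combinatorial description above rather than any assignment; and the harmless multiset convention that prevents gadgets of different $R$-constraints from being accidentally merged. Conceptually the lemma is just the fixed-parameter-reduction analogue of the constant-factor closure of $\dcsp$ under pp-definitions observed by Khanna et al.~\cite{DBLP:journals/siamcomp/KhannaSTW00}.
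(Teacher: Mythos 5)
Your proposal is correct and follows essentially the same route as the paper: substitute the (equality-free) pp-definition gadget with fresh private variables for every $R$-constraint, translate deletion sets in both directions, and absorb the constant per-constraint blow-up $m$ into the ratio $\alpha$. You merely spell out the backward translation (the map $G$ and the bound $OPT(F(I))\le m\cdot OPT(I)$) that the paper leaves as ``the converse can be shown similarly,'' so there is nothing to object to.
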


\begin{proof}
 Let $I$ be an instance of $\dcsp(\Gamma \cup \{R\})$. Let $\varphi(x_1,\dots,x_k)$ be a primitive positive formula defining $R$ from $\Gamma$. Then $\varphi$ is of the form $\exists y_1,\dots,y_\ell \psi(x_1,\dots,$ $x_k,y_1,\dots,y_\ell)$, where $\psi$ is the quantifier-free part of $\varphi$. The key and well-known (in similar contexts) observation is that $\psi$ can be alternatively seen as an instance of $\dcsp(\Gamma)$. More precisely, we define the instance associated to $\psi$, $I_\psi$, as the instance that has variables $x_1, \dots, x_k, y_1, \dots, y_\ell$ and contains for every atomic formula $S(v_1,\dots,v_r)$ in $\psi$, the constraint $\la(v_1,\dots,v_r), S \ra$. It follows that for any assignment $s : {x_1,\dots,x_k,y_1,\dots,y_\ell} \rightarrow A$, $s$ is a solution of $I_\psi$ if and only if $\psi(s(x_1),\dots, s(x_k), s(y_1),\dots, s(y_\ell))$ holds.
	
	We obtain an instance $I'$ of $\dcsp(\Gamma)$ from $I$ using the following replacement. For each constraint $C = \la(u_1,\dots,u_q), R\ra$ in $I$, we identify the quantifier-free part $\psi_C(u_1,\dots,u_q,y_{C,1},$ $\dots,y_{C,l})$ of the formula corresponding to $C$, and then replace $C$ with the set of constraints of the instance $I_{\psi_C}$, where $y_{C,1},\dots,y_{C,l}$ are newly introduced variables. We leave the rest of the constraints intact.
	
	Any deletion set $X_I$ for $I$ is translated to a deletion set $X_{I'}$ of $I'$ as follows. If $C = \la (u_1,\dots,u_q), P \ra \in X_I$ and $P \neq R$, then we place $\la (u_1,\dots,u_q), P \ra \in X_{I'}$. If $P = R$, then we place all the constraints that replaced $C$ into $X_{I'}$. Since the number of these constraints is bounded by a constant, we obtain only a constant blow-up in the solution size. The converse can be shown similarly.
\end{proof}

By repeated applications of Lemma~\ref{pp_reduction}, the following corollary establishes that we need to provide approximation algorithms only for a few \dcsp{}s, and these algorithms can be used for other \dcsp{}s associated with the same co-clone.
\begin{corollary}\label{FPA_spreads}
  Let $C$ be a co-clone and $B$ be a base for $C$. If the equality
  relation can be \ea-defined from $B$, then for any finite
  $\Gamma \subseteq C$, there is an A-reduction from $\dcsp(\Gamma)$ to $\dcsp(B)$.
\end{corollary}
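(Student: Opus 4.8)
The plan is to deduce Corollary~\ref{FPA_spreads} from Lemma~\ref{pp_reduction} in two moves: first, to upgrade ordinary pp-definability over $B$ to \ea-definability using the hypothesis that the equality relation is itself \ea-definable from $B$, and second, to apply Lemma~\ref{pp_reduction} once for each relation of $\Gamma$ and compose the resulting A-reductions. (We may assume $B$ is finite, since otherwise $\dcsp(B)$ is not defined; in any case each relation of $\Gamma$ is pp-definable using only finitely many relations of $B$.)

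First I would carry out the equality-elimination step. Fix a finite $\Gamma \subseteq C$ and write $\Gamma = \{R_1,\dots,R_t\}$; if $\Gamma \subseteq B$ the identity functions already form an A-reduction, so assume otherwise. Since $C = \langle B \rangle$, each $R_i$ has a pp-definition $\varphi_i(\bar x) \equiv \exists \bar y\, \psi_i(\bar x, \bar y)$ in which $\psi_i$ is a conjunction of atoms whose relations lie in $B \cup \{EQ\}$. By hypothesis there is an \ea-formula $\eta(u,v) \equiv \exists \bar z\, \zeta(u,v,\bar z)$ over $B$ defining $EQ$. Replacing, in each $\psi_i$, every atom $EQ(u,v)$ by a fresh copy $\exists \bar z'\, \zeta(u,v,\bar z')$ of $\eta$ (with a new variable tuple for each occurrence) and pulling the new existential quantifiers to the front yields an \ea-definition of $R_i$ over $B$. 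The minor point to verify here is that this is still a conjunction of $B$-atoms prefixed by existential quantifiers and that the defined relation is unchanged, because at each step a subformula is replaced by a logically equivalent one; degenerate cases (empty conjunction, $EQ(u,u)$, etc.) cause no trouble.

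Then I would iterate Lemma~\ref{pp_reduction}. For $j = t, t-1, \dots, 1$, the relation $R_j$ is \ea-definable from $B$, hence a fortiori from $B \cup \{R_1,\dots,R_{j-1}\}$, so the lemma gives an A-reduction from $\dcsp(B \cup \{R_1,\dots,R_j\})$ to $\dcsp(B \cup \{R_1,\dots,R_{j-1}\})$. Since $\dcsp(\Gamma)$ trivially A-reduces (identity functions, $\alpha = 1$) to $\dcsp(B \cup \Gamma) = \dcsp(B \cup \{R_1,\dots,R_t\})$, composing this with the chain of $t$ A-reductions yields an A-reduction from $\dcsp(\Gamma)$ to $\dcsp(B)$. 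Here I would record the routine fact that A-reductions compose: if $A$ A-reduces to $B$ with functions $F_1,G_1$ and constant $\alpha_1$, and $B$ A-reduces to $D$ with $F_2,G_2,\alpha_2$, then $A$ A-reduces to $D$ via $F = F_2 \circ F_1$, $G(I,S') = G_1(I, G_2(F_1(I),S'))$, and $\alpha = \alpha_1\alpha_2$ — so the final constant is a product of $t+1 = |\Gamma|+1$ constants, depending only on $\Gamma$ and $B$ and not on the instance. (Alternatively one could reprove the construction of Lemma~\ref{pp_reduction} directly, replacing every constraint of an input instance by the gadget obtained from the \ea-definition of its relation over $B$; but the iterative route is cleaner given that the lemma is already available.)

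I do not expect a serious obstacle: the substance is already in Lemma~\ref{pp_reduction}, and the corollary is largely bookkeeping. The one place that genuinely uses the extra hypothesis — and hence the step to treat carefully — is the equality-elimination: Lemma~\ref{pp_reduction} is stated only for pp-definability \emph{without} equality, so without the assumption that $EQ$ is \ea-definable from $B$ the argument cannot even start (and indeed there are co-clones, e.g.\ IBF, where this assumption fails and the conclusion is false). A secondary point worth stating explicitly is that each application of Lemma~\ref{pp_reduction} blows up the solution size by a constant depending only on the eliminated relation, so that after the $|\Gamma|$ applications the total loss is still a constant, as required by the definition of A-reduction.
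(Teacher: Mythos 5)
Your proposal is correct and matches the paper's intended argument: the paper derives the corollary exactly by eliminating equality atoms via the assumed \ea-definition of $EQ$ over $B$ and then repeatedly applying Lemma~\ref{pp_reduction}, composing the resulting A-reductions. The details you supply (fresh variables per $EQ$-occurrence, the trivial reduction from $\dcsp(\Gamma)$ to $\dcsp(B\cup\Gamma)$, and the multiplicativity of the constants under composition) are precisely the bookkeeping the paper leaves implicit.
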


For hardness results, we wish to argue that if a co-clone $C$ is hard, then any constraint language $\Gamma$ generating the co-clone is hard. However, there are two technical issues. First, co-clones are infinite and our constraint languages are finite. Therefore, we formulate this requirement instead by saying that a finite base $B$ of the co-clone $C$ is hard. Second, pp-definitions require equality relations, which may not be expressible by $\Gamma$. However, as the following theorem shows, this is an issue only if $B$ contains relations where the coordinates are always equal (which will not be the case in our proofs).
A $k$-ary relation $R$ is \emph{irredundant} if for every two different coordinates $1 \leq i < j \leq k$, $R$
contains a tuple $(a_1,\dots,a_k)$ with $a_i \neq a_j$. A set of relations $S$ is \emph{irredundant} if any relation in $S$ is irredundant.
\begin{theorem}[\cite{Geiger68,Bodnarchuk69}]\label{irredundant}
	If $S \subseteq \la \Gamma \ra$ and $S$ is irredundant, then $S$ is $\exists \wedge$-definable from $\Gamma$.
\end{theorem}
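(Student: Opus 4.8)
The plan is to establish Theorem~\ref{irredundant} by combining the classical Geiger--Bodnarchuk--Kalu\v{z}nin Galois correspondence with a localization argument that removes the need for equality atoms on irredundant relations. First I would recall the core of the correspondence: for a finite constraint language $\Gamma$ over a finite domain, a relation $R$ belongs to $\la\Gamma\ra$ (the pp-closure, with equality allowed) if and only if $R$ is preserved by every polymorphism of $\Gamma$, i.e.\ $\la\Gamma\ra = \Inv(\Pol(\Gamma))$. This is the standard statement I would cite or reprove via the usual ``indicator problem'' construction: given $S \subseteq \la\Gamma\ra$, each $m$-ary $R \in S$ with $R = \{t_1,\dots,t_r\}$ (so $r$ tuples of arity $m$) is recovered as the projection onto the $m$ coordinates of a certain pp-formula whose solution set is exactly the subalgebra of $\bigl(\prod \text{(free structure on $r$ generators)}\bigr)$ generated by the rows of the $r\times m$ matrix of $R$; conjoining over all $R\in S$ handles finite $S$ at once.

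The main work is to upgrade the pp-definition to an $\exists\wedge$-definition (no $EQ_A$ atoms) under the irredundancy hypothesis. The obstacle is that a naive pp-definition of $R$ from $\Gamma$ may genuinely use equality atoms $EQ_A(x,y)$ — for instance to force two existentially quantified auxiliary variables to coincide, or to force an output coordinate to equal an auxiliary variable. The key observation I would exploit is that if $R$ is irredundant, then no pair of its coordinates is forced to be equal, so equality atoms among the \emph{free} variables $x_1,\dots,x_k$ are never logically implied and can be analyzed away. Concretely, starting from a pp-formula $\varphi(x_1,\dots,x_k) \equiv \exists y_1,\dots,y_l\,\psi$ defining $R$, I would process the equality atoms in $\psi$: an equality atom between two existential variables, or between an existential variable and a free variable, can be eliminated by substitution (identify the two variables and delete the atom), which strictly decreases the number of $EQ_A$ atoms without changing the defined relation; the only potentially problematic case is an equality atom $EQ_A(x_i,x_j)$ directly between two free variables, but such an atom would force $a_i = a_j$ on all of $R$, contradicting irredundancy of $R$ (as a member of the irredundant set $S$), so this case does not arise. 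Iterating, we reach an equivalent formula with no equality atoms, which is the desired $\exists\wedge$-definition.

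Two subtleties need care. First, after substitution an equality atom $EQ_A(x_i, y)$ with $y$ existential is removed by replacing $y$ by $x_i$ everywhere; if instead the atom is $EQ_A(y, y')$ between two existentials we drop $y'$ in favor of $y$ — in both cases one must check the quantifier prefix and the remaining atoms still use only relations from $\Gamma$ (they do, since substitution does not introduce new relation symbols). Second, to handle a \emph{set} $S$ rather than a single relation, I would simply note that $\exists\wedge$-definability of each member of $S$ from $\Gamma$ is exactly the definition of $S$ being $\exists\wedge$-definable from $\Gamma$, so the single-relation argument applied to each $R \in S$ (each of which is irredundant, hence admits the equality-elimination) suffices. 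The hardest part conceptually is cleanly justifying that the Galois correspondence gives a pp-definition to start from; the equality-elimination step itself is routine once irredundancy is invoked at the right moment, namely to rule out forced equalities among output coordinates.
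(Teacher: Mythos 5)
The paper does not prove Theorem~\ref{irredundant} at all: it is quoted as a classical result with citations to Geiger and to Bodnarchuk et al., so there is no in-paper proof to match yours against. Judged on its own, your argument is correct, and its essential content is the equality-elimination step, which is exactly the right mechanism here. Two remarks. First, your opening paragraph (the Galois correspondence $\la\Gamma\ra=\Inv(\Pol(\Gamma))$ and the indicator construction) is superfluous for the statement as formulated in this paper: $\la\Gamma\ra$ is \emph{defined} as the set of relations pp-definable from $\Gamma$ (with equality), so the hypothesis $S\subseteq\la\Gamma\ra$ already hands you a pp-definition of each $R\in S$, and the "hardest part" you worry about justifying is not needed; the Galois theorem is why the \emph{sources} state the result, but it plays no role in deriving the equality-free definition from a pp-definition. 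Second, a small point worth making explicit: eliminating an atom $EQ_A(x_i,y)$ by substituting $x_i$ for the existential $y$ can turn another atom $EQ_A(x_j,y)$ into a free--free equality $EQ_A(x_i,x_j)$, so the free--free case is not confined to the original formula. Your argument still disposes of it, because each substitution preserves the defined relation, so at every stage the current formula defines the irredundant (hence, in arity at least two, nonempty) relation $R$, and a free--free equality would force $a_i=a_j$ on all of $R$, which irredundancy forbids; together with deleting trivial atoms $EQ_A(x,x)$, the iteration terminates with an $\exists\wedge$-definition, and applying this to each member of $S$ finishes the proof.
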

Thus, considering an irredundant base $B$ of co-clone $C$, we can formulate the following result.
\begin{corollary}\label{cor:irredbase}
	Let $B$ be an irredundant base for some co-clone $C$. If  $\Gamma$ is a finite constraint language with $C \subseteq \la \Gamma \ra$, then there is an A-reduction from $\dcsp(B)$ to $\dcsp(\Gamma)$.
\end{corollary}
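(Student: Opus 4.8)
The plan is to derive the statement directly from Theorem~\ref{irredundant} and Lemma~\ref{pp_reduction}, peeling off the relations of $B$ one at a time; the argument mirrors the proof of Corollary~\ref{FPA_spreads}, with the irredundancy of $B$ now playing the role that ``$EQ_A$ is $\ea$-definable from the base'' played there.

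First I would record the two easy structural facts. Since $B$ is a base for $C$ we have $\la B\ra = C$, hence $B \subseteq \la B\ra = C \subseteq \la\Gamma\ra$; and since $B$ is irredundant, Theorem~\ref{irredundant} applied with $S = B$ gives that every $R \in B$ is $\ea$-definable (pp-definable without equality) from $\Gamma$. This is the only use of irredundancy, and it is exactly what is needed to invoke Lemma~\ref{pp_reduction}, whose hypothesis is pp-definability \emph{without equality}. I also use, as elsewhere in the paper, that $B$ is a \emph{finite} base, so that $\dcsp(B)$ is a well-defined problem.

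Then I would set up the chain. Write $B = \{R_1,\dots,R_t\}$ and put $\Gamma_0 := \Gamma \cup B$ and $\Gamma_j := \Gamma \cup \{R_{j+1},\dots,R_t\}$ for $1 \le j \le t$, so that $\Gamma_t = \Gamma$ and $\Gamma_{j-1} = \Gamma_j \cup \{R_j\}$. Each $R_j$ is $\ea$-definable from $\Gamma$ and hence from the superset $\Gamma_j$, so Lemma~\ref{pp_reduction} (with constraint language $\Gamma_j$ and relation $R_j$) yields an A-reduction from $\dcsp(\Gamma_{j-1})$ to $\dcsp(\Gamma_j)$. Composing these $t$ A-reductions---A-reductions compose, the constants merely multiplying---gives an A-reduction from $\dcsp(\Gamma \cup B)$ to $\dcsp(\Gamma)$. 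Finally, every instance of $\dcsp(B)$ is literally an instance of $\dcsp(\Gamma \cup B)$, so the identity functions form a trivial A-reduction from $\dcsp(B)$ to $\dcsp(\Gamma \cup B)$; prepending it completes the reduction.

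I do not expect any genuine obstacle here: all the content sits in the two imported results. The only points needing a line of care are the bookkeeping that $R_j$ stays $\ea$-definable from each intermediate language $\Gamma_j$ (immediate from $\Gamma \subseteq \Gamma_j$ and monotonicity of pp-definability), the finiteness of $B$, and the remark that Lemma~\ref{pp_reduction} as stated handles a single added relation, so the peeling is a $t$-fold induction rather than one application.
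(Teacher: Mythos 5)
Your proposal is correct and follows essentially the same route as the paper's own proof: use Theorem~\ref{irredundant} to get $\ea$-definability of $B$ from $\Gamma$ (since $B\subseteq C\subseteq\la\Gamma\ra$), then apply Lemma~\ref{pp_reduction} repeatedly, composing the A-reductions. Your chain $\Gamma_0,\dots,\Gamma_t$ simply spells out the bookkeeping that the paper leaves implicit in the phrase ``repeated applications''.
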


\begin{proof}
	Since $B \subseteq \la \Gamma \ra$ and $B$ is irredundant, $B$ can be $\ea$-defined over $\Gamma$ using Theorem~\ref{irredundant}.
	Then repeated applications of Lemma~\ref{pp_reduction} shows the existence of the reduction from $\dcsp(B)$ to $\dcsp(\Gamma)$.
\end{proof}

By the following lemma, if the constraint language is self-dual, then we can assume that it also contains the constant relations.
\begin{lemma}\label{adding_constants}
	Let $\Gamma$ be a self-dual constraint language. Assume that $x \oplus y \in \Gamma$. Then there is a cost-preserving reduction from $\dcsp(\Gamma \cup \{x,\bar{x}\})$ to $\dcsp(\Gamma)$.
\end{lemma}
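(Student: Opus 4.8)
The plan is to simulate the two constant (unary) constraints by introducing two fresh ``reference'' variables $z_T$ and $z_F$ that are forced to take opposite values, and to replace each unary constraint by an $x\oplus y$ constraint against the appropriate reference variable. Self-duality of $\Gamma$ is exactly what makes it harmless that the pair $(z_T,z_F)$ is pinned down only up to a global flip: whenever a satisfying assignment of the transformed instance gets the two reference values ``backwards'', its complement fixes this while still satisfying every constraint.

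Concretely, given an instance $I=\langle V,\cC\rangle$ with parameter $k$ of $\dcsp(\Gamma\cup\{x,\bar x\})$, set $m=|\cC|$ and let $F(I)$ have variable set $V\cup\{z_T,z_F\}$ and the following constraints: every constraint of $\cC$ whose relation lies in $\Gamma$, kept verbatim; for each constraint $(x)$ of $\cC$, the constraint $x\oplus z_F$; for each constraint $(\bar x)$ of $\cC$, the constraint $x\oplus z_T$; and $m+1$ parallel copies of $z_T\oplus z_F$. Since $x\oplus y\in\Gamma$, every constraint of $F(I)$ uses a relation of $\Gamma$, so $F(I)$ is a legitimate instance of $\dcsp(\Gamma)$, computable in polynomial time, and there is an obvious injection $\iota$ from $\cC$ to the constraints of $F(I)$ sending each constraint to its replacement.

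First I would check $OPT(F(I))\le OPT(I)$: from a deletion set $W\subseteq\cC$ with a satisfying assignment $\varphi$ of $\cC\setminus W$, extend $\varphi$ by $\varphi(z_T)=1$, $\varphi(z_F)=0$ and delete the set $\iota(W)$ from $F(I)$; then a surviving constraint $x\oplus z_F$ forces $\varphi(x)=1$ (matching the original $(x)$), a surviving $x\oplus z_T$ forces $\varphi(x)=0$, the kept $\Gamma$-constraints hold by $\varphi$, and the copies of $z_T\oplus z_F$ hold since $1\ne 0$. Thus $OPT(F(I))\le OPT(I)\le m$. For the reverse inequality together with the back-translation $G$: given any feasible solution $S'$ of $F(I)$, if $S'$ contains all $m+1$ copies of $z_T\oplus z_F$ then $|S'|\ge m+1>m\ge|\cC|$, and we may set $G(I,S')=\cC$, which is a feasible solution of $I$ (the empty instance is satisfiable) of cost at most $|S'|$. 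Otherwise, let $W$ be the $\iota$-preimage of the part of $S'$ lying in $\iota(\cC)$, so $|W|\le|S'|$; pick any assignment $\psi$ satisfying $F(I)\setminus S'$, and note that some copy of $z_T\oplus z_F$ survives, so $\psi(z_T)\ne\psi(z_F)$. If $\psi(z_T)=0$, replace $\psi$ by its complement $\bar\psi$, which still satisfies $F(I)\setminus S'$ because both $\Gamma$ and the relation $x\oplus y$ are self-dual. So we may assume $\psi(z_T)=1$, $\psi(z_F)=0$, and then $\psi|_V$ satisfies $\cC\setminus W$, so $G(I,S')=W$ works. Taking $S'$ optimal yields $OPT(I)\le OPT(F(I))$, hence equality; this gives condition~1 of a cost-preserving reduction, and the property of $G$ just verified is condition~2.

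The one step deserving care --- and the only place where an honest idea rather than bookkeeping is needed --- is the complementation argument: the replacement must be arranged so that flipping a satisfying assignment of $F(I)$ simultaneously preserves satisfaction of every constraint (needing $\Gamma$ and $x\oplus y$ self-dual) and swaps the roles of $z_T$ and $z_F$, so that exactly one of $\psi,\bar\psi$ realizes the intended semantics of the unary constraints. Everything else is routine. As an alternative packaging, one could instead give a cost-preserving reduction to $\dcspx(\Gamma)$ in which the single constraint $z_T\oplus z_F$ is made undeletable, and then compose with Lemma~\ref{lem:reducerestricted}, using that cost-preserving reductions compose.
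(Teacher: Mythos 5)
Your proposal is correct and follows essentially the same route as the paper: fresh reference variables tied by an (effectively undeletable) $x\oplus y$ constraint, unary constraints simulated by $\oplus$ against them, and self-duality used to complement an assignment when the reference pair comes out ``backwards''; the paper packages the undeletability via $\dcspx(\Gamma)$ and Lemma~\ref{lem:reducerestricted}, exactly the alternative you mention, and encodes the pinning slightly differently (via $x\oplus u$, $u\oplus v$), but these are cosmetic differences.
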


\begin{proof}
	Let $I$ be an instance of $\dcsp(\Gamma \cup \{x,\bar{x}\})$. We construct an instance $J$ of $\dcspx(\Gamma)$ such that a deletion set of size $k$ of $I$ corresponds to a deletion set of size $k$ of $J$; then the result for $\dcsp$ follows from the fact that there is a cost-preserving reduction from \dcspx\ to \dcsp (Lemma~\ref{lem:reducerestricted}). Every constraint of the form $R(x_1,\dots,x_r)$ in $I$ is placed into $J$. We introduce an undeletable constraint $x \oplus y$ to $J$, where $x$ and $y$ are new variables. For any constraint $v = 0$ in $I$, we add a constraint $v \oplus y$, and for any constraint $v = 1$ in $I$, we add a constraint $v \oplus x$ in $J$.
	
	Let $W_I$ be a deletion set for $I$. To obtain a deletion set $W_J$ of the same size for $J$,  any constraint that is not of the form $v=0$ or $v=1$ is placed into $W_J$. For any constraint $v=0 \in W_I$, we place the constraint $v \oplus y$ into $W_J$, and for any constraint $v=1 \in W_I$, we place the constraint $v \oplus x$ into $W_J$. Then assigning $0$ to $x$ and $1$ to $y$, and for the remaining variables of $J$ using the assignment for the variables of $I$, we obtain a satisfying assignment for $J$.
	
	The converse can be done by essentially reversing the argument, except that we might need to use the complement of the satisfying assignment for $J$ to satisfy constraints of the form $v = 0$ and $v = 1$.
\end{proof}

The following theorem states our classification in terms of co-clones.
\begin{theorem}\label{trichotomy}
  Let $\Gamma$ be a finite set of Boolean relations.
  \begin{enumerate}
    \item If $\la \Gamma \ra \subseteq C$ (equivalently, if $\Gamma \subseteq C$), with $C \in \{\mathrm{II_0}, \mathrm{II_1}, \mathrm{IS_{00}}, \mathrm{IS_{10}}, \mathrm{ID_2}\}$, then $\dcsp(\Gamma)$ has a constant-factor \fpa\ algorithm. (Note in these cases $\Gamma$ is $0$-valid, $1$-valid, IHS-B+, IHS-B--, or bijunctive, respectively.)
    \item If $\la \Gamma \ra \in \{\mathrm{IL_2},\mathrm{IL_3}\}$, then $\dcsp(\Gamma)$ is equivalent to \textsc{Nearest Codeword} and to \odds\ under A-reductions (note that these constraint languages are affine) and has no constant-factor FP-approximation, unless $\fpt=\W 1$.
    \item If $C \subseteq \la \Gamma \ra$, where $C \in
      \{\mathrm{IE_2}, \mathrm{IV}_2, \mathrm{IN}_2\}$, then
      $\dcsp(\Gamma)$ does not have an \fpa\ algorithm,
      unless $\fpt=\WP$. (Note that in these cases $\Gamma$ can
      $\exists \wedge$-define either arbitrary Horn relations, or
      arbitrary dual Horn relations, or the relation $\nae = \{0,1\}^3 \setminus \{(0,0,0),(1,1,1)\}$.)
  \end{enumerate}
\end{theorem}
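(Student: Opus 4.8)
The plan is to prove each of the three items for a convenient finite \emph{base} of the co-clone in question (as listed in Table~\ref{cc_table}) and then transport the conclusion to an arbitrary $\Gamma$ via the reductions of Section~\ref{sec:posts-lattice-co}. For the positive results (Item~1 and the easy direction of Item~2) I would use Corollary~\ref{FPA_spreads}: each base $B$ involved $\ea$-defines equality (e.g.\ $x=y\equiv(x\to y)\wedge(y\to x)$ from the base $\{x\oplus y,x\to y\}$ of $\mathrm{ID_2}$, and trivially in the $0$-/$1$-valid cases), so it suffices to treat $\dcsp(B)$, and the result propagates to every finite $\Gamma\subseteq C$. For the hardness result (Item~3) I would use Corollary~\ref{cor:irredbase}: the bases of $\mathrm{IE_2}$, $\mathrm{IV}_2$, $\mathrm{IN}_2$ in Table~\ref{cc_table} are irredundant, so hardness of $\dcsp(B)$ spreads to every $\Gamma$ with $\mathrm{IE_2}$, $\mathrm{IV}_2$, or $\mathrm{IN}_2\subseteq\la\Gamma\ra$. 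That the three cases exhaust all co-clones is a direct inspection of Figure~\ref{classfigure}: any co-clone is contained in one of $\mathrm{II_0},\mathrm{II_1},\mathrm{ID_2},\mathrm{IS_{00}},\mathrm{IS_{10}}$ (in particular $\mathrm{IE},\mathrm{IE_0},\mathrm{IE_1}$, $\mathrm{IV},\mathrm{IV_0},\mathrm{IV_1}$, $\mathrm{IL},\mathrm{IL_0},\mathrm{IL_1}$, and all $\mathrm{IS}$-co-clones are $0$-valid, $1$-valid, bijunctive, or IHS-$B$), equals $\mathrm{IL_2}$ or $\mathrm{IL_3}$, or contains one of $\mathrm{IE_2},\mathrm{IV}_2,\mathrm{IN}_2$.

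For Item~1 there are three sub-cases. If $\Gamma\subseteq\mathrm{II_0}$ (resp.\ $\mathrm{II_1}$), every relation is $0$-valid (resp.\ $1$-valid), so the all-$0$ (resp.\ all-$1$) assignment satisfies every constraint, the optimum is always $0$, and returning the empty deletion set is an exact algorithm. If $\Gamma\subseteq\mathrm{IS_{00}}$ or $\Gamma\subseteq\mathrm{IS_{10}}$, then $\Gamma$ is IHS-$B$+ or IHS-$B$-, and $\dcsp(\Gamma)$ has a polynomial-time constant-factor approximation by Khanna et al.~\cite{DBLP:journals/siamcomp/KhannaSTW00}, which is in particular a constant-factor \fpa\ algorithm. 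If $\Gamma\subseteq\mathrm{ID_2}$, then $\Gamma$ is bijunctive, so each relation of arity $r$ is a conjunction of at most $c_\Gamma=O(r^2)$ binary clauses; replacing each constraint by its clauses (charging a deleted constraint to the deletion of all its $\le c_\Gamma$ clauses, and conversely, whenever a \twosat\ deletion set touches at least one clause of a constraint, charging the whole constraint) gives an A-reduction from $\dcsp(\Gamma)$ to \almost\ in both directions, so running the \fpt\ algorithm for \almost~\cite{almost2satfpt,DBLP:journals/talg/LokshtanovNRRS14} on the transformed instance with budget $c_\Gamma k$ yields a $c_\Gamma$-factor \fpa\ algorithm for $\dcsp(\Gamma)$.

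For Item~2, one direction is again a clause-style A-reduction: viewing a constraint $R(v_1,\dots,v_r)$ with $R$ affine as a conjunction of a bounded number of $\gf$-linear equations, $\dcsp(\Gamma)$ A-reduces to the problem of deleting a minimum number of equations from a $\gf$-system to make it satisfiable, which is exactly \nc. For the converse, by Corollary~\ref{cor:irredbase} it is enough to A-reduce \nc\ to $\dcsp(B)$ for $B$ an irredundant base: from $\{\even^4,x,\bar x\}$ (a base of $\mathrm{IL_2}$) one $\ea$-defines every equation $x_1\oplus\dots\oplus x_n=c$ using the constants and projection, so a linear system translates into such a $\dcsp$ instance with the optimum preserved. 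Since $\mathrm{IL_3}=\la\{\odd^4\}\ra$ is self-dual and $x\oplus y\equiv\odd^4(x,y,z,z)$, Lemmas~\ref{pp_reduction} and~\ref{adding_constants} give an A-reduction from $\dcsp(\{\odd^4,x,\bar x\})$ to $\dcsp(\{\odd^4\})$, and $\{\odd^4,x,\bar x\}$ already generates all of $\mathrm{IL_2}$, so the $\mathrm{IL_3}$ case follows from the $\mathrm{IL_2}$ case. The A-equivalence of \nc\ with \odds\ is established separately in Section~\ref{affine_stuff}.

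Item~3 is where the real work lies. The goal is to show $\dcsp(B)$ has no \fpa\ unless $\fpt=\WP$, for $B$ the irredundant base of $\mathrm{IV}_2$, $\mathrm{IE_2}$, or $\mathrm{IN}_2$. I would reduce from the parameterized problem of computing a minimum-weight satisfying assignment of a monotone Boolean circuit (fan-in~$2$, one output), which is \WP-hard and, moreover, completely inapproximable in the sense that it admits no \fpa\ for any ratio unless $\fpt=\WP$. For $\mathrm{IV}_2=\la\{x\vee y\vee\bar z,x,\bar x\}\ra$ the encoding is direct: one variable per gate; undeletable constraints $g\to g_1$ and $g\to g_2$ (each equal to $x\vee x\vee\bar z$) for each $\wedge$-gate $g=g_1\wedge g_2$; the undeletable constraint $\bar g\vee g_1\vee g_2$ (which is $x\vee y\vee\bar z$) for each $\vee$-gate; the undeletable unit constraint fixing the output to $1$; and a deletable constraint $x_i=0$ for every input $x_i$. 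Only the \emph{downward} direction of each gate is enforced, and a routine induction shows that a set of $t$ input constraints whose deletion makes the instance satisfiable corresponds exactly to setting those $t$ inputs to $1$ and observing that the circuit then outputs $1$; hence the deletion optimum equals the circuit's minimum weight. For $\mathrm{IE_2}$ (Horn) one notes that $\dcsp(\Gamma)$ and $\dcsp(\overline\Gamma)$ (with $\overline\Gamma$ obtained by complementing every coordinate of every relation) are literally the same problem and that the base of $\mathrm{IE_2}$ is the coordinate-complement of the base of $\mathrm{IV}_2$. For $\mathrm{IN}_2=\la\{\nae\}\ra$: $\{\nae\}$ is self-dual and $x\oplus y\equiv\nae(x,y,y)$, so by Lemmas~\ref{pp_reduction} and~\ref{adding_constants} it suffices to show hardness of $\dcsp(\{\nae,x,\bar x\})$; but $\{\nae,x,\bar x\}$ generates \emph{all} Boolean relations, so by Theorem~\ref{irredundant} every irredundant relation used in the circuit encoding is $\ea$-definable over it and the same reduction goes through. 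The main obstacle is twofold: first, pinning down and citing correctly a parameterized source problem that is simultaneously \WP-hard and completely inapproximable; and second, checking in each of the three encodings that the one-directional gate constraints and the auxiliary variables never force unintended gate values — most delicately in the $\nae$ case, where implications and ternary ``demand'' clauses must be built out of the lone self-dual ternary relation $\nae$ together with the added constants — so that the deletion optimum matches the circuit optimum exactly.
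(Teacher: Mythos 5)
Your proposal is correct and, for the most part, follows the paper's own architecture: trivial algorithms for $\mathrm{II_0}$/$\mathrm{II_1}$, the \almost\ FPT algorithm for $\mathrm{ID_2}$, the known polynomial-time constant-factor approximation for IHS-$B$ languages for $\mathrm{IS_{00}}$/$\mathrm{IS_{10}}$ (your one-line claim that $\Gamma\subseteq\mathrm{IS_{00}}$ forces $\Gamma$ to be IHS-$B$+ for some finite $B$ is exactly Lemma~\ref{IHSBB0} and does need the small argument that a finite $\Gamma$ lands in some $\mathrm{IS_{00}^B}$), the base-language equivalences with \nc/\odds\ plus Lemma~\ref{adding_constants} for $\mathrm{IL_2}$/$\mathrm{IL_3}$, and the Monotone Circuit Satisfiability reduction for $\mathrm{IV_2}$/$\mathrm{IE_2}$, all transported via Corollaries~\ref{FPA_spreads} and~\ref{cor:irredbase}. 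The ``main obstacle'' you flag of pinning down a \WP-hard, completely inapproximable source problem is already settled by the cited result of Marx (Theorem~\ref{monotone_bound}, Corollary~\ref{monotone_bound_2}), and your gate encoding coincides with Lemma~\ref{monotone_reduction}. The one genuine divergence is $\mathrm{IN_2}$: you propose adding constants via Lemma~\ref{adding_constants}, observing that $\{\nae,x,\bar{x}\}$ generates all Boolean relations, and rerunning the circuit encoding. This route is sound --- and the delicate NAE-specific gadget checking you worry about is unnecessary, since Theorem~\ref{irredundant} plus Lemma~\ref{pp_reduction} transfer the already-proved hardness of $\dcsp(\{\bar{x}\vee\bar{y}\vee z,x,\bar{x}\})$ generically --- but the paper's Lemma~\ref{IN2} is far shorter: $\csp(\nae)$ is NP-complete by Schaefer, so a constant-factor \fpa\ algorithm run with parameter $0$ would decide satisfiability in polynomial time, giving non-approximability already under $\mathrm{P}\neq\mathrm{NP}$ (which implies the stated $\fpt\neq\WP$ formulation). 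So your approach buys nothing extra for $\mathrm{IN_2}$ and proves a formally weaker statement, but it is correct; everything else is essentially the paper's proof.
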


 Looking at the co-clone lattice, it is easy to see that Theorem~\ref{trichotomy} covers all cases.
It is also easy to check that Theorem~\ref{th:mainintro} formulated in the introduction follows from Theorem~\ref{trichotomy}.
Theorem~\ref{trichotomy} is proved the following way.
Statement 1 is proved in Section~\ref{sec:csps-with-fpa} (Lemma~\ref{cor:01-valid}, and Corollaries~\ref{2sat} and \ref{IHSB}).
Statement 2 is proved in Section~\ref{affine_stuff} (Theorem~\ref{affine_equiv}) and in Section~\ref{sec:odd-set-probably}.
Statement 3 is proved in Section~\ref{sec:horn-dual-horn} (Corollary \ref{cor:IE2IV2} and Lemma \ref{IN2}).

\section{CSPs with \fpa\ algorithms}\label{sec:csps-with-fpa}

We prove the first statement of Theorem~\ref{trichotomy} by going through co-clones one by one.
As every relation of a $0$-valid \dcsp\ is always satisfied by the all $0$ assignment, and every relation of a $1$-valid \dcsp\ is always satisfied by the all $1$ assignment, we have a trivial algorithm for these problems.

\begin{lemma}\label{cor:01-valid}
  If $\la \Gamma \ra \subseteq \mathrm{II_0}$ or
  $\la \Gamma \ra \subseteq \mathrm{II_1}$, then $\dcsp(\Gamma)$ is polynomial-time solvable.
\end{lemma}

Consider now the co-clone \textup{ID}$_2$. \almost\ is defined as $\dcsp(\Gamma(\twosat))$, where  $\Gamma(\twosat) = \{x \vee y, x \vee \bar y, \bar x \vee \bar y\}$.
\begin{theorem}[\cite{almost2satfpt}]
  \textsc{Almost 2-SAT} is fixed-parameter tractable.
\end{theorem}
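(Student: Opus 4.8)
The plan is to prove that \almost\ is fixed-parameter tractable by \emph{iterative compression}, reducing the problem to a ``disjoint compression'' subproblem that is then solved through a separator analysis of the implication digraph of the $2$-CNF formula; this is in essence the argument of Razgon and O'Sullivan~\cite{almost2satfpt}. Given an instance $(F,k)$ with clauses $c_1,\dots,c_m$ (if $m\le k$ delete everything, so assume $m>k$), process the prefixes $F_i=\{c_1,\dots,c_i\}$ in increasing order, maintaining a deletion set $S_i$ with $|S_i|\le k$ and $F_i\setminus S_i$ satisfiable, starting from $S_k=\{c_1,\dots,c_k\}$. Given $S_i$, the set $S_i\cup\{c_{i+1}\}$ has size at most $k+1$ and is a deletion set for $F_{i+1}$, so it suffices to \emph{compress} it: given $F$ and a deletion set $T$ with $|T|\le k+1$, find a deletion set of size at most $k$ or certify that none exists. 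If compression fails for some $F_{i+1}$, then since $F_{i+1}\subseteq F$ no size-$k$ deletion set exists for $F$ either, so we reject. The whole problem thus reduces to $m$ calls to the compression routine, and it remains to implement one call in time $2^{O(k)}\cdot n^{O(1)}$.

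For the compression routine I would guess the intersection with the given solution. A hypothetical deletion set $S$ of size at most $k$ partitions as $(S\cap T)\uplus(S\setminus T)$; iterating over all $2^{|T|}\le 2^{k+1}$ choices of $T':=S\cap T$, delete the clauses of $T'$ outright and search for a deletion set $W:=S\setminus T$ of the residual formula $F':=F\setminus T'$ with $|W|\le\ell:=k-|T'|$ and $W\cap Z=\emptyset$, where $Z:=T\setminus T'$. Note that $Z$ is itself a deletion set of $F'$, since $F'\setminus Z=F\setminus T$ is satisfiable. So the task reduces to the \emph{disjoint compression} problem: given a $2$-CNF $F'$, a deletion set $Z$ of it, and an integer $\ell$, find a deletion set of $F'$ of size at most $\ell$ that is disjoint from $Z$, or report that none exists.

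To solve disjoint compression, fix a satisfying assignment $\varphi$ of $F'\setminus Z$. Since the sought $W$ avoids $Z$, every clause of $Z$ survives, so $W$ must repair all conflicts between $\varphi$ and $Z$. Passing to the implication digraph $D$ on the literals $\{x,\bar x\}$, in which each surviving clause $(u\vee v)$ contributes the arcs $\bar u\to v$ and $\bar v\to u$, one shows that, relative to the reference assignment $\varphi$, the residual formula $F'\setminus W$ is satisfiable exactly when $W$ cuts every directed path witnessing a forced contradiction. This turns the search for a minimum such $W$ into a minimum $s$--$t$ cut problem in an auxiliary digraph (clauses becoming paired arcs, handled by a standard gadget) \emph{subject to} the cut avoiding the clauses of $Z$. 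Ignoring the avoidance constraint, the minimum is computed by max-flow; if it exceeds $\ell$ we discard the current guess, and otherwise we look for a valid disjoint solution among the ``important'' separators of size at most $\ell$, of which there are $2^{O(\ell)}$ many and which can be enumerated by bounded branching. The technical heart of the proof, and the step I expect to be the main obstacle, is precisely this: an uncrossing/pushing argument showing that whenever a deletion set of size $\le\ell$ disjoint from $Z$ exists, some important separator of size $\le\ell$ translates back to such a set, together with the bound on the branching tree.

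Putting the pieces together, each compression call runs in time $2^{O(k)}\cdot n^{O(1)}$ and there are $m$ of them, so \almost\ is solved in time $2^{O(k)}\cdot n^{O(1)}$ and is fixed-parameter tractable. (An alternative, arguably cleaner, route would be to establish a parameter-preserving equivalence of \almost\ with \textsc{Vertex Cover Above LP} and invoke the LP-guided branching algorithm for the latter; the separator approach above is the one closest to~\cite{almost2satfpt}.)
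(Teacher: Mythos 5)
The paper does not prove this theorem at all: it is imported verbatim from Razgon and O'Sullivan \cite{almost2satfpt}, so there is no in-paper argument to compare against. Measured as a standalone proof, your proposal has a genuine gap. The outer layer is fine and standard: iterative compression over the clauses, guessing the intersection $T'=S\cap T$ at a cost of $2^{|T|}$ branches, and reducing to the disjoint-compression problem (find a deletion set of $F\setminus T'$ of size at most $\ell$ avoiding $Z=T\setminus T'$, using that $Z$ itself is a deletion set). But the entire difficulty of the theorem lives in the step you explicitly defer: showing that disjoint compression is FPT. Your sketch asserts that, relative to a satisfying assignment $\varphi$ of $F'\setminus Z$, satisfiability of $F'\setminus W$ is ``exactly'' a directed cut condition in the implication digraph, and that a minimum such cut avoiding $Z$ can be found by enumerating $2^{O(\ell)}$ important separators via an uncrossing/pushing argument. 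Neither claim is established, and neither is routine: deleting one clause removes a \emph{pair} of arcs (so the object to be cut is not an ordinary arc set), the ``forced contradiction'' paths do not have a single fixed source/sink pair, and the avoidance constraint on $Z$ is exactly the feature that breaks a naive min-cut argument. Moreover, important separators guarantee coverage for separators that are \emph{pushed toward the sink}, and one must prove that some solution can be assumed to have this form in the clause-deletion setting; this is the lemma you flag as ``the main obstacle'' and it is not supplied. As it stands, the argument proves FPT membership only modulo that unproved core, i.e., it does not yet prove the theorem.

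For what it is worth, your skeleton is the right family of ideas, but the cited proof of Razgon and O'Sullivan does not literally go through important separators; it solves the disjoint problem by a separate, rather delicate argument on the implication structure, and the cleaner modern routes are either the important-separator analysis worked out in full (as in later textbook treatments) or the parameter-preserving reduction to \textsc{Vertex Cover} above LP/matching that you mention in passing. Any of these would close the gap, but one of them has to be carried out; the statement cannot be obtained from the outer iterative-compression shell alone.
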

Since every bijunctive relation can be pp-defined by \twosat, the
constant-factor FP-approxi\-mability of bijunctive languages easily follows
from the FPT algorithm for \almost\ and from Corollary~\ref{FPA_spreads}.
\begin{corollary}\label{2sat}
If $\la \Gamma \ra \subseteq \mathrm{ID_2}$, then $\dcsp(\Gamma)$ has a constant-factor \fpa\ algorithm.
\end{corollary}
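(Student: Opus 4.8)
The plan is to show that every bijunctive $\dcsp(\Gamma)$ A-reduces to $\almost$, and then to read off the conclusion from the fixed-parameter tractability of $\almost$ together with the fact that an exact FPT algorithm is, trivially, a constant-factor (ratio-$1$) FPA algorithm.

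First I would verify that $\Gamma(\twosat) = \{x \vee y,\ x \vee \neg y,\ \neg x \vee \neg y\}$ is an appropriate base of the co-clone $\mathrm{ID_2}$ for applying Corollary~\ref{FPA_spreads}. Starting from the base $\{x \oplus y,\ x \to y\}$ of $\mathrm{ID_2}$ listed in Table~\ref{cc_table}, I would check mutual $\exists\wedge$-definability: on one side $x \oplus y = (x \vee y) \wedge (\neg x \vee \neg y)$, and $x \to y$ is the relation $x \vee \neg y$ with its two coordinates swapped; on the other side $x \vee y = \exists z\,(x \oplus z) \wedge (z \to y)$ and $\neg x \vee \neg y = \exists z_1 z_2\,(x \oplus z_1) \wedge (y \oplus z_2) \wedge (z_1 \vee z_2)$, while $x \vee \neg y$ is again $x \to y$ on permuted coordinates. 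Hence $\langle \Gamma(\twosat) \rangle = \mathrm{ID_2}$, i.e.\ $\Gamma(\twosat)$ is a base for $\mathrm{ID_2}$. I would also note that the equality relation is $\exists\wedge$-definable (indeed $\wedge$-definable) from $\Gamma(\twosat)$ via $EQ(x,y) = (x \to y) \wedge (y \to x)$, each conjunct being the relation $x \vee \neg y$ on suitably ordered variables, so the equality caveat in Corollary~\ref{FPA_spreads} is satisfied.

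With this in hand, Corollary~\ref{FPA_spreads} applied with $C = \mathrm{ID_2}$ and $B = \Gamma(\twosat)$ yields, for every finite $\Gamma$ with $\langle \Gamma \rangle \subseteq \mathrm{ID_2}$, an A-reduction from $\dcsp(\Gamma)$ to $\dcsp(\Gamma(\twosat)) = \almost$. Since $\almost$ is FPT by~\cite{almost2satfpt}, it admits a ratio-$1$ (hence constant-factor) FPA algorithm: on input $(I,k)$, run the FPT algorithm, augmented by the standard self-reduction so that it actually exhibits a minimum deletion set, and either return this set when its size is at most $k$, or report that no deletion set of size at most $k$ exists. The proposition that A-reducibility to a problem with a constant-factor FPA algorithm yields a constant-factor FPA algorithm then transfers this to $\dcsp(\Gamma)$, which proves the corollary.

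I do not expect a genuine obstacle here; the only points that need care are the Post's-lattice bookkeeping of the first step — confirming that $\Gamma(\twosat)$ generates exactly $\mathrm{ID_2}$ and that equality is $\exists\wedge$-definable from it, so that Corollary~\ref{FPA_spreads} applies cleanly — and the routine observation that an exact FPT algorithm, equipped with self-reduction, counts as a constant-factor FPA algorithm.
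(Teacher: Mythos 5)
Your proof is correct and takes essentially the same route as the paper: the paper likewise invokes Corollary~\ref{FPA_spreads} after noting that the Table~\ref{cc_table} base $\{x \oplus y, x \rightarrow y\}$ of $\mathrm{ID_2}$ (and equality) is $\exists\wedge$-definable over $\Gamma(\twosat)$, and then concludes via the FPT algorithm for \almost. You merely spell out the definability bookkeeping in more detail than the paper does.
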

\begin{proof}
	We check in Table~\ref{cc_table} that $B = \{x \oplus y, x \rightarrow y\}$ is a base for the co-clone $\mathrm{ID_2}$. Relations in $B$ (and equality) can be $\exists \wedge$-defined over $\Gamma(\twosat)$, so the result follows from Corollary~\ref{FPA_spreads}.
\end{proof}

We consider now $\mathrm{IS_{00}}$ and $\mathrm{IS_{10}}$.
We first note that if $\la \Gamma \ra$ is in $\mathrm{IS_{00}}$ or
$\mathrm{IS_{10}}$, then the language is $k$-IHS-B+ or $k$-IHS-- for some $k \ge 2$.
\begin{lemma}\label{IHSBB0}
If $\la \Gamma \ra \subseteq \mathrm{IS_{00}}$, then there is an
integer $k \ge 2$ such that $\Gamma$ is $k$-IHS-B+.
If $\la \Gamma \ra \subseteq \mathrm{IS_{10}}$, then there is an
integer $k \ge 2$ such that $\Gamma$ is $k$-IHS-B--.
\end{lemma}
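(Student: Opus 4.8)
The plan is to prove the syntactic fact that every relation in the co-clone $\mathrm{IS_{00}}$ is a conjunction of clauses of the three shapes permitted for IHS-$B$+ languages — $\neg x$, $\neg x\vee y$, and $x_1\vee\dots\vee x_k$ — and then to read off a uniform bound $B$ from the finiteness of $\Gamma$ (the hypothesis gives $\Gamma\subseteq\la\Gamma\ra\subseteq\mathrm{IS_{00}}$, so it suffices to handle each $R\in\Gamma$ separately). Call a CNF \emph{good} if every clause has one of these three shapes. By Table~\ref{cc_table}, $\mathrm{IS_{00}}$ is generated by the base $\{\mathrm{OR}^m : m\ge 2\}\cup\{x,\bar x,x\to y\}$, and a pp-definition is built from these relations together with equality using only conjunction and existential quantification. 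Hence it is enough to verify three closure properties: (i) every base relation and the equality relation has a good definition — indeed $\mathrm{OR}^m$ is one positive clause, $x$ and $\bar x$ are units, $x\to y$ is $\neg x\vee y$, and $x=y$ is $(\neg x\vee y)\wedge(\neg y\vee x)$; (ii) a conjunction of good formulas is good, which is immediate; and (iii) the existential projection of a good formula is logically equivalent to a good formula.

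For (iii) I would use Davis--Putnam variable elimination: $\exists y\,\phi$ is logically equivalent to the conjunction of all clauses of $\phi$ not mentioning $y$ together with all resolvents obtained from a clause containing $y$ and a clause containing $\neg y$. In a good formula the clauses containing $\neg y$ are exactly the unit $\neg y$ and the implications $\neg y\vee z$, while the clauses containing $y$ positively are implications $\neg x\vee y$ and positive clauses $y\vee z_1\vee\dots\vee z_m$. The four possible resolvents are therefore $\neg x$, $\neg x\vee z$, $z_1\vee\dots\vee z_m$, and $z\vee z_1\vee\dots\vee z_m$, each again of good shape (tautological resolvents are discarded; an empty resolvent means the defined relation is empty, which the good formula $(x_1)\wedge(\neg x_1)$ captures), and in particular no resolvent is longer than the longer of its two parents. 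Eliminating the quantified variables of a pp-definition of $R$ one at a time thus yields a good CNF over the coordinates of $R$ only, whose clauses therefore have length at most $\operatorname{arity}(R)$. Taking $B=\max\{2,\max_{R\in\Gamma}\operatorname{arity}(R)\}$ shows that $\Gamma$ is IHS-$B$+.

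The second statement follows by complementing every coordinate: this bijection sends the base of $\mathrm{IS_{00}}$ to that of $\mathrm{IS_{10}}$ (each $\mathrm{OR}^m$ becomes $\mathrm{NAND}^m$, the two units are swapped, and an implication stays an implication up to renaming its coordinates), and it sends a good clause to an IHS-$B$- clause ($\neg x\mapsto x$, $\neg x\vee y\mapsto x\vee\neg y$, $x_1\vee\dots\vee x_k\mapsto\neg x_1\vee\dots\vee\neg x_k$), so the entire argument carries over verbatim. The only step needing genuine care is (iii): one must check that Davis--Putnam never produces a clause outside the three allowed shapes and that the degenerate cases (pure literals, tautological resolvents, the empty clause) are handled correctly; this is routine bookkeeping but it is the crux. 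Alternatively one could simply quote the known syntactic description of the co-clone $\mathrm{IS_{00}}$ from \cite{bases}, but the argument above has the advantage of also delivering the clause-length bound directly.
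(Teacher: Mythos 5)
Your proof is correct, but it takes a genuinely different route from the paper's. The paper argues structurally on the co-clone lattice: since $\Gamma$ is finite, $\la\Gamma\ra$ is finitely generated, and because $\mathrm{IS_{00}}$ (as well as $\mathrm{IS_{01}}$, $\mathrm{IS_{02}}$, $\mathrm{IS_{0}}$) has no finite base, $\la\Gamma\ra$ cannot equal any of these limit co-clones; inspecting the lattice then forces $\la\Gamma\ra\subseteq\mathrm{IS_{00}}^B$ for some finite $B$, which is read off as $\Gamma$ being IHS-$B$+. You instead prove the needed syntactic fact directly: the class of ``good'' CNFs (clauses $\neg x$, $\neg x\vee y$, $x_1\vee\dots\vee x_k$) contains the base relations and equality, is closed under conjunction, and --- the crux, which your Davis--Putnam case analysis of the four resolvent shapes correctly establishes --- under existential projection, so every $R\in\Gamma$ has a good definition over its own coordinates and $B=\max\{2,\max_{R\in\Gamma}\mathrm{arity}(R)\}$ works. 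Your route is longer but self-contained: it avoids invoking the classification of all co-clones below $\mathrm{IS_{00}}$, it yields an explicit bound on $B$, and it in effect re-proves the quantifier-free (plain conjunction-of-clauses) description of membership in $\mathrm{IS_{00}}^B$ that the paper's final step ``$\la\Gamma\ra\subseteq\mathrm{IS_{00}}^B$ implies IHS-$B$+'' quietly relies on. The paper's argument, by contrast, is a two-line appeal to known facts about Post's co-clone lattice (Table~\ref{cc_table} and Figure~\ref{classfigure}) and transfers verbatim to the dual case, just as your coordinate-complementation remark does.
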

\begin{proof}
	We note that there is no finite base for $\mathrm{IS_{00}}$, thus
	$\la \Gamma \ra$ is a proper subset of $\mathrm{IS_{00}}$ (see
	Table~\ref{cc_table}). There is no proper base for $\text{IS}_{01}$,
	$\text{IS}_{02}$, or $\text{IS}_0$ either.  It follows from the
	structure of the co-clone lattice (Figure~\ref{classfigure}) that
	there exists a finite $k\ge 2$ such that
	$\la \Gamma \ra \subseteq \mathrm{IS_{00}}^k$, which implies that
	$\Gamma$ is IHS-B+. The proof of the second statement is analogous.
\end{proof}

By Lemma~\ref{IHSBB0}, if $\la \Gamma \ra \subseteq \mathrm{IS_{00}}$, then $\Gamma$ is generated by the relations $\bar x, x \rightarrow y, x_1 \vee \dots \vee x_k$ for some $k\ge 2$. The \dcsp\ problem for this set of relations is known to admit a constant-factor approximation.
\begin{theorem}[\cite{BooleanBook}, Lemma~7.29]\label{LP_approx}
  $\dcsp(\bar x, x \rightarrow y, x_1 \vee \dots \vee x_k)$ has a $(k+1)$-factor approximation algorithm (and hence has a constant-factor FPA algorithm).
\end{theorem}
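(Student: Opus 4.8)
The plan is to prove the approximation bound by an LP relaxation together with randomized threshold rounding; the constant-factor \fpa\ consequence is then immediate, because here $k$ is a fixed constant depending only on $\Gamma$, and any polynomial-time algorithm returning a solution of size at most $(k+1)\cdot OPT$ is in particular a constant-factor \fpa\ algorithm.

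First I would write down the natural LP relaxation of the deletion problem. Introduce a variable $z_v\in[0,1]$ for each CSP variable $v$ and a ``deletion weight'' $d_C\ge 0$ for each constraint $C$, with the constraints $z_x\le d_C$ when $C=\neg x$, $z_x\le z_y+d_C$ when $C=(x\to y)$, and $\sum_i z_{x_i}+d_C\ge 1$ when $C=x_1\vee\dots\vee x_j$ (with $j\le k$); the objective is to minimise $\sum_C d_C$. A routine check shows that any deletion set $W$, together with a satisfying assignment $\sigma$ of the remaining instance, gives a feasible $0/1$ point of this LP of value $|W|$ (set $z_v=\sigma(v)$, $d_C=1$ iff $C\in W$), so the LP optimum is at most the optimum $OPT$ of the \dcsp\ instance; moreover the LP can be solved in polynomial time.

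Next comes the rounding. Let $(z^\ast,d^\ast)$ be an optimal LP solution, draw a threshold $\theta$ uniformly at random from $(0,\,1/(k+1)]$, and set $v$ to $1$ if $z^\ast_v\ge\theta$ and to $0$ otherwise. I would show that each constraint $C$ is violated by the rounded assignment with probability at most $(k+1)\,d^\ast_C$. For a unit clause $C=\neg x$ this follows from $z^\ast_x\le d^\ast_C$ (if $z^\ast_x\ge 1/(k+1)$ the probability is $1\le (k+1)d^\ast_C$, otherwise it equals $(k+1)z^\ast_x$). For an implication $C=(x\to y)$, violation requires $\theta\in(z^\ast_y,z^\ast_x]$, an interval of length at most $z^\ast_x-z^\ast_y\le d^\ast_C$, which has probability at most $(k+1)d^\ast_C$. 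For a positive clause, violation means every literal is below $\theta$, which forces $\max_i z^\ast_{x_i}<1/(k+1)$ and hence $d^\ast_C\ge 1-k\max_i z^\ast_{x_i}$; a short computation using $\max_i z^\ast_{x_i}<1/(k+1)$ then yields the desired bound. Summing, the expected number of violated constraints is at most $(k+1)\sum_C d^\ast_C\le (k+1)\,OPT$, and since the rounded assignment satisfies every constraint that is not violated, deleting exactly the violated ones is a feasible deletion set of expected size at most $(k+1)\,OPT$.

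Finally I would derandomise: the set of violated constraints depends on $\theta$ only through which of the intervals delimited by the distinct values $z^\ast_v$ (intersected with $(0,1/(k+1)]$) contains $\theta$, so it is enough to try the polynomially many candidate thresholds (the distinct values $z^\ast_v$, plus $1/(k+1)$) and output the smallest resulting deletion set; by the expectation bound, one of them has size at most $(k+1)\,OPT$. The only genuinely delicate point is the implication constraints: a single fixed threshold need not bound their violation in terms of $d^\ast_C$ at all, and it is exactly the averaging over a random threshold ranging over an interval of length $1/(k+1)$ that lets the implications, the unit clauses, and the positive clauses of arity at most $k$ all be absorbed into the same factor $k+1$.
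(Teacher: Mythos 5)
Your argument is correct, and it is essentially the standard proof of this result: the paper itself gives no proof but cites it as Lemma~7.29 of Creignou et al.\ \cite{BooleanBook}, which is established by exactly this kind of LP relaxation with threshold rounding. Your randomized threshold over $(0,1/(k+1)]$ correctly handles the implication constraints (where a single fixed threshold would indeed fail, as you note), and for the positive clauses the computation is even simpler than you suggest, since $1-(k+1)\max_i z^\ast_{x_i}\le 1-k\max_i z^\ast_{x_i}\le d^\ast_C$, so their violation probability is already at most $d^\ast_C$; the derandomization over the polynomially many candidate thresholds and the observation that a polynomial-time $(k+1)$-approximation is in particular a constant-factor \fpa\ algorithm (with $k$ fixed by $\Gamma$) are both fine.
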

Now Theorem~\ref{LP_approx} and Corollary~\ref{FPA_spreads} imply that
there is a constant-factor FPA algorithm for $\dcsp(\Gamma)$
whenever $\la \Gamma \ra$ is in the co-clone $\mathrm{IS_{00}}$ or
$\mathrm{IS_{10}}$ (note that equality can be \ea-defined using
$x\rightarrow y$). In fact, the resulting
algorithm is a polynomial-time approximation algorithm:
Theorem~\ref{LP_approx} gives a polynomial-time algorithm and this is
preserved by Corollary~\ref{FPA_spreads}.

\begin{corollary}\label{IHSB}
If $\la \Gamma \ra \subseteq \mathrm{IS_{00}}$ or $\la \Gamma \ra \subseteq \mathrm{IS_{10}}$, then $\dcsp(\Gamma)$ has a constant-factor \fpa\ algorithm.
\end{corollary}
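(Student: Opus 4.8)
The plan is to assemble this from the ingredients already in place: Lemma~\ref{IHSBB0}, Theorem~\ref{LP_approx}, Corollary~\ref{FPA_spreads}, and the fact that A-reductions preserve constant-factor FPA. First I would handle the case $\la \Gamma \ra \subseteq \mathrm{IS_{00}}$. By Lemma~\ref{IHSBB0} there is a fixed integer $B \ge 2$ with $\Gamma$ being IHS-$B$+; equivalently $\la \Gamma \ra \subseteq \la B_0 \ra$ for the finite language $B_0 = \{\neg x,\ x \to y,\ x_1 \vee \dots \vee x_B\}$, which is a base for the relevant finite co-clone. The point of passing through Lemma~\ref{IHSBB0} is exactly to replace the infinite co-clone $\mathrm{IS_{00}}$ (which has no finite base) by the finite language $B_0$, so that the finite-language problem $\dcsp(B_0)$ is well-defined.

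Next I would check the hypothesis of Corollary~\ref{FPA_spreads}, namely that the equality relation is \ea-definable from $B_0$: this is immediate since $EQ(x,y) \equiv (x \to y) \wedge (y \to x)$, a conjunction of atomic formulas over $B_0$ with no existential quantifiers and no use of $EQ$. Corollary~\ref{FPA_spreads} then gives an A-reduction from $\dcsp(\Gamma)$ to $\dcsp(B_0)$. By Theorem~\ref{LP_approx}, $\dcsp(B_0) = \dcsp(\neg x,\ x\to y,\ x_1\vee\dots\vee x_B)$ admits a $(B+1)$-factor polynomial-time approximation, hence in particular a constant-factor FPA algorithm. Composing the A-reduction with this algorithm (using the Proposition that A-reducibility preserves constant-factor FPA) yields a constant-factor FPA algorithm for $\dcsp(\Gamma)$, as desired.

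Finally, for the case $\la \Gamma \ra \subseteq \mathrm{IS_{10}}$, I would run the completely analogous argument: Lemma~\ref{IHSBB0} gives that $\Gamma$ is IHS-$B$- for some finite $B$, so $\la\Gamma\ra \subseteq \la B_1\ra$ with $B_1 = \{x,\ \neg x \vee y,\ \neg x_1 \vee \dots \vee \neg x_B\}$; equality is \ea-definable from $B_1$ via $\neg x \vee y$ and its converse; Corollary~\ref{FPA_spreads} gives an A-reduction to $\dcsp(B_1)$; and $\dcsp(B_1)$ has a constant-factor approximation by the dual of Theorem~\ref{LP_approx} (or by applying Theorem~\ref{LP_approx} after flipping every variable, which turns an IHS-$B$- instance into an IHS-$B$+ instance and preserves deletion sets). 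There is no real obstacle here — the substantive work was already done in Lemma~\ref{IHSBB0} (pinning down a finite $B$ from the co-clone structure) and in Theorem~\ref{LP_approx} (the LP-rounding approximation) — so the only thing to be careful about is confirming that $B$ is finite and that equality is genuinely \ea-definable (not merely pp-definable) so that Corollary~\ref{FPA_spreads} applies cleanly.
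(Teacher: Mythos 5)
Your proposal is correct and follows essentially the same route as the paper: Lemma~\ref{IHSBB0} to pin down a finite $B$ and pass to the finite IHS-$B$ base, the observation that equality is \ea-definable from $x\to y$ so that Corollary~\ref{FPA_spreads} applies, and Theorem~\ref{LP_approx} (with the dual/flipped version for IHS-$B$-) composed via the A-reduction. The paper's own proof is exactly this chain, so there is nothing to add.
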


Note that Theorem~7.25 in \cite{BooleanBook} gives a complete classification of Boolean \dcsp{}s with respect to constant-factor approximability. As mentioned, these \dcsp{}s also admit a constant-factor approximation algorithm. The reason we need Corollary~\ref{IHSB} is to have the characterization in terms of the co-clone lattice.

\section{CSPs equivalent to \odds}\label{affine_stuff}
In this section we show the equivalence of several problems under A-reductions. We identify CSPs that are equivalent to the following well-known combinatorial problems. In the \textsc{Nearest Codeword} (\textsc{NC}) problem, the input is an $m \times n$ $0/1$-matrix $A$, and an $m$-dimensional vector $b$. The output is an $n$-dimensional vector $x$ that minimizes the Hamming distance between $Ax$ and $b$.
In the \textsc{Odd Set} problem, the input is a set-system $\cS = \{S_1,S_2,\ldots,S_m\}$ over universe $U$. The output is a subset $T \subseteq U$ of minimum size such that every set of $\cS$ is hit an odd number of times by $T$, that is, $\forall i \in [m]$, $|S_i \cap T|$ is odd.

\eoset is the same problem as \odds, except that for each set we can specify whether it should be hit an even or odd number of times (the objective is the same as in \odds: find a subset of minimum size satisfying the requirements).
 We show that there is a cost-preserving reduction from \eoset to \odds.

\begin{lemma}\label{even_kaputt}
	There is a cost-preserving reduction from \eoset to \odds.
\end{lemma}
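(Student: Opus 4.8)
The plan is to reduce an instance of \eoset\ to an instance of \odds\ by replacing each even-constraint with a gadget that forces the same parity behaviour using only odd-constraints. First I would handle the trivial direction: an instance of \odds\ is already an instance of \eoset\ (with no even sets), so the identity reduction works and contributes nothing to the cost. For the forward direction, the key idea is a standard parity trick: if a set $S$ must be hit an \emph{even} number of times, introduce a single fresh universe element $u_S$ and replace the even-constraint on $S$ by the odd-constraint on $S \cup \{u_S\}$. Then $|(S\cup\{u_S\}) \cap T|$ is odd if and only if $|S \cap T| + |\{u_S\}\cap T|$ is odd; but since $u_S$ appears in no other set, we must be careful that putting $u_S$ into $T$ does not come for free.

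The fix is to make $u_S$ expensive to include, which we achieve by taking many copies. Concretely, for an even set $S$ I would introduce $m+1$ (or any number exceeding the number of sets) fresh elements $u_S^1,\dots,u_S^{m+1}$ together with $m+1$ singleton odd-constraints $\{u_S^1\},\dots,\{u_S^{m+1}\}$ forcing each $u_S^i \in T$, and replace the even-constraint on $S$ by the odd-constraint on $S \cup \{u_S^1,\dots,u_S^{m+1}\}$. Each forced singleton contributes exactly one to $|T|$, so for a fixed set of original choices every feasible $T$ must contain all the $u_S^i$; the net effect is that the new odd set $S \cup \{u_S^1,\dots\}$ is hit $|S\cap T| + (m+1)$ times, which is odd precisely when $|S\cap T|$ is even (choosing the parity of the padding to match). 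The function $F$ builds this instance; the function $G$ simply restricts a solution $T'$ of $F(I)$ to the original universe $U$. The cost bookkeeping is: $\mathrm{OPT}(F(I)) = \mathrm{OPT}(I) + (m+1)\cdot(\text{number of even sets})$ where this additive term is a fixed constant determined by $I$, so this is \emph{not} yet cost-preserving in the sense of the definition, which requires $\mathrm{OPT}(F(I))=\mathrm{OPT}(I)$.

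To get an exactly cost-preserving reduction I would instead use the undeletable-constraint formulation: model the forced singletons as undeletable, i.e., reduce first to a variant analogous to \dcspx, and then invoke a cost-preserving reduction from the starred version back to the plain one, exactly as Lemma~\ref{lem:reducerestricted} does for \dcsp. Alternatively — and this is cleaner — observe that \odds\ can itself be seen as a \dcsp\ over an affine language, so the undeletable machinery of Lemma~\ref{adding_constants} and Lemma~\ref{lem:reducerestricted} applies directly: make the parity-padding element a single variable pinned by an undeletable unit constraint, and then lift. Either way, $F$ maps a feasible instance to a feasible instance of the same optimum, and $G$ maps back a solution to one of no larger cost by discarding the (now undeletable, hence untouched) padding part.

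The main obstacle is precisely this cost-preservation subtlety: a naive gadget shifts the optimum by a constant, which is fine for A-reductions but violates the stronger cost-preserving requirement claimed in the lemma statement. So the real content is arranging the gadget so that the padding elements are \emph{forced into every feasible solution for free} in the accounting — either by routing through an undeletable-constraint intermediate problem and citing the established cost-preserving reduction that eliminates undeletables, or by choosing the gadget so that the forced part has size exactly zero in the optimization count (e.g., by treating pinned elements as part of the instance structure rather than as chosen elements). I expect the write-up to spend most of its effort verifying that no solution of $F(I)$ can ``cheat'' by leaving out a padding singleton, which is where the ``many copies'' / undeletability argument does its work, mirroring the $m+1$-copies trick already used in the proof of Lemma~\ref{lem:reducerestricted}.
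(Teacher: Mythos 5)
There is a genuine gap. Your gadget forces fresh padding elements into every feasible hitting set, so the optimum of $F(I)$ is $\mathrm{OPT}(I)$ plus the number of padding elements. You recognize this breaks cost preservation, but your proposed patches do not repair it: Lemma~\ref{lem:reducerestricted} concerns \emph{undeletable constraints} in $\dcspx$, where the cost is the number of deleted constraints; in \odds\ the cost is the number of \emph{selected universe elements}, and there is no analogous mechanism that makes a forced element free. Routing through $\dcspx$ and back via Lemma~\ref{lem:reducerestricted} lands you in $\dcsp$, not in \odds, which defeats the purpose -- this lemma is used inside the $(1)\Rightarrow(2)$ step of Theorem~\ref{affine_equiv} precisely to reach the pure set-system problem. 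Likewise, ``treating pinned elements as part of the instance structure'' changes the problem definition. Worse, the additive shift is not even bounded: with your construction it is on the order of $(m+1)$ times the number of even sets, which can dwarf $\mathrm{OPT}(I)=k$, so the map does not even yield an A-reduction (an $r$-approximate solution of $F(I)$, after discarding the padding, can be arbitrarily far from $\alpha r$-approximate for $I$).

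The missing idea is to flip parity at \emph{zero} cost by reusing a set that is already required to be hit oddly. The paper's proof does exactly this: if all sets are even the empty set is optimal; otherwise fix an arbitrary odd set $S_o$, keep every odd set, and replace each even set $S_e$ by the symmetric difference $S_e \triangle S_o$. Since $|W \cap (S_e \triangle S_o)| \equiv |W\cap S_e| + |W\cap S_o| \pmod 2$ and $S_o$ must be hit an odd number of times, $S_e \triangle S_o$ is hit oddly exactly when $S_e$ is hit evenly. The universe is unchanged and the very same set $W$ is feasible for $I$ if and only if it is feasible for $I'$, so the optima coincide exactly and $G$ is the identity -- no padding, no forced elements, no cost bookkeeping needed.
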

\begin{proof}
	Let $I$ be the instance of \eoset. If all sets in $I$ are even sets, then the empty set is an optimal solution. Otherwise, fix an arbitrary odd set $S_o$ in $I$. We obtain an instance $I'$ of \odds by introducing every odd set of $I$ into $I'$, and for each even set $S_e$ of $I$, we introduce the set $S_e \triangle S_o$ into $I'$, where $\triangle$ denotes the symmetric difference of two sets. This completes the reduction.
	
	Let $W$ be a solution of size $k$ for $I$. We claim that $W$ is also a solution of $I'$. Then those sets of $I'$ that correspond to odd sets of $I$ are obviously hit an odd number of times by $W$. We have to show that the remaining sets of $I'$ are also hit an odd number of times. Let $T = S_e \triangle S_o$ be such a set for some even set $S_e$ of $I$ and the fixed set $S_o$.	Let $A = S_o \setminus S_e$, $B = S_e \cap S_o$, and $C = S_o \setminus S_e$. If $W$ hits $A$ an even number of times, then $B$ must be hit an odd number of times (since $A \cup B = S_o$), and therefore $C$ is hit an odd number of times (as $B \cup C = S_e$). Since $T = S_e \triangle S_o = A \cup C$, $T$ is hit an odd number of times. If $W$ hits $A$ an odd number of times, then $B$ must be hit and even number of times, therefore $C$ must be hit an even number of times, and therefore $T$ is hit an odd number of times.
	
	Conversely, let $W'$ be a solution of size $k$ for $I'$. We claim that $W'$ is also a solution for $I$.  Odd sets of $I$ are clearly hit an odd number of times by $W'$.
	Let $S_e$ be an even set of $I$. We show that $S_e$ is hit an even number of times. The fixed set $S_o$ (also in the instance $I'$) and $S_e \triangle S_o$ are both hit an odd number of times by $W'$. Define sets $A,B,C$ as above. If $W'$ hits $C$ an odd number of times, then $A$ must be hit an even number of times, as $A \cup C = S_e \triangle S_o$ is hit an odd number of times. Since $S_o = B \cup A$ is hit an odd number of times, $B$ is hit an odd number of times. Since $S_e = C\cup B$, $S_e$ is hit an even number of times. The case when $W'$ hits $C$ an even number of times can be analyzed is similarly.	
\end{proof}

We define the relations $\even^m = \left \{(a_1,\dots,a_m)
  \in \{0,1\}^m : \sum_{i=1}^m a_i \text{ is even} \right \}$, $\odd^m
= \left \{(a_1,\dots,a_m) \in \{0,1\}^m : \sum_{i=1}^m a_i \text{ is
    odd} \right \}$, and the languages $B_2 = \{\even^4,x,\bar{x}\}$, $B_3 = \{\even^4,x
\oplus y\}$. Note that $B_2$ and $B_3$ are bases for the co-clones
$\mathrm{IL_2}$ and $\mathrm{IL_3}$, respectively.

\begin{theorem}\label{affine_equiv}
  The following problems are equivalent under cost-preserving reductions:
(1) \nearc, (2) \odds, (3) $\dcsp(B_2)$, and (4) $\dcsp (B_3)$.
\end{theorem}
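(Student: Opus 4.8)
The plan is to close a cycle of cost-preserving reductions
\[
\dcsp(B_3)\ \longrightarrow\ \nc\ \longrightarrow\ \mbox{\textsc{Odd Set}}\ \longrightarrow\ \dcsp(B_2)\ \longrightarrow\ \dcsp(B_3).
\]
Since cost-preserving reductions compose, this already gives the pairwise equivalence of all four problems, so it suffices to supply these four arrows. Two of them are pure bookkeeping. For $\dcsp(B_3)\to\nc$ I would note that every relation of $B_3$ is affine, so a $\dcsp(B_3)$-instance \emph{is} a linear system over $\gf$ --- an $\even^4$-constraint being the equation $v_1\oplus v_2\oplus v_3\oplus v_4=0$ and an $(x\oplus y)$-constraint the equation $v_1\oplus v_2=1$ --- and, taking one row per constraint, the number of constraints violated by an assignment $x$ equals the Hamming distance between $Ax$ and $b$; this rewriting, with $G(I,x)=x$ read as the set of violated constraints, is cost-preserving. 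For $\dcsp(B_2)\to\dcsp(B_3)$: since $B_2=\{\even^4,x,\bar x\}\subseteq B_3\cup\{x,\bar x\}$, a $\dcsp(B_2)$-instance is in particular a $\dcsp(B_3\cup\{x,\bar x\})$-instance; and $B_3$ is self-dual (complementing all four coordinates of an even-weight $4$-tuple keeps its weight even, and complementing both coordinates of $x\oplus y$ preserves the disequality) and contains $x\oplus y$, so Lemma~\ref{adding_constants} supplies a cost-preserving reduction $\dcsp(B_3\cup\{x,\bar x\})\to\dcsp(B_3)$.

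For the reduction from \odds to $\dcsp(B_2)$ I would build an instance of $\dcspx(B_2)$ and then apply Lemma~\ref{lem:reducerestricted}. Given sets $S_1,\dots,S_m$ over a universe $U$, take variable set $U$ and, for each $i$, install an \emph{undeletable} gadget enforcing $\bigoplus_{u\in S_i}u=1$: this parity of arbitrary arity is realised using $\even^4$-constraints, the constants $x$ and $\bar x$, and a chain of fresh auxiliary variables carrying running partial sums (the standard splitting of a wide $\gf$-equation into width-$4$ equations), with every constraint of the gadget declared undeletable; the gadget is satisfiable precisely when $S_i$ is hit an odd number of times by $\{u : u = 1\}$, and the auxiliary variables can always be set consistently. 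For each $u\in U$, also add one \emph{deletable} constraint $\bar u$. Then the only deletable constraints are the $\bar u$'s, the surviving assignment must satisfy every parity gadget, and deleting $\bar u$ costs exactly one unit and is precisely what permits $u$ to take value $1$; hence minimum deletion sets correspond to minimum-weight $t$ satisfying $\bigoplus_{u\in S_i}t_u=1$ for all $i$, i.e.\ to minimum odd transversals, and the correspondence is cost-preserving.

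The step with real content is the reduction from $\nc$ to \odds. An instance $(A,b)$ of $\nc$ asks for the minimum Hamming weight of a vector in the coset $W=b+V$, where $V$ is the column space of $A$ (if $b\in V$ the minimum is $0$ and I output a trivial empty instance of \odds). An instance of \odds with $0/1$ incidence matrix $M$ asks for the minimum weight of a vector in $\{t : Mt=\mathbf 1\}$, so it is enough to present the coset $W$ in this form. I need $\ker M=V$ and $Mb=\mathbf 1$, i.e.\ the rows of $M$ must span $V^{\perp}$ while each row $r$ must satisfy $r\cdot b=1$; since $b\notin V=(V^{\perp})^{\perp}$, the functional $r\mapsto r\cdot b$ is not identically zero on $V^{\perp}$, so $\{r\in V^{\perp} : r\cdot b=1\}$ is the complement of a hyperplane of $V^{\perp}$ and therefore spans $V^{\perp}$; picking finitely many such rows gives $M$ with $\ker M=(V^{\perp})^{\perp}=V$ and $Mb=\mathbf 1$, whence $\{t : Mt=\mathbf 1\}=W$. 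The sets of the constructed instance are the supports of the rows of $M$, and from a solution $t$ (so $Mt=\mathbf 1$) one has $t\in W$, hence $t+b=Ax$ for a computable $x$ with $|Ax+b|=|t|$; set $G((A,b),t)=x$. Optima and costs then match exactly.

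I expect the main obstacle to be precisely this last reduction --- the fact that \emph{every} coset of $\{0,1\}^m$ is the solution set of some all-ones linear system $Mt=\mathbf 1$ over a $0/1$ matrix $M$ --- together with correctly handling the degenerate subcases $b\in V$ and $V^{\perp}=\{0\}$. Everything else is routine: rewriting affine CSPs as linear systems, simulating undeletable constraints via Lemma~\ref{lem:reducerestricted}, invoking Lemma~\ref{adding_constants} to adjoin constants to a self-dual language, and the chain gadget for wide parities, whose only verification is that its auxiliary variables impose no constraint on the original variables.
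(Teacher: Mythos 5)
Your proof is correct and follows the paper's cycle exactly: the same four arrows $\nc\Rightarrow\odds\Rightarrow\dcsp(B_2)\Rightarrow\dcsp(B_3)\Rightarrow\nc$, with the \odds-to-$\dcsp(B_2)$ step (undeletable wide-parity gadgets built from $\even^4$ plus deletable $\bar u$ constraints, then Lemma~\ref{lem:reducerestricted}), the $\dcsp(B_2)$-to-$\dcsp(B_3)$ step via Lemma~\ref{adding_constants}, and the $\dcsp(B_3)$-to-\nc\ step (constraints as rows of $A$, right-hand sides $0$/$1$) matching the paper's proof essentially verbatim. The one place you genuinely deviate is $\nc\Rightarrow\odds$: the paper first passes to the parity-check system $A^{\perp}z'=A^{\perp}c$, obtaining an \eoset\ instance, and then invokes Lemma~\ref{even_kaputt} (symmetric difference with a fixed odd set) to eliminate the even constraints; you instead choose the rows of the check matrix $M$ directly inside the coset $\{r\in V^{\perp}: r\cdot b=1\}$, arguing that this coset spans $V^{\perp}$ whenever $b\notin V$, so that the right-hand side is all-ones from the start and \eoset\ never appears. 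Your argument is sound (including the degenerate case $b\in V$, which you handle separately), and it is really the same symmetric-difference manipulation performed at the linear-algebra level rather than combinatorially; what it buys is a self-contained reduction that dispenses with the auxiliary problem and Lemma~\ref{even_kaputt}, at the cost of having to verify the spanning claim and the trivial subcases explicitly.
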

\begin{proof}
	\noindent $\mathbf{(1) \Rightarrow (2)}$: Let $A$ be an $m \times n$ generator matrix for the \textsc{Nearest} \textsc{Codeword} problem, and $b$ be an $m$-dimensional vector such that we want to find a codeword of Hamming distance at most $k$ from $b$. Let $C$ be the set of all codewords generated by $A$, i.e., vectors in the column space of $A$. Let $A^{\perp}$ be the $\ell \times m$ matrix whose rows form a basis for the subspace perpendicular to the column space of $A$. Then $w \in C$ if and only if $A^{\perp} w = 0$. Assume now that $z$ is a vector that differs from $b$ at most in $k$ positions. Then we can write $z = z' + c$, where the weight of $z'$ is the distance between $z$ and $z'$. To find such a $z$, we write $A^{\perp} (z'+b) = 0$, and now we wish to find a solution that minimizes the weight of $z'$. Observe that $A^{\perp} z' = A^{\perp} c$ (since we are working in $\gf$). This can be encoded as a problem where we have a ground set $U = \{1,\dots,m\}$, and sets $S_i$, $1 \leq i \leq \ell$, defined as follows. Element $j$ is in $S_i$ if $A^{\perp}(i,j) = 1$. We want to find a subset $W \subseteq U$ of size at most $k$ such that $S_i$ is hit an even number of times if the $i$-th element of the vector $A^{\perp} b$ is $0$, and an odd number of times if it is $1$. This is an instance of the \eoset\ problem. Using Lemma~\ref{even_kaputt}, we can further reduce this problem to \odds, and we are done.
	
	$\mathbf{(2) \Rightarrow (3)}$: Note that Lemma~1 in \cite{Crowston13} can be adapted to obtain the reduction from \odds to $\dcsp(B_2)$. We show that there is a cost-preserving reduction from \textsc{Odd Set} to $\dcspx(B_2)$ (and hence to $\dcsp(B_2)$ by Lemma~\ref{lem:reducerestricted}).
	First we $\exists\wedge$-express the relation $\odd^n$ using $B_2$. We use an induction on $n$. For $n = 2$, we have that $\odd^2(x_1,x_2) = \even^4(x_1,x_2,0,1)$. Assume we have a formula that defines $\odd^n$. Then observe that
	\[\odd^{n+1}(x_1,\dots,x_{n+1}) = \exists u \odd^n(x_1,\dots,x_{n-1},u) \wedge \even^3(u,x_n,x_{n+1}).\]
	
	The variables of the \dcsp\ instance $J$ are the elements of the ground set of the \textsc{Odd Set} instance $I$.
	For each set $\{y_1,\dots,y_s\}$ of $I$, we add an undeletable constraint $\odd^s(y_1,\dots,y_s)$ to $J$. Finally, for each variable $y$ that appears in a constraint, we add the constraint $\bar{y}$. It is easy to see that a hitting set of size $k$ for $I$ corresponds to a deletion set of size $k$ for $J$ (consisting of constraints of the form $\bar{y}$).
	
	$\mathbf{(3) \Rightarrow (4)}$:	This follows from Lemma~\ref{adding_constants}.
	
	$\mathbf{(4) \Rightarrow (1)}$: Let $I$ be the $\dcsp(B_3)$ instance, and assume it has $n$ variables and $m$ constraints. We define a \textsc{Nearest Codeword} instance $J$ as follows. The matrix $A$ has dimension $m \times n$, and columns are indexed by the variables of $I$. If the $i$-th constraint of $I$ is $\even^4(x_{j_1}, x_{j_2}, x_{j_3},x_{j_4})$, then the $i$-th row of $A$ has $1$-s in positions $j_1,j_2,j_3,j_4$, and the $i$-th entry of vector $b$ is $0$. If the $i$-th constraint of $I$ is $x_{k_1} \oplus x_{k_2}$, then the $i$-th row of $A$ has $1$-s in positions $k_1$ and $k_2$, and the $i$-th entry of $b$ is $1$. Clearly, a deletion set of size $k$ for $I$ corresponds to a solution of $J$ having distance $k$ from vector $b$.
\end{proof}

\odds has the so-called \emph{self-improvement} property.
Informally, a polynomial time (resp. fixed-parameter time) approximation within some ratio $r$ can be turned into a polynomial time (resp. fixed-parameter time) approximation within some ratio close to $\sqrt{r}$.

\begin{lemma}\label{lem:self-improvement}
If there is an $r$-approximation for \odds running in time $f(n,m,k)$ where $n$ is the size of the universe, $m$ the number of sets, and $k$ the size of an optimal solution, then for any $\varepsilon > 0$, there is a $(1+\varepsilon)\sqrt{r}$-approximation running in time $\max(f(1+n+n^2,1+m+nm,1+k+k^2),O(n^{1+\frac{1}{\varepsilon}}m))$.
\end{lemma}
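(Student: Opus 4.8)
The plan is to establish a \emph{squaring} construction for \odds. Given an instance $I$ with universe $[n]$ and sets $S_1,\dots,S_m$ (equivalently an incidence matrix $A\in\{0,1\}^{m\times n}$, so that a solution is a minimum-weight $x\in\{0,1\}^n$ with $Ax=\mathbf 1$ over $GF(2)$), I would build an instance $I'$ with $1+n+n^2$ elements and $1+m+nm$ sets so that (i) a solution of $I$ of size $k$ yields a solution of $I'$ of size $1+k+k^2$, and (ii) any solution of $I'$ of size $s$ can be converted in polynomial time into a solution of $I$ of size at most $\sqrt s$. Feeding $I'$ to the assumed $r$-approximation returns (since $I'$ is feasible whenever $I$ is, by (i)) a set of size at most $r\,(1+k+k^2)$ with $k=OPT(I)$, and the conversion then yields a solution of $I$ of size at most $\sqrt{r(1+k+k^2)}\le \sqrt r\, k\sqrt{1+2/k}$, which is at most $(1+\varepsilon)\sqrt r\, k$ as soon as $k>1/\varepsilon$ (using $\sqrt{1+2\varepsilon}\le 1+\varepsilon$). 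For $k\le 1/\varepsilon$ I would solve $I$ optimally by brute force over all subsets of $[n]$ of size at most $\lfloor 1/\varepsilon\rfloor$, checking each against the $m$ sets in time $O(nm)$, for a total of $O(n^{1+1/\varepsilon}m)$; running this first and invoking the squaring only when it certifies $OPT(I)>\lfloor 1/\varepsilon\rfloor$ gives the stated running time (assuming, as usual for a running-time bound, that $f$ is nondecreasing in each argument, so that $f$ evaluated at the true optimum $OPT(I')\le 1+k+k^2$ of $I'$ is bounded by $f(1+n+n^2,1+m+nm,1+k+k^2)$; the polynomial cost of building $I'$ and of the conversion is dominated by the cost of reading $I'$).

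For the construction I would take one special element $z$, elements $u_1,\dots,u_n$ (a first copy of the universe), and elements $w_{j,l}$ for $j,l\in[n]$ (the tensor square), with sets: the singleton $\{z\}$; for each $i\in[m]$ the set $D_i=\{u_j: j\in S_i\}$; and for each $i\in[m]$, $l\in[n]$ the set $C_{i,l}=\{z\}\cup\{u_l\}\cup\{w_{j,l}: j\in S_i\}$. Since $\{z\}$ must be hit an odd number of times, $z$ lies in every feasible solution; writing $y\in\{0,1\}^n$ and $X\in\{0,1\}^{n\times n}$ for the indicators of the chosen $u$- and $w$-elements, feasibility of a set containing $z$ then amounts to: the $D_i$ forcing $Ay=\mathbf 1$, and each $C_{i,l}$ forcing $1+y_l+(AX)_{i,l}$ to be odd, i.e.\ $(AX)_{i,l}=y_l$, i.e.\ $AX=\mathbf 1\, y^\top$. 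For (i): if $T$ solves $I$, set $y=\mathbf 1_T$ and $X=\mathbf 1_{T\times T}$; then $Ay=\mathbf 1$, and $(AX)_{i,l}=|S_i\cap T|\,[l\in T]\equiv[l\in T]=y_l$ because $|S_i\cap T|$ is odd, so $\{z\}\cup\{u_l:l\in T\}\cup\{w_{j,l}:j,l\in T\}$ is feasible and has size $1+k+k^2$. For (ii): let $T'$ be feasible for $I'$ with $|T'|=s$; it contains $z$, satisfies $Ay=\mathbf 1$ and $AX=\mathbf 1\,y^\top$, and $s=1+|y|+|X|$. Thus $y$ is already a solution of $I$, and for every $l$ with $y_l=1$ the column $X_{\cdot,l}$ satisfies $AX_{\cdot,l}=\mathbf 1$ and is also a solution of $I$. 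Since the columns $X_{\cdot,l}$ over the (at least $|y|$, and $|y|\ge 1$ as $Ay=\mathbf 1\ne 0$) indices $l$ with $y_l=1$ have total weight at most $|X|$, some such $l^\star$ has $|X_{\cdot,l^\star}|\le |X|/|y|$; returning the smaller of $y$ and $X_{\cdot,l^\star}$ gives a solution of $I$ of size $\min(|y|,\,|X|/|y|)\le\sqrt{|X|}\le\sqrt s$, using $\min(a,b/a)\le\sqrt b$. Combined with the approximation bound on $I'$ and the threshold argument above, this proves the lemma.

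The main point — and the only step that is not bookkeeping — is designing the gadget so that the squaring actually occurs: the first copy $y$ must simultaneously be forced to be a genuine solution of $I$ (through the $D_i$) and play the role of the right-hand-side pattern of the tensor block $X$ (through the $C_{i,l}$), while the extra element $z$ serves only to turn the naturally \emph{even} parity condition $(AX)_{i,l}+y_l=0$ into the \emph{odd}-hitting condition demanded by \odds. The remaining ingredients — feasibility of $I'$, the inequality $\min(a,b/a)\le\sqrt b$, the estimate $\sqrt{1+2\varepsilon}\le 1+\varepsilon$, and the choice of threshold $\lfloor 1/\varepsilon\rfloor$ — are routine.
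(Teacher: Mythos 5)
Your proposal is correct and takes essentially the same route as the paper: the identical squaring gadget (a forced special element, one plain copy of the universe, and $n$ tensored copies tied to it via sets of the form $\{e,x_i\}\cup\{x^i_h : x_h\in S_j\}$), the same $1+k+k^2$ accounting for the optimum of the squared instance, and the same brute-force cutoff at $k\le 1/\varepsilon$ giving the $O(n^{1+1/\varepsilon}m)$ term. Your extraction step (returning the lighter of the first-copy solution and the lightest column of the tensor block) is just a slightly more explicit rendering of the paper's stated property that a size-$s$ solution of the squared instance yields a size-$\sqrt{s}$ solution of the original.
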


\begin{proof}
The following reduction is inspired by the one showing the self-improvement property of \textsc{Nearest Codeword} \cite{Arora97}.
Let $\cS = \{S_1,\ldots,S_m\}$ be any instance over universe $U=\{x_1,\ldots,x_n\}$.
Let $\varepsilon > 0$ be any real positive value and $k$ be the size of an optimal solution.
We can assume that $k \geqslant \frac{1}{\varepsilon}$ since otherwise one can find an optimal solution by exhaustive search in time $O(n^{1+\frac{1}{\varepsilon}}m)$.
We build the set-system $\cS' = \cS \cup \bigcup_{i \in [n], j \in [m]} S_j^i \cup \{\{e\}\}$ over universe $U'=U \cup \bigcup_{i,h \in [n]} \{x^i_h\} \cup \{e\}$, where $e$ is a new element, such that $S_j^i=\{e,x_i\} \cup \{x^i_h $ $|$ $x_h \in S_j\}$.
Note that the size of the new instance is squared.
We show that there is a solution of size at most $k$ to instance $\cS$ if and only if there is a solution of size at most $1+k+k^2$ to instance $\cS'$.

If $T$ is a solution to $\cS$, then $T' = \{e\} \cup T \cup \{x^i_h $ $|$ $x_i,x_h \in T\}$ is a solution to $\cS'$.
Indeed, sets in $\cS \cup \{\{e\}\}$ are obviously hit an odd number of times.
And, for any $i \in [n]$ and $j \in [m]$, set $S_j^i$ is hit exactly once (by $e$) if $x_i \notin T$, and is hit by $e$, $x_i$, plus as many elements as $S_j$ is hit by $T$; so again an odd number of times.
Finally, $|T'|=1+|T|+|T|^2$.

Conversely, any solution to $\cS'$ should contain element $e$ (to hit $\{e\}$), and should intersect $U$ in a subset $T$ hitting an odd number of times each set $S_i$ ($\forall i \in [m]$).
Then, for each $x_i \in T$, each set $S^i_j$ with $j \in [m]$ is hit exactly twice by $e$ and $x_i$.
Thus, one has to select a subset of $\{x^i_1,\ldots,x^i_n\}$ to hit each set of the family $\{S^i_1,\ldots,S^i_m\}$ an odd number of times.
Again, this needs as many elements as a solution to $\cS$ needs.
So, if there is a solution to $\cS'$ of size at most $1+k+k^2$, then there is a solution to $\cS$ of size at most $k$.
In fact, we will only use the weaker property that if there is a solution to $\cS'$ of size at most $k$, then there is a solution to $\cS$ of size at most $\sqrt{k}$.

Now, assuming there is an $r$-approximation for \odds running in time $f(n,m,k)$, we run that algorithm on the instance $\cS'$ produced from $\cS$.
This takes time $f(1+n+n^2,1+m+nm,1+k+k^2)$ and produces a solution of size $r(1+k+k^2)$.
From that solution, we can extract a solution $T$ to $\cS$ by taking its intersection with $U$.
Finally, $T$ has size smaller than $\sqrt{r(1+k+k^2)} \leqslant \sqrt{r}(k+1)=(1+\frac{1}{k})\sqrt{r}k \leqslant (1+\varepsilon)\sqrt{r}k$.
\end{proof}

Repeated application of the self-improvement property in Lemma~\ref{lem:self-improvement} shows that any constant-ratio approximation implies the existence of $(1+\varepsilon)$-approximation for arbitrary small $\varepsilon>0$.
In a similar way, we can show that polylogarithmic approximation implies the existence of logarithmic approximation.
\begin{corollary}\label{cor:self-improvement}
\begin{enumerate}
\item  If \odds admits an FPA algorithm with some ratio $r\ge 1$, then, for any $\varepsilon > 0$, it also admits an FPA algorithm with ratio $1+\varepsilon$ and
\item  If \odds admits an FPA algorithm with ratio $\log ^c k$, then it also admits an FPA algorithm with ratio $\log k$.
\end{enumerate}
\end{corollary}

\begin{proof}
We observe that for any $r' > 1$, there exists an $\varepsilon > 0$ such that $(1+\varepsilon)\sqrt{(1+\varepsilon)\sqrt{r'}} \leqslant \sqrt{r'}$.
Thus, applying twice the reduction of Lemma~\ref{lem:self-improvement}, we can improve any fixed-parameter $r'$-approximation to a fixed-parameter $\sqrt{r'}$-approximation.
Therefore, starting with an $r$-approximation, we can repeatedly apply the self-improvement property a constant number of times to obtain an FPA algorithm with ratio arbitrarily close to $1$. Similarly, starting with a $\log^c k$-approximation, applying the self-improvement property a constant number of times gives a $\log k$-approximation. Note that we are repeating the reduction of Lemma~\ref{lem:self-improvement} a constant number of times, hence the reduction is still a polynomial-time reduction.
\end{proof}

\section{Hardness of Odd Set}
\label{sec:odd-set-probably}
In this section, we show that \textsc{Odd Set} has no constant-factor FP-approximation unless $\W 1=\fpt$.
This implies, due to a recent result by Arnab et al. \cite{Arnab18,Arnab18b}, that \textsc{Even Set} is $\W 1$-hard under randomized reductions.
We even rule out for \textsc{Odd Set} an FP-approximation with any polylogarithmic ratio, under the same assumption.

The proofs in this section use linear algebra. We will need the following notation.
If $n,m,d,k$ are positive integers and $q$ is a prime power, then $\mathbb{F}_q^d$ denotes the $d$-dimension vector space over $\mathbb{F}_q$.
Each vector $\textbf{v}\in\mathbb{F}_q^d$ can be written as $\textbf{v}=(v_1,v_2,\ldots, v_d)$ with $v_i\in\mathbb{F}_q$ for all $i\in [d]$. We will denote by $\textbf{v}[i]$ the $i$-th coordinate of $\textbf{v}$.
Let $\textbf{1}_d:=(1,1,\ldots, 1)\in \mathbb{F}_q^d$ and $\textbf{0}_d:=(0,0,\ldots, 0)\in \mathbb{F}_q^d$.
Given $\textbf{a}=(a_1,a_2,\ldots, a_n)\in \mathbb{F}_q^n$ and $\textbf{b}=(b_1,b_2,\ldots, b_m)\in \mathbb{F}_q^{m}$, we write $\textbf{a}\circ\textbf{b}:=(a_1,a_2,\ldots, a_n,b_1,b_2,\ldots, b_m)\in\mathbb{F}_q^{n+m}$ for the concatenation of these two vectors.

\subsection{One side gap for the biclique problem}
Our inapproximability result for \textsc{Odd Set} builds on the recent W[1]-hardness and inapproximability results for \textsc{Biclique} by Lin~\cite{DBLP:conf/soda/Lin15}. The decision version of \textsc{Biclique} problem asks for a complete bipartite subgraph with $k$ vertices on each side. We consider the approximation verison of the problem where one side is fixed and the other side has to be maximized. Formally, we define the following gap version of the problem.
\npprob{$\textsc{Gap-Biclique}(s,\ell,h)$}{A bipartite graph $G= (L\dotcup R, E)$ with $n$ vertices and $s,\ell,h\in\mathbb{N}$ with $\ell<h$}{$s$}{Distinguish between the following cases:
\begin{itemize}
\item[(yes)] There exist $s$ vertices in $L$  with $h$ common neighbors.
\item[(no)] Any $s$ vertices in $L$  have at most $\ell$ common neighbors.
\end{itemize}
\vspace{-1em}

}
The following theorem is  the main result of Lin~\cite{DBLP:conf/soda/Lin15}.

\begin{theorem}[{\cite[Theorem~1.3]{DBLP:conf/soda/Lin15}}]\label{thm:gapbiclique}
There is a polynomial time algorithm $\mathbb A$ such that for every graph
$G$ with $n$ vertices and $k\in \mathbb N$ with $\lceil
n^{\frac{6}{k+6}}\rceil> (k+6)!$ and $6\mid k+1$ the algorithm $\mathbb A$
constructs a bipartite graph $H=(A \dotcup B, E)$ satisfying:
\begin{enumerate}
\item if $G$ contains a clique of size $k$, i.e., $K_k\subseteq G$, then
    there are $s$ vertices in $A$ with at least $\lceil
    {n^{\frac{6}{k+1}}}\rceil$ common neighbors in $B$;

\item otherwise $K_k\not\subseteq G$, any $s$ vertices in $A$ have at most
    $(k+1)!$ common neighbors in $B$,
\end{enumerate}
where $s=\binom{k}{2}$.
\end{theorem}

Our goal is to reduce \textsc{Gap-Biclique} to the following gap version of \textsc{Odd Set}.
\npprob{$\textsc{Gap-Odd-Set}(k_1,k_2)$}{A set $\mathcal{W}=\{w_1,w_2,\ldots,w_n\}$  of $n$ vectors from the vector space $\mathbb{F}_2^m$ ($m=n^{O(1)}$) and $k_1,k_2\in\mathbb{N}$ with $k_1\le k_2$}{$k_1$}{Distinguish between the following two cases: \begin{itemize}
\item[(yes)] There exists a set $I\subseteq [n]$ with $|I| \le k_1$ s.t.
$\sum_{i\in I}w_i=\textbf{1}_m$.
\item[(no)] For all sets $I\subseteq [n]$ with $|I|\le k_2$, $\sum_{i\in I}w_i\neq \textbf{1}_m$.
\end{itemize}
\vspace{-1em}

}
We prove that unless $\W 1=\fpt$, $\textsc{Gap-Odd-Set}(k,ck)$  has no \fpt-algorithm
 for any constant $c\ge 1$.
Our reduction takes two steps. First we show that the following problem $\textsc{Gap-Linear-Dependent-Set}_{q}(k_1,k_2)$ is  \W 1-hard for  $k_2= k_1\log k_1$ and $q=n^{O(1)}$. As finding dependent sets is a fairly natural algorithmic problem in linear algebra, the FP-inapproximability of this problem is interesting on its own right.
\npprob{$\textsc{Gap-Linear-Dependent-Set}_{q}(k_1,k_2)$}{A set $\mathcal{W}=\{w_1,w_2,\ldots,w_n\}$  of $n$ vectors from the vector space $\mathbb{F}_q^m$ ($m=n^{O(1)}$) and $k_1,k_2\in\mathbb{N}$ with $k_1\le k_2$}{$k_1$}{Distinguish between the following cases:
\begin{itemize}
\item[(yes)] There exist $k_1$ vectors in
${\mathcal W}$  that are linearly dependent.
\item[(no)] There are no $k_2$ vectors in $\mathcal W$ that are linearly dependent.
\end{itemize}
\vspace{-1em}

}
Then we provide a gap-preserving reduction from $\textsc{Gap-Linear-Dependent-Set}_{q}(k_1,k_2)$ to $\textsc{Gap-Odd-Set}(k,ck)$.

\subsection{From Biclique to Linear Dependent Set}\label{sec:bicliquetomdc}
The following theorem states our first reduction, which transfers the inapproximability of \textsc{Biclique} to \textsc{Linear Dependent Set}.
\begin{theorem}\label{theo:biclique2mdc}
Given an instance $G=(L\;\dot\cup\;R,E)$ of $\textsc{Gap-Biclique}(s,\ell,h)$, there is an algorithm that constructs a set $\mathcal W$ of vectors of $\mathbb{F}_{q}^m$ in time $(|L|+|R|)^{O(1)}$ with $q$ and $m$ being bounded by $(|L|+|R|)^{O(1)}$, such that
\begin{itemize}
\item \textit{(yes)} if $G$  is a yes-instance of $\textsc{Gap-Biclique}(s,\ell,h)$, then $\mathcal W$ contains a set of linearly dependent vectors of size equal to $sh$,
\item \textit{(no)} if $G$  is a no-instance of $\textsc{Gap-Biclique}(s,\ell,h)$, then every set of vectors from $\mathcal W$ of size at most $h\sqrt[s]{\frac{h}{\ell}}-1$ is linearly independent.
\end{itemize}

\end{theorem}
\medskip

Before proving Theorem~\ref{theo:biclique2mdc}, let us show how it can be used to prove an inapproximability result for \textsc{Gap-Linear-Dependent-Set}.
\begin{theorem}
There is no FPT-algorithm for $\textsc{Gap-Linear-Dependent-Set}_{q}(k_1,k_2)$ with $k_2=k_1\log k_1$ and $q=n^{O(1)}$, unless $\W 1=\fpt$.
\end{theorem}
\begin{proof}
We show how \textsc{$k$-Clique} can be solved using an algorithm for $\textsc{Gap-Linear-Dependent}$ $\textsc{-Set}_{n^{(1)}}(k_1,k_2)$.
Let $(G,k)$ be an input instance of \textsc{$k$-Clique} and $n$ be the number of vertices in $G$. Without loss of generality, we can assume that $\lceil n^{\frac{6}{k+6}}\rceil>(k+6)!$ and $6\mid k+1$.
Let $s=\binom{k}{2}$, $\ell=(k+1)!$ and $h=k^{2k^3}$. For sufficiently large $n$, we have $(k+1)!\le\ell<h\le n^{6/(k+1)}$.
 Applying Theorem~\ref{thm:gapbiclique} to $G$ and $k$, we obtain a bipartite graph $H$ in $|G|^{O(1)}$-time  such that 
  $G$ contains a $k$-clique  if and only if $(H,s,\ell,h)$ is a
 yes-instance  of  $\textsc{Gap-Biclique}(s,\ell,h)$.

Applying Theorem~\ref{theo:biclique2mdc} to $(H,s,\ell,h)$, we obtain a set $\mathcal W$ of vectors of $\mathbb{F}_q^m$ for some $q$ and $m$ bounded by $|G|^{O(1)}$. Let $k_1=hs$ and $k_2=k_1\log k_1$.
Note that for $k\ge 4$
\[
\sqrt[s]{\frac{h}{\ell}}
=\sqrt[s]{\frac{k^{2k^3}}{(k+1)!}}
\ge \sqrt[k^2]{\frac{k^{2k^3}}{k^{2k}}}
\ge {k^{2k-2/k}}
\ge k^7\ge 2k^5\log k+2k^2\log k
\ge s\log(hs).
\]
Thus $k_1\log k_1=hs\log(hs)\le h\sqrt[s]{\frac{h}{\ell}}$.
 By the completeness and soundness of Theorem~\ref{theo:biclique2mdc}, we have $(\mathcal W,k_1,k_2)$ is a yes-instance of $\textsc{Gap-Linear-Dependent-Set}_q(k_1,k_2)$ if and only if
$G$ contains a $k$-clique.
\end{proof}
Let us prove now Theorem~\ref{theo:biclique2mdc}, the reduction from \textsc{Gap-Biclique} to \textsc{Gap-Linear-Dependent-Set}.

\begin{proof}[Proof (of Theorem~\ref{theo:biclique2mdc})]
 Assume that an instance $G=(L\;\dot\cup\;R,E)$ of $\textsc{Gap-Biclique}(s,\ell,h)$ is given. We construct a set $\mathcal W$ of vectors and prove the required completeness and soundness.

\textbf{Construction of $\mathcal W$.}
Let $q:=2^{\lceil\log (|L|+|R|)\rceil}\ge|L|+|R|$.  We identify $L$ and $R$ with disjoint subsets of $\mathbb{F}_q$. Let $B:=\max\{h,s\}-1$. We first define a function $\iota : L\cup R\to \mathbb{F}_q^B$ as follows.

\begin{itemize}
\item for each $v\in R$, $\iota(v):=(1,v,\ldots ,v^{h-2}) \circ \textbf{0}_{B-h+1}$,
\item for each $u\in L$, $\iota(u):=(1,u,\ldots ,u^{s-2}) \circ \textbf{0}_{B-s+1}$.
\end{itemize}

Using well-known properties of Vandermonde matrices, we can see that any $h-1$ vectors in $\iota(R)$ are linearly independent and any $h$ vectors from $\iota(R)$ are linearly dependent.
To summarize, $R$ and $\iota$ satisfy the following conditions:

\begin{itemize}
\item[(R1)] for all $I\in\binom{R}{h}$,  the vectors $\{\iota(v) : v\in I\}$ are linearly dependent.
\item[(R2)] for all  $I\in\binom{R}{h-1}$, the vectors $\{\iota(v) : v\in I\}$ are linearly independent.
\end{itemize}

Similarly, we also have that

\begin{itemize}
\item[(L1)] for all $I\in\binom{L}{s}$,  the vectors $\{\iota(u) : u\in I\}$ are linearly dependent.
\item[(L2)] for all  $I\in\binom{L}{s-1}$, the vectors $\{\iota(u) : u\in I\}$ are linearly independent.
\end{itemize}

Then we let $m=qB$ and  consider  vectors from $\mathbb{F}_q^m=\mathbb{F}_q^{qB}$, which can be seen as the concatenation of $q$ blocks, each of $B$ coordinates.
For $x\in \mathbb{F}_q^m$, we use the notation $x^{(i)}$ to refer  to the $i$-block, which is the $B$-dimensional vector given by coordinates $\{(i-1)B+1,(i-1)B+2,\ldots,(i-1)B+B)\}$.  For each $u\in L$ and $v\in R$ with $\{u,v\}=e\in E$, we introduce a vector $w_{e}\in \mathbb{F}_q^{qB}$ such that

\begin{itemize}
\item[(W1)] for all $i\in [q]\setminus\{v,u\}$, $w_e^{(i)}=\textbf{0}_B$,
\item[(W2)] $w_e^{(v)}=\iota(u)$,
\item[(W3)] $w_e^{(u)}=\iota(v)$.
\end{itemize}

That is, we can imagine $w_e$ as being partitioned $q$ blocks of $B$ coordinates, with the representation of $u$ appearing in the $v$-th block and the representation of $v$ appearing in the $u$-th block. Note the use of $u$ and $v$ in the definition: the $v$-th block on its own describes both $v$ (by its position) and $u$ (by its content), and similarly the $u$-th block also describes both endpoints of $e$. Finally, let
\[
\mathcal W:=\{w_e : e\in E\}.
\]
Obviously, $\mathcal W$ can be computed in time $(|L|+|R|)^{O(1)}$.

\textbf{(yes) case.} Suppose $G$ is a yes-instance of $\textsc{Gap-Biclique}(s,\ell,h)$. There exist a set $X\in\binom{L}{s}$ and a set $Y\in\binom{R}{h}$ such that for all
$u\in X$ and $v\in Y$, $\{u,v\}\in E$.
Suppose $X=\{u_1,u_2,\ldots,u_s\}$ and $Y=\{v_1,v_2,\ldots,v_h\}$. By (R1) and (L1), there exists $a_i\in \mathbb{F}_q$ for each $i\in [s]$ and $b_j\in \mathbb{F}_q$ for each $j\in [h]$ such that
\[
\sum_{i\in [s]}a_i\iota(u_i)=\textbf{0}_B
\text{ and }
\sum_{j\in [h]}b_j\iota(v_j)=\textbf{0}_B.
\]
By (R2) and (L2), we deduce that for every $i\in[s]$ and $j\in [h]$, $a_i\neq 0$ and $b_j\neq 0$.
We prove that ${\mathcal W}$ contains a set of $sh$ dependent vectors by showing
\[
\sum_{i\in[s],j\in[h]}a_ib_jw_{\{u_i,v_j\}}=\textbf{0}_m.
\]
Let $w=\sum_{i\in[s],j\in[h]}a_i b_j w_{\{u_i, v_j\}}$. It is easy to check that

\begin{itemize}
\item by (W1), for every $z\in [q] \setminus (X\cup Y)$, $w^{(z)}=\textbf{0}_B$,
\item by (W2), for every $v_j\in Y$, $w^{({v_j})}=\sum_{i\in [s]}a_ib_j\iota(u_i)=b_j\sum_{i\in [s]}a_i\iota(u_i)=\textbf{0}_B$,
\item by (W3), for every $u_i\in X$, $w^{(u_i)}=\sum_{j\in [h]}a_ib_j\iota(v_j)=a_i\sum_{j\in [h]}b_j\iota(v_j)=\textbf{0}_B$.
\end{itemize}
Note that $a_ib_j\neq 0$ for all $i\in[s]$ and $j\in[h]$.
It follows that $\{w_{\{u,v\}}:u\in X\text{ and } v\in Y\}$ is a set of $sh$ linearly dependent vectors from $\mathcal W$.

\textbf{(no) case.} Suppose $G$ is a no-instance of
 $\textsc{Gap-Biclique}(s,\ell,h)$. Let $W\subseteq\mathcal W$ be a set of  vectors that are linearly dependent. We define two vertex sets and their edge set as follows. Let
 \[
 X:=\{u\in L : \text{there exists $v\in R$ such that $w_{\{u,v\}}\in W$}\},
 \]
 \[
 Y:=\{v\in R : \text{there exists $u\in L$ such that $w_{\{u,v\}}\in W$}\},
 \]
 and
 \[
E_W:=\{ \{ u,v\}:  \text{$w_{\{u,v\}}\in W$ }\}.
 \]
First, we note that $X$ and $Y$ are not empty because $W$ is non-empty.
By (R2) and (W3), for every $u\in X$, there exist at least $h$ vertices in $Y$ that are adjacent to $u$, i.e. $|N(u)\cap Y|\ge h$. Similarly, by (L2) and (W2), for every $v\in Y$, we have $|N(v)\cap X|\ge s$.

\begin{claim}\label{cl:bi}
For $s,\ell,h\in\mathbb{N}^+$ and two non-empty sets $X$ and $Y$, let $(X\cup Y,E_W)$ be a bipartite graph such that:
\begin{itemize}
\item (i) every vertex in $X$ has at least $h$ neighbors,
\item (ii) every vertex in $Y$ has at least $s$ neighbors,
\item (iii) every $s$-vertex set of $X$ has at most $\ell$ common neighbors.
\end{itemize}
Then, $|E_W|\ge(\frac{h}{\ell})^{1/s}h$.
\end{claim}
\begin{proof}[Proof of Claim~\ref{cl:bi}]
Because $X$ is not empty, there exists a vertex $u\in X$. By (i), $u$ has at least $h$ neighbors in $Y$, so $|Y|\ge h$.
By (ii), for every $v\in Y$, $v$ has at least $s$ neighbors in $X$. If $\binom{|X|}{s}\ell<|Y|$, then there must exist a $s$-vertex set in $X$ which has more than $\ell$ common neighbors in $Y$. Thus we must have that
\[
|X|^s\ge \binom{|X|}{s}\ge \frac{|Y|}{\ell}\ge \frac{h}{\ell}.
\]
By (i) again, we conclude that $|E_W|\ge h|X|\ge(\frac{h}{\ell})^{1/s}h$, what we had to show.
\renewcommand{\qedsymbol}{$\lrcorner$}\end{proof}
As $E_W$ satisfies the conditions of Claim~\ref{cl:bi}, we have $|W|=|E_W|\ge h\sqrt[s]{\frac{h}{\ell}}$.
\end{proof}

\begin{remark}\label{rem:yescase}
Our construction produces instances $\mathcal W$ of \textsc{Gap-Linear-Dependent}-$\textsc{Set}_{n^{O(1)}}(k,k\log k)$ such that in the (yes) case, there exist exactly $k$ vectors $\textbf{v}_1,\ldots,\textbf{v}_k$ in $\mathcal W$ and $k$ nonzero elements $c_1,\ldots,c_k$ in $\mathbb{F}_{q}$ such that $\sum_{i\in[k]}c_i\textbf{v}_i=\textbf{0}_m$.
\end{remark}
\subsection{From  Linear Dependent Set to Odd Set}
It will be convenient to work with a colored version of \textsc{Gap-Linear-Dependent-Set}.

\npprob{$\textsc{Gap-Linear-Dependent-Set}_{q}^{col}(k_1,k_2)$}{A set $\mathcal{W}=\{w_1,w_2,\ldots,w_n\}$  of $n$ vectors from the vector space $\mathbb{F}_q^m$ ($m=n^{O(1)}$),  $k_1,k_2\in\mathbb{N}$ with $k_1\le k_2$ and a coloring $c :\mathcal W\to [k_1]$}{$k_1$}{
Distinguish between the following cases:
\begin{itemize}
\item[(yes)] There exist at most $k_1$ vectors in
${\mathcal W}$  with distinct colors under $c$ that are linearly dependent.
\item[(no)] There are no $k_2$ vectors in $\mathcal W$ that are linearly dependent.
\end{itemize}
\vspace{-1em}
}

With a standard application of the color-coding technique (see, e.g., \cite{DBLP:books/sp/CyganFKLMPPS15}), we can reduce an instance of $\textsc{Gap-Linear-Dependent-Set}_{q}(k_1,k_2)$ to $2^{O(k_1)}\log n$ instances of $\textsc{Gap-Linear-Dependent-Set}_{q}^{col}(k_1,k_2)$, showing the hardness of the latter problem.
\begin{theorem}\label{thm:gaplds_col}
There is no FPT-algorithm for $\textsc{Gap-Linear-Dependent-Set}_{q}^{col}(k_1,k_2)$ with $k_2=k_1\log k_1$ and $q=n^{O(1)}$, unless $\W 1=\fpt$.
\end{theorem}

We present the following reduction as a warm up. Together with Theorem~\ref{thm:gaplds_col}, it shows that there is no better than 3-approximation for \textsc{Odd Set}, unless $\W 1=\fpt$.
\begin{theorem}\label{th:oddsetred1}
For  $d,k\in\mathbb{N}^+$ with $k\ge 8$, given an instance  $\mathcal W$ of \textsc{Gap-Linear-Dependent}-$\textsc{Set}_{2^d}^{col}(k,k\log k)$ with $\mathcal W\subseteq \mathbb{F}_{2^d}^m$ one can construct an instance $\mathcal W'$ of \textsc{Gap-Odd-Set}$(k+1,3k+1)$ with $\mathcal W'\subseteq \mathbb{F}_2^{1+md+k}$ in $(2^d|{\mathcal W}|)^{O(1)}$ time such that if $\mathcal W$ is a yes-instance (resp. no-instance) then $\mathcal W'$ is a yes-instance (resp. no-instance).
\end{theorem}

\begin{proof}
Suppose $\mathcal W$ is an instance of \textsc{Gap-Linear-Dependent}-$\textsc{Set}_{2^d}^{col}(k,k\log k)$ with the coloring $c$. For the definition of the set $\mathcal W'$, we need to introduce some notations.
Let $\eta_i$ be the $k$-dimensional vector with $1$ at the $i$-th position and $0$ everywhere else.
We can view the finite field $\mathbb{F}_{2^d}$ as a $d$-dimensional vector space of $\mathbb{F}_2$, hence there exist $d$ elements $e_1,e_2,\ldots,e_d$ in $\mathbb{F}_{2^d}$ such that every $v\in \mathbb{F}_{2^d}$ can be expressed, in a unique way, as the sum of a subset of the $e_i$'s.
Fix $d$ such elements $e_1,e_2,\ldots,e_d$ and let $f : \mathbb{F}_{2^d}\to \mathbb{F}_2^{d}$ be a function that for each $v=\sum_{i\in [d]} c_ie_i\in \mathbb{F}_{2^d}$, $f(v)=(c_1,c_2,\ldots,c_d)\in \mathbb{F}_2^d$. Similarly,  we can define a function $F : \mathbb{F}_{2^d}^m\to \mathbb{F}_2^{md}$ with $F(v)=f(v[1])\circ \dots \circ  f(v[m])$.
Observe that for any $x\in \mathbb{F}_{2^d}^m$, we have $F(x)=\textbf{0}_{md}$ if and only if $x=\textbf{0}_m$, and $F(x+y)=F(x)+F(y)$ holds for any $x,y\in  \mathbb{F}_{2^d}^m$.

The set $\mathcal W'$ is defined as follows.
\[
\mathcal W':=\{\textbf{0}_1\circ F(aw)\circ \eta_{c(w)} :w\in\mathcal W, a\in\mathbb{F}_{2^d}^+\}\cup\{\textbf{1}_1\circ\textbf{1}_{md}\circ\textbf{0}_k\}.
\]

\begin{itemize}
\item (yes)
Suppose $\mathcal W$ is a yes-instance of \textsc{Gap-Linear-Dependent}-$\textsc{Set}_{2^d}^{col}(k,k\log k)$. Without loss of generality,  we can assume that there exist $w_1,w_2,\ldots,w_k\in\mathcal W$ and $a_1,a_2,\ldots,a_k\in\mathbb{F}_{2^d}^+$ such that for all $i\in[k]$ $c(w_i)=i$ and
\[
\sum_{i\in[k]}a_iw_i=\textbf{0}_m.
\]
It is easy to verify that
\[
\textbf{1}_1\circ\textbf{1}_{md}\circ\textbf{0}_k+\sum_{i\in[k]}(\textbf{0}_1\circ F(a_iw_i)\circ \eta_{i})=\textbf{1}_{1+md+k}.
\]

\item (no)
Suppose $\mathcal W$ is a no-instance and $W'\subseteq\mathcal W'$ is a set of vectors whose sum $w'$ is equal to $\textbf{1}_{1+md+k}$. First, $W'$ must contain the vector $\textbf{1}_1\circ\textbf{1}_{md}\circ\textbf{0}_k$, otherwise $w'$ has $0$ in its first coordinate. For each $i\in [k]$, let
\[
Y_i:=\{\textbf{0}_1\circ F(aw)\circ\eta_{i}\in W' : \text{where $w\in\mathcal W$, $a \in\mathbb{F}_{2^d}^+$ and $c(w)=i$}\}.
\]
From the definition of $Y_i$, we deduce that for each $i\in [k]$, $|Y_i|$ must be odd, otherwise the $(1+md+i)$-th element of  $w'$ is not equal to $1$. 

Let
\[
X_i:=\{(w,a) : \text{$w\in\mathcal W$, $a \in\mathbb{F}_{2^d}^+$, $c(w)=i$ and $\textbf{0}_1\circ F(aw)\circ\eta_{i}\in W'$}\}.
\]
It is not hard to verify that 
\[
Y_i=\{\textbf{0}_1\circ F(aw)\circ\eta_{i}: \text{$(w,a)\in X_i$}\}.
\]
Since $\mathcal W$ is a no-instance and $k>2$, for any $a_1,a_2\in F_{2^d}^+$ and two distinct $w_1,w_2$ in $\mathcal W$, $F(a_1w_1)\neq F(a_2w_2)$. We deduce that 
\[
|Y_i|=|X_i|.
\]
Note that $W'=\bigcup_{i\in[k]}Y_i\cup\{\textbf{1}_1\circ\textbf{1}_{md}\circ\textbf{0}_k\}$.
If for all $i\in [k]$, $|X_i|\ge 3$, then $|W'|\ge 3k+1$ and we are done. Otherwise suppose for some $i\in [k]$, $X_i=\{(x^*,a^*)\}$. For each $w\in \mathcal W$, let $a_w:=\sum_{a\in\mathbb{F}_{2^d}^+,(w,a)\in X_{c(w)}}a$. Then we define
\[
X:=\{w : w\in\mathcal W, a_w\neq 0\}.
\]
Note that $x^*\in X$ so $X$ is not empty. Again, from the fact that two distinct $w_1,w_2$ in $\mathcal W$ are mapped to distinct $F(a_1w_1),F(a_2w_2)$ for $a_1,a_2\in\mathbb{F}_{2^d}^+$, we deduce that $|X|\le \sum_{i\in[k]}|Y_i|\le |W'|-1$.
Let $y$ be the sum of vectors in $W'\setminus\{\textbf{1}_1\circ\textbf{1}_{md}\circ\textbf{0}_k\}$.
Thus $y=w'-\textbf{1}_1\circ\textbf{1}_{md}\circ\textbf{0}_k=\textbf{1}_{1+md+k}-\textbf{1}_1\circ\textbf{1}_{md}\circ\textbf{0}_k=\textbf{0}_1\circ
\textbf{0}_{md}\circ\textbf{1}_k$. This means that $\sum_{x\in X}F(a_x\cdot x)=\textbf{0}_{md}$, which is only possible if $\sum_{x\in X}a_x\cdot x=\textbf{0}_m$.  Since $X$ is a non-empty set of linearly dependent vectors in $\mathcal W$, we must have $k\log k\le |X|\le |W'|-1$. Therefore $|W'|\ge 1+k\log k\ge 1+3k$.
\end{itemize}
\end{proof}

The self-improvement property of \textsc{Odd Set} (Corollary~\ref{cor:self-improvement}(1)) shows that a $c$-approximation for \textsc{Odd Set} implies the existence of a polynomial-time $c'$-approximation for every $c'>0$. This  means that the reduction in Theorems~\ref{th:oddsetred1} and \ref{thm:gaplds_col} already show that there is no constant-factor FP-approximation for \textsc{Odd Set}, unless $\W 1=\fpt$. However, with a slight change in the reduction, we can improve the inapproximability to $\log k$ and then Corollary~\ref{cor:self-improvement}(2) can improve this further to polylogarithmic inapproximability, proving Theorem~\ref{th:main-odd-set}.

\begin{theorem}
Given an instance $\mathcal W$ of $\textsc{Gap-Linear-Dependent-Set}_{2^d}^{col}(k,k\log k)$ one can construct $(d+1)^k$ instances of $\textsc{Gap-Odd-Set}(k,k\log k)$ in time $d^k n^{O(1)}$ such that $\mathcal W$ is a yes-instance if and only if at least one of the instances $\textsc{Gap-Odd-Set}(k+1,k\log k+1)$ is a yes-instance.
\end{theorem}
\begin{proof}
We use the same definitions for $\eta$, $f$, and $F$ as in the proof of Theorem~\ref{th:oddsetred1}.
For each $i\in [d]$, we define a subset $C_i$ of $\mathbb{F}_{2^d}^+$ by setting
\[
C_i:=\{c\in\mathbb{F}_{2^d}^+: \text{$f(c)[i]=1$ and $f(c)$ contains an even number of ones}\}.
\]
In addition, we put $C_{d+1}:=\{c\in\mathbb{F}_{2^d}^+: \text{$f(c)$ contains an odd number of ones}\}$. The sets $C_i$'s have the following properties that are crucial to our reduction.

\begin{itemize}
\item (i) $\bigcup_{i\in [d+1]}C_i=\mathbb{F}_{2^d}^+$.
\item (ii) for all $i\in [d+1]$ the sum of any odd numbers of elements in $C_i$ is not equal to $0$.
\end{itemize}

The main idea is the following. If there are $k$ vectors in $\mathcal{W}$ that can generate the zero vector with coefficients $a_1$, $\dots$, $a_k$, then we ``guess'' for each $a_i$ one of the sets $C_1$, $\dots$, $C_{d+1}$ that contains it. We create one instance for each of these guesses. The advantage of this approach is that we can more efficiently argue about soundness than in the proof of Theorem~\ref{th:oddsetred1}.

Formally, for every $g\in [d+1]^k$, we construct an instance $\mathcal W_g$ of $\textsc{Gap-Odd-Set}(k+1,k\log k+1)$ by setting
\[
\mathcal W_g:=\{\textbf{0}_1\circ F(aw)\circ\eta_{c(w)} :w\in\mathcal W,a\in C_{g[c(w)]}\}\cup\{\textbf{1}_1\circ\textbf{1}_{md}\circ\textbf{0}_k\}.
\]
\begin{itemize}
\item (yes)
Suppose $\mathcal W$ is a yes-instance of \textsc{Gap-Linear-Dependent}-$\textsc{Set}_{2^d}^{col}(k,k\log k)$. Without loss of generality, assume that there  exist $w_1,w_2,\ldots,w_k\in\mathcal W$ and $a_1,a_2,\ldots,a_k\in\mathbb{F}_{2^d}^+$ such that for all $i\in[k]$ $c(w_i)=i$ and
\[
\sum_{i\in[k]}a_iw_i=0.
\]
It follows that
\[
\textbf{1}_1\circ\textbf{1}_{md}\circ\textbf{0}_k+\sum_{i\in[k]}(\textbf{0}_1\circ F(a_iw_i)\circ \eta_{i})=\textbf{1}_{1+md+k}.
\]
Next we define $g^*\in [d+1]^k$ by setting  $g^*[i]:=\min\{j\in [d+1]: a_i\in C_j\}$ for all $i\in [k]$. Because $\bigcup_{j\in [d+1]}C_j=\mathbb{F}_{2^d}^+$, the vector $g^*$ is well defined. It is easy to check that for all $i\in [k]$,  $a_i\in C_{g^*[i]}$, hence
$\textbf{0}_1\circ F(a_iw_i)\circ \eta_{i}\in\mathcal W_{g^*}$. This means that $\mathcal W_{g^*}$ is a yes-instance.

\item (no)
Suppose $\mathcal W$ is a no instance and  $W'\subseteq\mathcal W_g$ for some $g\in [d+1]^k$ is a set of vectors whose sum $w'$ is equal to $\textbf{1}_{1+md+k}$. First, $W'$ must contain the vector $\textbf{1}_1\circ\textbf{1}_{md}\circ\textbf{0}_k$, otherwise $w'$ has $0$ in its first coordinate. For each $i\in [k]$, let
\[
X_i:=\{(w,a) : \text{$w\in\mathcal W$ and $a\in\mathbb{F}_{2^d}^+$ such that $c(w)=i$ and $\textbf{0}_1\circ F(aw)\circ\eta_{i}\in W'$}\}.
\]
It follows that for each $i\in [k]$, $|X_i|$ must be odd, otherwise the $(1+md+i)$-th element of  $w'$ is not equal to $1$.
For each $w\in \mathcal W$, let $a_w:=\sum_{a\in\mathbb{F}_{2^d}^+,(w,a)\in X_{c(w)}}a$. Then we define
\[
X:=\{w : w\in\mathcal W, a_w\neq 0\}.
\]
Because $|X_i|$ is odd for all $i\in [k]$, there exists an element $w_i\in \mathcal W$ such that $|\{a\in\mathbb{F}_{2^d}^+:(w_i,a)\in X_i\}|$ is odd. Note that for all $(w,a)\in X_i$, $a\in C_{g(i)}$.
By (ii), the sum of odd number of elements from  $C_{g(i)}$ is non-zero.
Thus $a_{w_i}=\sum_{a\in\mathbb{F}_{2^d}^+,(w_i,a)\in X_i}a\neq 0$, which implies that $X$ is not empty.

Let $y$ be the sum of vectors in $W'\setminus\{\textbf{1}_1\circ\textbf{1}_{md}\circ\textbf{0}_k\}$.
Thus $y=w'-\textbf{1}_1\circ\textbf{1}_{md}\circ\textbf{0}_k=\textbf{1}_{1+md+k}-\textbf{1}_1\circ\textbf{1}_{md}\circ\textbf{0}_k=\textbf{0}_1\circ
\textbf{0}_{md}\circ\textbf{1}_k$. This means that $\sum_{x\in X}F(a_x\cdot x)=\textbf{0}_{md}$, which is only possible if $\sum_{x\in X}a_x\cdot x=\textbf{0}_m$.  Since $X$ is a non-empty set of linearly dependent vectors in $\mathcal W$, we must have $k\log k\le |X|\le |W'|-1$. Therefore $|W'|\ge 1+k\log k$.
\end{itemize}
\end{proof}

\section{Hardness of Horn ($\textup{IE}_2$), dual-Horn ($\textup{IV}_2$) and $\text{IN}_2$}\label{sec:horn-dual-horn}
In this section, we establish statement 3 of Theorem~\ref{trichotomy} by proving the inapproximability of $\dcsp(\Gamma)$ if $\Gamma$ generates
one of the co-clones $\text{IE}_2$, $\text{IV}_2$, or
$\text{IN}_2$. The inapproximability proof uses previous results on
the inapproximability of circuit satisfiability problems.

A \emph{monotone Boolean circuit} is a directed acyclic graph, where each node with in-degree at least $2$ is labeled as either an AND node or as an OR node, each node of in-degree 0 is an input node, and nodes with in-degree 1 are not allowed. Furthermore, there is a node with out-degree $0$ that is the output node. Let $C$ be a monotone Boolean circuit. Given an assignment $\varphi$ from the nodes of $C$ to $\{0,1\}$, we say that $\varphi$ satisfies $C$ if
\begin{itemize}
\item for any OR node $G_\vee$ with in-neighbors $G_1,\dots,G_n$, $\varphi(G_{\vee}) = 1$ if and only if $1 \in \{\varphi(G_1),\dots,\varphi(G_n)\}$,
\item for any AND node $G_\wedge$ with in-neighbors $G_1,\dots,G_n$, $\varphi(G_{\vee}) = 1$ if and only if $\varphi(G_1) = \dots = \varphi(G_n) = 1$, and
\item $\varphi(G_o) = 1$, where $G_o$ is the output node.
\end{itemize}
The \emph{weight} of an assignment is the number of input nodes with value $1$. Circuit $C$ is $k$-satisfiable if there is a weight-$k$ assignment satisfying $C$. The problem \textsc{Monotone Circuit Satisfiability (MCS)} takes as input a monotone circuit $C$ and an integer $k$, and the task is to decide if there is a satisfying assignment of weight at most $k$. The following theorem is a restatement of a result of Marx \cite{antimonotone}. We use this to show that Horn-CSPs are hard.
\begin{theorem}[\cite{antimonotone}]\label{monotone_bound}
\textsc{Monotone Circuit Satisfiability} does not have an \fpa\ algorithm, unless $\fpt = \WP$.
\end{theorem}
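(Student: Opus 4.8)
The plan is to give a direct reduction from a \WP-complete bounded-nondeterminism problem to \textsc{Monotone Circuit Satisfiability} that builds in an infinite optimization gap; the statement is really a reformulation of a theorem of Marx~\cite{antimonotone}, and the route below retraces that argument.

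First I would invoke the machine characterization of \WP: a parameterized problem lies in \WP\ exactly when it is decided by a nondeterministic algorithm running in \fpt\ time and using at most $g(k)\cdot\log n$ nondeterministic bits, the canonical complete problem being --- given a nondeterministic machine $M$, a unary time bound $1^t$, and $k$ --- does $M$ accept within $t$ steps using at most $k\log n$ nondeterministic bits? I would first replace $M$ by a machine $M'$ that simulates $M$ while counting its nondeterministic moves in ``macro-blocks'' of $\log n$ bits each and enters an infinite loop the moment it is about to begin its $(k+1)$-st macro-block; then $M'$ has an accepting computation within $O(t)$ steps if and only if $M$ does within $t$ steps using $\le k\log n$ nondeterministic bits, and \emph{every} accepting computation of $M'$ uses at most $k$ macro-blocks.

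Next I would encode the computation of $M'$ as a monotone circuit $C$ whose input gates form $k$ blocks of $n$ gates, block $j$ supplying in unary the value of the $j$-th macro-block (a number in $[n]$), with $C$ verifying that these choices drive $M'$ to an accepting configuration within the time bound. If this is done correctly, every satisfying assignment of $C$ has weight $\ge k$ (at least one chosen index per disjoint block is needed), weight exactly $k$ is attainable precisely when $M'$ accepts, and $C$ is unsatisfiable otherwise. Hence the minimum satisfying weight is $k$ on ``yes'' instances and $+\infty$ on ``no'' instances, with nothing in between, so running a hypothetical \fpa\ algorithm on $(C,k)$ would decide the instance --- it must answer ``no solution of weight $\le k$'' on a ``no'' instance (there is none of any weight) and must return an actual satisfying assignment on a ``yes'' instance --- contradicting $\fpt\neq\WP$.

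The hard part is this encoding. The naive circuit that checks acceptance separately for each of the $n^k$ choice tuples has size $n^{\Theta(k)}$, far too large for an \fpt\ reduction; one must instead run a Cook--Levin-style step-by-step simulation of $M'$ of size $\mathrm{poly}(n,t)$ into which the macro-block choices are wired from the $k$ input blocks, and the genuine difficulty is carrying this out with a \emph{monotone} circuit: monotonicity forbids saying ``exactly one symbol per tape cell'' and, more fundamentally, forbids rejecting because an inconsistent or over-determined ``guess'' was supplied, so the tableau encoding must be designed so that bloated or inconsistent assignments simply cannot force the output to $1$ while a genuine accepting run of weight $k$ still can. Should a clean ``weight $k$ or $+\infty$'' reduction prove unattainable at small size --- leaving ``no'' instances satisfiable at some weight $k'>k$ --- the fallback is a gap-amplification step: compose the circuit with itself so that the optimum on ``no'' instances grows super-linearly in the parameter while the optimum on ``yes'' instances stays bounded, which defeats any fixed ratio $f_2(k)$ and still yields the conclusion. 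Either way, since the source problem is \WP-complete and the gap is infinite (or beats every fixed function), this rules out \emph{every} \fpa\ algorithm for \textsc{Monotone Circuit Satisfiability} under $\fpt\neq\WP$.
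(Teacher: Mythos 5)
The paper does not actually prove Theorem~\ref{monotone_bound}: it is imported as a black-box result of Marx~\cite{antimonotone}, so your proposal has to stand on its own as a proof of that theorem, and it does not. The central step --- a reduction from a \WP-complete problem producing a \emph{monotone} circuit whose minimum satisfying weight is $k$ on ``yes'' instances and $+\infty$ on ``no'' instances --- is impossible. A monotone circuit as defined here (AND/OR gates, no negations, no constant gates) is always satisfied by the all-ones assignment, so unsatisfiable monotone circuits do not exist; and even if constants were allowed, satisfiability of a monotone circuit is testable in polynomial time by evaluating it at all-ones, so a reduction with your claimed infinite gap would, together with that trivial test, decide the \WP-complete source problem in \fpt\ time --- i.e., such a reduction cannot exist unless $\fpt=\WP$ already, which is what you are trying to prove. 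Concretely, in your own construction the assignment setting \emph{every} input gate of every block to $1$ dominates, by monotonicity, any genuine weight-$k$ accepting assignment, and since a monotone tableau cannot enforce ``exactly one choice per block,'' it will typically satisfy the circuit even on ``no'' instances. Hence ``no'' instances have finite (possibly small) optimum, and an \fpa\ algorithm that returns a solution of weight at most $f_2(k)\cdot k$ no longer distinguishes the two cases.

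This is precisely why the theorem is nontrivial: \WP-hardness of the exact decision problem (Weighted Monotone Circuit Satisfiability) is classical, but it says nothing about approximability, because the minimization problem is always feasible and monotone simulations accept over-determined guesses. Marx's proof therefore has to manufacture the gap explicitly, via a dedicated construction guaranteeing that on ``no'' instances \emph{every} satisfying assignment is much heavier than the target weight --- heavy enough to defeat an arbitrary ratio $f_2(k)$, not just a constant. Your fallback sentence (``compose the circuit with itself'') points vaguely in that direction, but it contains no construction, no analysis of how the optimum of a ``no'' instance behaves under composition, and no argument covering arbitrary ratio functions; as written, the proposal does not establish the theorem.
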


The following corollary can be achieved by simply replacing nodes of in-degree larger than 2 with a circuit
consisting of nodes having in-degree at most two in the standard way.
\begin{corollary}\label{monotone_bound_2}
\textsc{Monotone Circuit Satisfiability}, where circuits are restricted to have nodes of in-degree at most $2$, does not have an FPA algorithm, unless $\fpt = \WP$.
\end{corollary}

We use Corollary~\ref{monotone_bound_2} to establish the inapproximability of Horn-SAT and dual-Horn-SAT, assuming that $\fpt \neq \WP$. Using the co-clone lattice, this will show hardness of approximability of $\dcsp(\Gamma)$ if $\la \Gamma \ra \in \{\text{IV}_2, \text{IE}_2\}$.

\begin{lemma}\label{monotone_reduction}
  If \dcsp$(\{x \vee y \vee \bar{z}, x, \bar{x}\})$ or \dcsp$(\{\bar{x} \vee \bar{y} \vee z, x, \bar{x}\})$ has a constant-factor FP-approximation, then $\fpt = \WP$.
\end{lemma}

\begin{proof}
	We prove that there is a cost-preserving polynomial-time reduction from \textsc{Monotone Circuit Satisfiability} to $\dcspx(\{x \vee y \vee \bar{z},x,\bar{x}\})$. This is sufficient by Corollary~\ref{monotone_bound_2}. Let $C$ be an \textsc{MCS} instance. We produce an instance $I$ of \dcspx\ as follows. For each node of $C$, we introduce a new variable into $I$, and we let $f$ denote the natural bijection from the nodes of $C$ to the variables of the instance $I$.
	
	We add undeletable constraints to simulate the computation of each AND node of $C$ as follows. Observe first that the implication relation $x \rightarrow y$ can be expressed as $y \vee y \vee\bar{x}$. For each AND node $G_\wedge$ such that $G_1$ and $G_2$ are the nodes feeding into $G_\wedge$, we add two constraints to $I$ as follows. Let $y = f(G_{\wedge})$, $x_1 = f(G_1)$, and $x_2 = f(G_2)$. We place the constraints $y \rightarrow x_1$ and $y \rightarrow x_2$ into $I$. We observe that the only way variable $y$ could take on value $1$ is if both $x_1$ and $x_2$ are assigned $1$. (In this case, note that $y$ could also be assigned $0$ but that will be easy to fix.)
	
	Similarly, we add constraints to simulate the computation of each OR node of $C$ as follows. For each OR node $G_\vee$ such that $G_1$ and $G_2$ are the nodes feeding into $G_\vee$, we add the constraint $x_1 \vee x_2 \vee \bar{y}$ to $I$, where $y = f(G_{\vee})$, $x_1 = f(G_1)$, and $x_2 = f(G_2)$. Note that if both $x_1$ and $x_2$ are $0$, then $y$ is forced to have value $0$. (Otherwise $y$ can take on either value $0$ or $1$, but again, this difference between an OR function and our gadget will be easy to handle.)
	
	In addition, we add a constraint $x_o = 1$, where $x_o$ is the variable such that $x_o = f(G)$, where $G$ is the output node. All constraints that appeared until now are defined as undeletable (recall that \dcspx\ allows undeletable constraints). To finish the construction, for each variable $x$ such that $x = f(G)$ where $G$ is an input node, we add a constraint $x = 0$ to $I$. We call these constraints \textit{input constraints}. Note that only input constraints can be deleted in $I$.
	
	If there is a satisfying assignment $\varphi_C$ of $C$ of weight $k$, then we delete the input constraints $x = 0$ of $I$ such that $\varphi_C(G) = 1$, where $G$ is a node such that $f(G) = x$. Clearly, the map $\varphi_C \circ f^{-1}$ is a satisfying assignment for $I$, where we needed $k$ deletions.
	
	For the other direction, assume that we have a satisfying assignment $\varphi_I$ for $I$ after removing some $k$ input constraints. We repeatedly apply the following two modifications of $\varphi_I$ as long as possible.
        This process terminates since $C$ is a directed acyclic graph.
        Let $x_1,x_2$, and $y$ be variables such that $f^{-1}(x_1)$ and $f^{-1}(x_2)$ are in-neighbors of node $f^{-1}(y)$.
\begin{enumerate}
\item If $f^{-1}(y)$ is an AND node, $\varphi_I(x_1) = 1, \varphi_I(x_2) = 1$, and $\varphi_I(y) = 0$, then we change $\varphi_I(y)$ to $1$.
\item If $f^{-1}(y)$ is an OR node, $1 \in \{\varphi_I(x_1), \varphi_I(x_2)\}$, and $\varphi_I(y) = 0$, then we change $\varphi_I(y)$ to $1$.
\end{enumerate}
It follows from the definition of the constraints we introduced for AND and OR nodes that once we finished modifying $\varphi_I$, the resulting assignment $\varphi_I'$ is still a satisfying assignment. Now it follows that $\varphi_I' \circ f$ is a weight $k$ satisfying assignment for $C$.

To show the inapproximability of \dcsp(\{$\bar{x} \vee \bar{y} \vee z, x, \bar{x}\}$), we note that there is a cost-preserving bijection between instances of \dcsp(\{$\bar{x} \vee \bar{y} \vee z, x, \bar{x}\}$) and $\dcsp(\{x \vee y \vee \bar{z}, x, \bar{x}\})$: given an instance $I$ of either problem, we obtain an equivalent instance of the other problem by replacing every literal $\ell$ with $\bar \ell$. Satisfying assignments are converted by replacing $0$-s with $1$-s and vice versa.
\end{proof}

As $\{x \vee y \vee \bar{z}, x, \bar{x}\}$ (resp.,~$\{\bar{x} \vee \bar{y} \vee z, x, \bar{x}\}$) is an irredundant base of $\mathrm{IV_2}$ (resp., $\mathrm{IE_2}$), Corollary~\ref{cor:irredbase} implies hardness if $\la \Gamma \ra$ contains $\mathrm{IV_2}$ or $\mathrm{IE_2}$.
\begin{corollary}\label{cor:IE2IV2}
  If $\Gamma$ is a (finite) constraint language with
  $\mathrm{IV_2}\subseteq \la \Gamma \ra$ or $\mathrm{IE_2}\subseteq \la \Gamma \ra$,
  then $\dcsp(\Gamma)$ is not FP-approximable, unless $\fpt = \WP$.
\end{corollary}

Finally, we consider the co-clone $\text{IN}_2$.
\begin{lemma}\label{IN2}
	If $\Gamma$ is a (finite) constraint language with $\mathrm{IN_2}\subseteq \la \Gamma \ra$ then $\dcsp(\Gamma)$ is not FP-approximable, unless $\mathrm{P = NP}$.
\end{lemma}
\begin{proof}
	Let $\nae = \{0,1\}^3 \setminus \{(0,0,0),(1,1,1)\}$. From Table~\ref{cc_table}, we see that $\nae$ is a base for the co-clone $\text{IN}_2$ of all self-dual languages. If there was a constant-factor \fpa\ algorithm for \dcsp(\nae), then setting the parameter to $0$ would give a polynomial time decision algorithm for \csp(\nae). But \csp(\nae) is NP-complete \cite{MR80d:68058}, so there is no constant-factor \fpa\ algorithm for \dcsp(\nae) unless $\mathrm{NP = P}$.
\end{proof}

\end{document}